\newcommand\R{{\ensuremath {\mathbb R} }}
\newcommand\C{{\ensuremath {\mathbb C} }}
\newcommand\N{{\ensuremath {\mathbb N} }}
\newcommand\1{{\ensuremath {\mathds 1} }}
\renewcommand\phi{\varphi}
\newcommand{\alp}{\boldsymbol{\alpha}}
\newcommand{\alphaph}{\alpha_{\rm ph}}
\newcommand{\rhoph}{\rho_{\rm ph}}
\newcommand{\gH}{\mathfrak{H}}
\newcommand{\gS}{\mathfrak{S}}
\renewcommand{\to}{\rightarrow}
\newcommand{\cN}{\mathcal{N}}
\newcommand{\cU}{\mathcal{U}}
\newcommand{\cB}{\mathcal{B}}
\newcommand{\cF}{\mathcal F}
\newcommand{\tr}{{\rm tr}\,}
\newcommand{\cC}{\mathcal{C}}
\newcommand{\cQ}{\mathcal{Q}}
\newcommand\ii{{\ensuremath {\infty}}}
\newcommand{\norm}[1]{ \left| \! \left| #1 \right| \! \right| }
\newtheorem{thm}{Theorem}
\newtheorem{lemma}{Lemma}
\newtheorem{corollary}{Corollary}
\newtheorem{prop}{Proposition}
\newtheorem{remark}{Remark}
\numberwithin{equation}{section}
 \numberwithin{lemma}{section}
\numberwithin{prop}{section}
\numberwithin{corollary}{section}
\renewcommand{\tr}{{\rm Tr} }
\long\def\symbolfootnote[#1]#2{\begingroup%
\def\thefootnote{\fnsymbol{footnote}}\footnote[#1]{#2}\endgroup}
\renewcommand{\leq}{\leqslant}
\renewcommand{\geq}{\geqslant}
\newtheorem{step}{Step}
\begin{document}

\title[Charge Renormalization]{Renormalization and asymptotic expansion of Dirac's polarized vacuum}

\author[P. Gravejat]{Philippe GRAVEJAT}
\address{Ceremade (UMR 7534), Universit{\'e} Paris-Dauphine, Place du Mar{\'e}chal de Lattre de Tassigny, F-75775 Paris Cedex 16, France.}
\email{gravejat@ceremade.dauphine.fr}

\author[M. Lewin]{Mathieu LEWIN}
\address{CNRS \& Laboratoire de Mathématiques (UMR 8088), Université de Cergy-Pontoise, F-95000 Cergy-Pontoise, France.}
\email{Mathieu.Lewin@math.cnrs.fr}

\author[\'E. S\'er\'e]{\'Eric S\'ER\'E}
\address{Ceremade (UMR 7534), Universit{\'e} Paris-Dauphine, Place du Mar{\'e}chal de Lattre de Tassigny, F-75775 Paris Cedex 16, France.}
\email{sere@ceremade.dauphine.fr}

\date{April 9, 2010}

\begin{abstract}
We perform rigorously the charge renormalization of the so-called reduced Bogoliubov-Dirac-Fock (rBDF) model. This nonlinear theory, based on the Dirac operator, describes atoms and molecules while taking into account vacuum polarization effects. 
We consider the total physical density $\rhoph$ including both the external density of a nucleus and the self-consistent polarization of the Dirac sea, but no `real' electron.  We show that $\rhoph$ admits an asymptotic expansion to any order in powers of the physical coupling constant $\alphaph$, provided that the ultraviolet cut-off behaves as $\Lambda\sim e^{3\pi(1-Z_3)/2\alphaph}\gg1$. The renormalization parameter $0<Z_3<1$ is defined by $Z_3=\alphaph/\alpha$ where $\alpha$ is the bare coupling constant. The coefficients of the expansion of $\rhoph$ are independent of $Z_3$, as expected. The first order term gives rise to the well-known Uehling potential, whereas the higher order terms satisfy an explicit recursion relation.

\bigskip

\noindent\scriptsize\copyright~2010 by the authors. This paper may be reproduced, in its entirety, for non-commercial~purposes.
\end{abstract}

\keywords{Charge renormalization, Quantum Electrodynamics, vacuum polarization, Dirac sea, Uehling potential, Bogoliubov-Dirac-Fock model, Relativistic Density Functional Theory}

\maketitle

\setcounter{tocdepth}{1}
\tableofcontents

\section{Introduction and main result}

Renormalization is an essential tool in Quantum Electrodynamics (QED) \cite{Dyson-49b,BjorkenDrell-65,ItzyksonZuber}. The purpose of this paper is to perform rigorously the charge renormalization of a nonlinear approximation of QED, the reduced Bogoliubov-Dirac-Fock (rBDF) theory that was studied before in \cite{HaiLewSer-05a,HaiLewSer-05b,HaiLewSol-07,HaiLewSerSol-07,HaiLewSer-08,GraLewSer-09}. This model, based on the Dirac operator, describes atoms and molecules while taking into account vacuum polarization effects. It does not need any mass renormalization, hence it is a theory simple enough for an investigation of charge renormalization in full detail.

Before turning to our specific Dirac model, let us quickly recall the spirit of renormalization. A physical theory usually aims at predicting physical observables in terms of the parameters in the model. Sometimes, interesting quantities are divergent and it is necessary to introduce cut-offs. For electrons the parameters are their mass $m$ and their charge $e$ (or rather the coupling constant $\alpha=e^2$). Predicted physical quantities are then functions $F(m,\alpha,\Lambda)$ where $\Lambda$ is the regularization parameter. Mass and charge are also physical observables and renormalization occurs when their values predicted by the theory are different from their `bare' values:
\begin{equation}
m_{\rm ph}=m_{\rm ph}(m,\alpha,\Lambda)\neq m\quad\text{and/or}\quad \alphaph=\alphaph(m,\alpha,\Lambda)\neq \alpha.
\label{eq:ph_mass_and_charge} 
\end{equation}
In this case the parameters $m$ and $\alpha$ are not observable in contrast with $m_{\rm ph}=m_{\rm ph}(m,\alpha,\Lambda)$ and $\alphaph=\alphaph(m,\alpha,\Lambda)$ which have to be set equal to their experimental values. The relation \eqref{eq:ph_mass_and_charge} has to be inverted, in order to express the bare parameters in terms of the physical ones:
\begin{equation}
m=m(m_{\rm ph},\alphaph,\Lambda)\qquad \alpha=\alpha(m_{\rm ph},\alphaph,\Lambda).
\label{eq:bare_mass_and_charge}
\end{equation}
This allows to express any observable quantity $F$ as a function $\tilde{F}$ of the physical parameters and the cut-off $\Lambda$:
\begin{equation}
\tilde F(m_{\rm ph},\alphaph,\Lambda)=F\big(m(m_{\rm ph},\alphaph,\Lambda)\,,\,\alpha(m_{\rm ph},\alphaph,\Lambda)\,,\,\Lambda\big)
\label{eq:ph_quantities}
\end{equation}
A possible definition of renormalizability is that \emph{all such observable quantities have a limit when $\Lambda\to\ii$, for fixed $m_{\rm ph}$ and $\alphaph$}.

Important difficulties can be encountered when trying to complete this program:
\begin{itemize}
 \item The physical quantities $m_{\rm ph}$ and $\alphaph$ might be \emph{nonexplicit functions of $\alpha$ and $m$}. The corresponding formulas can then only be inverted perturbatively to any order (usually in $\alpha$). This is the case in QED \cite{Dyson-49b,BjorkenDrell-65,ItzyksonZuber}. In the model studied in this paper we have $m_{\rm ph}=m$ and $\alphaph\neq \alpha$, hence only the charge has to be renormalized. Furthermore $\alphaph$ is an \emph{explicit function} of $m$, $\alpha$ and $\Lambda$ (see \eqref{charge_renormalization} later). Renormalizing our model is therefore a much easier task than in full QED.
\item Even when the bare parameters are explicit functions of the physical ones, these relations can make it \emph{impossible to take the limit $\Lambda\to\ii$ while keeping $m_{\rm ph}$ and $\alphaph$ fixed}. As we will explain, in our model $(2/3\pi)\alphaph\log\Lambda\leq1$. To deal with this problem, we let $\Lambda$ depend on $\alphaph$ and we investigate the asymptotics in the limit $\alphaph\to0$. 
\end{itemize}

We now turn to the description of our model. The Bogoliubov-Dirac-Fock theory is the Hartree-Fock approximation of QED when photons are neglected \cite{HaiLewSol-07,HaiLewSerSol-07}. The associated \emph{reduced} theory is obtained by further neglecting the so-called exchange term. In both models, the system is described by a Hartree-Fock (quasi-free) state in Fock space, which is completely characterized by its one-body density matrix $P$ (an orthogonal projector for pure states), acting on the one-body space. The state $P$ contains both the `real' electrons of the system (that of an atom for instance) and the `virtual' electrons of the Dirac sea, which all interact with each other self-consistently. Therefore, there are always infinitely many particles and $P$ is infinite-rank.

When the exchange term is neglected, a ground state at zero temperature is (formally) a solution of the following self-consistent equation:
\begin{equation}
\left\{\begin{array}{l}
P=\chi_{(-\infty,\mu)}\left(D\right)+\delta\\
D=D^0+\alpha(\rho_{P-1/2}-\nu)\ast|x|^{-1}.
\end{array}\right.
\label{eq:SCF_no_cut_off} 
\end{equation}
Here $D^0=\boldsymbol{\alpha}\cdot(-i\nabla)+\beta$ is the free Dirac operator \cite{Thaller} acting on the Hilbert space $\gH:=L^2(\mathbb{R}^2,\mathbb{C}^4)$. For the sake of simplicity we have chosen units in which the speed of light is $c=1$ and, as the model does not need any mass renormalization, we have taken $m=1$ for the mass of the electrons.
The second term in the formula of $D$ is the Coulomb potential induced by both a fixed external density of charge $\nu$ (modelling for instance a smeared nucleus) and the self-consistent density $\rho_{P-1/2}$ of the system (see below). In \eqref{eq:SCF_no_cut_off}, $\alpha$ is the \emph{bare} coupling constant that will be renormalized later and $\mu\in(-1,1)$ is a chemical potential which is chosen to fix the desired total charge of the system. We have added in \eqref{eq:SCF_no_cut_off} the possibility of having a density matrix $0\leq\delta\leq\chi_{\{\mu\}}(D)$ at the Fermi level, as is usually done in reduced Hartree-Fock theory \cite{Solovej-91}. So the operator $P$ is not necessarily a projector but we still use the letter $P$ for convenience. Later we will restrict ourselves to the case of $P$ being an orthogonal projector.

Equation \eqref{eq:SCF_no_cut_off} is well-known in the physical literature. A model of the same form (including an exchange term) was proposed by Chaix and Iracane in \cite{ChaIra-89}. Also, similar equations are found in relativistic Density Functional Theory, usually with additional empirical exchange-correlation terms and classical terms accounting for the interactions with photons, see, e.g., \cite[Eq. (6.2)]{EngDre-87} and \cite[Eq. (62)]{Engel-02}. Dirac already considered in \cite{Dirac-34b} the first order term obtained from \eqref{eq:SCF_no_cut_off} in an expansion in powers of $\alpha$.

Let us now explain the exact meaning of $\rho_{P-1/2}$. The charge density of an operator $A:\gH\to\gH$ with integral kernel $A(x,y)_{\sigma,\sigma'}$ is formally defined as $\rho_A(x)=\sum_{\sigma=1}^4A(x,x)_{\sigma,\sigma}=\tr_{\C^4}(A(x,x))$. In usual Hartree-Fock theory, the charge density is $\rho_P(x)$. However, as there are infinitely many particles, this does not make sense here. In \eqref{eq:SCF_no_cut_off}, the subtraction of half the identity is a convenient way to give a meaning to the density, independently of any reference. One has formally
$$\rho_{P-1/2}(x)=\rho_{\frac{P-P^\perp}2}(x)=\frac12\sum_{i\geq1}|\varphi_i^-(x)|^2-|\varphi_i^+(x)|^2$$
where $\{\varphi_i^-\}_{i\geq1}$ is an orthonormal basis of $P\gH$ and $\{\varphi_i^+\}_{i\geq1}$ is an orthonormal basis of $(1-P)\gH$. As was explained in \cite{HaiLewSol-07}, subtracting $1/2$ to the density matrix $P$ of the Hartree-Fock state makes the model invariant under charge conjugation.

When there is no external field, $\nu\equiv0$, Equation \eqref{eq:SCF_no_cut_off} has an obvious solution for any $\mu\in(-1,1)$, the Hartree-Fock state made of all electrons with negative energy:
$$P=P^0_-:=\chi_{(-\ii,0)}(D^0),$$
in accordance with Dirac's ideas \cite{Dirac-28,Dirac-30,Dirac-33}. Indeed $\rho_{P^0_--1/2}\equiv0$, as is seen by writing in the Fourier representation
$$(P^0_--1/2)(p)=-\frac{\boldsymbol{\alpha}\cdot p+\beta}{2\sqrt{1+|p|^2}}$$
and since the Dirac matrices are trace-less. This shows the usefulness of the subtraction of half the identity to $P$, since the free vacuum $P^0_-$ now has a vanishing density.  For a general state $P$, we can use this to write (formally):
\begin{equation}
\rho_{P-1/2}=\rho_{P-1/2}-\rho_{P^0_--1/2}=\rho_{P-P^0_-}. 
\end{equation}
When $P$ belongs to a suitable class of perturbations of $P^0_-$ (for instance when $P-P^0_-$ is locally trace-class), the density $\rho_{P-P^0_-}$ is a well-defined mathematical object. We will give below natural conditions which garantee that $P-P^0_-$ has a well-defined density in our context.

In the presence of an external field, $\nu\neq0$, Equation \eqref{eq:SCF_no_cut_off} has \emph{no solution} in any `reasonable' Banach space \cite{HaiLewSer-05b} and it is necessary to introduce an ultraviolet regularization parameter $\Lambda$. The simplest method (although probably not optimal regarding regularity issues \cite{GraLewSer-09}) is to impose a cut-off at the level of the Hilbert space, that is to replace $\gH$ by
$$\gH_\Lambda:=\{f\in L^2(\R^3;\C^4),\ {\rm supp}(\widehat{f})\subset B(0;\Lambda)\}$$
and to solve, instead of \eqref{eq:SCF_no_cut_off}, the regularized equation in $\gH_\Lambda$:
\begin{equation}
\left\{\begin{array}{l}
P=\chi_{(-\infty,\mu)}\left(D\right)+\delta\\
D=\Pi_\Lambda\left(D^0+\alpha(\rho_{P-P^0_-}-\nu)\ast|x|^{-1}\right)\Pi_\Lambda
\end{array}\right.
\label{eq:SCF_cut_off} 
\end{equation}
where $\Pi_\Lambda$ is the orthogonal projector onto $\mathfrak{H}_\Lambda$ in $\gH=L^2(\R^3;\C^4)$.

Existence of solutions to \eqref{eq:SCF_cut_off} was proved in \cite{HaiLewSer-05b} for $\mu=0$ and in \cite{GraLewSer-09} for $\mu\neq0$. The precise statement is the following\footnote{To be more precise, in \cite[Theorem 1]{GraLewSer-09}, only the existence and uniqueness of minimizers of the reduced BDF functional are stated. Elementary arguments based on convexity allow to deduce Theorem \ref{Thm:existence} from the results of \cite{GraLewSer-09}.}:
\begin{thm}[Existence of self-consistent solutions to \eqref{eq:SCF_cut_off}, \cite{HaiLewSer-05b,GraLewSer-09}]\label{Thm:existence}
Assume that $\alpha\geq0$, $\Lambda>0$ and $\mu\in[-1,1]$ are given. Let $\nu$ in the so-called \emph{Coulomb space}:
$$\mathcal{C}:=\left\{f\ :\ \int_{\R^3}|k|^{-2}|\widehat{f}(k)|^2dk<\infty\right\}.$$
Then, Equation \eqref{eq:SCF_cut_off} has at least one {solution} $P$ such that 
\begin{equation}
P-P^0_-\in\mathfrak{S}_2(\mathfrak{H}_\Lambda),\quad P^0_\pm(P-P^0_-)P^0_\pm\in\mathfrak{S}_1(\mathfrak{H}_\Lambda),\quad \rho_{P-P^0_-}\in \mathcal{C}\cap L^2(\R^3).
\label{properties}
\end{equation}
All such solutions share the same density $\rho_{P-P^0_-}$.
\end{thm}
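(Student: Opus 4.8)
The plan is to reduce Theorem~\ref{Thm:existence} to the variational results of \cite{HaiLewSer-05b,GraLewSer-09} by a convexity argument, as announced in the footnote. After the substitution $Q=P-P^0_-$ and the (formal) subtraction of the $Q$-independent energy of the free vacuum, Equation~\eqref{eq:SCF_cut_off} is the self-consistent equation associated with the reduced Bogoliubov--Dirac--Fock functional
\begin{equation*}
\cE^\nu_\mu(Q)=\str\big((D^0-\mu)\,Q\big)-\alpha\,\langle\nu,\rho_Q\rangle_\cC+\frac{\alpha}{2}\,\|\rho_Q\|_\cC^2,
\end{equation*}
where $\langle\cdot,\cdot\rangle_\cC$ and $\|\cdot\|_\cC$ are the scalar product and the norm of the Coulomb space $\cC$, considered on the convex set
\begin{equation*}
\cK_\Lambda=\big\{Q\in\gS_2(\gH_\Lambda)\,:\,-P^0_-\leq Q\leq P^0_+,\ P^0_\pm QP^0_\pm\in\gS_1(\gH_\Lambda),\ \rho_Q\in\cC\big\},\qquad P^0_+:=\Pi_\Lambda-P^0_-.
\end{equation*}
The first step is to recall from \cite{HaiLewSer-05b} (for $\mu=0$) and \cite{GraLewSer-09} (for general $\mu\in[-1,1]$) that $\cE^\nu_\mu$ is well-defined, lower semicontinuous and bounded from below on $\cK_\Lambda$, that it attains its infimum, that every minimizer $Q_\star$ enjoys the regularity stated in~\eqref{properties}, and that $P=P^0_-+Q_\star$ then solves~\eqref{eq:SCF_cut_off}. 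This already settles the existence part of the theorem.

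The heart of the matter is the convexity of $\cE^\nu_\mu$ on $\cK_\Lambda$: the maps $Q\mapsto\str((D^0-\mu)Q)$ and $Q\mapsto\langle\nu,\rho_Q\rangle_\cC$ are linear in $Q$, while $Q\mapsto\|\rho_Q\|_\cC^2\geq0$ is a nonnegative quadratic form, so $\cE^\nu_\mu$ is convex; and since $\rho_{Q_t}=(1-t)\rho_{Q_0}+t\rho_{Q_1}$ for $Q_t=(1-t)Q_0+tQ_1$, one has the identity
\begin{equation*}
\cE^\nu_\mu(Q_t)=(1-t)\,\cE^\nu_\mu(Q_0)+t\,\cE^\nu_\mu(Q_1)-\frac{\alpha}{2}\,t(1-t)\,\|\rho_{Q_1}-\rho_{Q_0}\|_\cC^2 ,
\end{equation*}
so that $\cE^\nu_\mu$ is \emph{strictly} convex along every segment of $\cK_\Lambda$ on which the density is not constant.

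Next I would show that \emph{every} solution of~\eqref{eq:SCF_cut_off} satisfying~\eqref{properties} is a global minimizer of $\cE^\nu_\mu$. For such a solution, write $P$, $Q=P-P^0_-\in\cK_\Lambda$ and $D$ as in~\eqref{eq:SCF_cut_off}; thanks to the cut-off, $D-\mu$ is a bounded self-adjoint operator on $\gH_\Lambda$. Expanding $\cE^\nu_\mu$ around $Q$ and using $\langle\rho_Q-\nu,\rho_H\rangle_\cC=\str\big((D-D^0)H\big)$ for $H$ in the relevant directions, one gets, for every $Q'\in\cK_\Lambda$,
\begin{equation*}
\cE^\nu_\mu(Q')-\cE^\nu_\mu(Q)=\str\big((D-\mu)(Q'-Q)\big)+\frac{\alpha}{2}\,\|\rho_{Q'}-\rho_Q\|_\cC^2\geq\str\big((D-\mu)(Q'-Q)\big),
\end{equation*}
and the relation $P=\chi_{(-\infty,\mu)}(D)+\delta$ with $0\leq\delta\leq\chi_{\{\mu\}}(D)$ is precisely the statement that $P$ minimizes the linearized energy associated with $D$, i.e.\ that $\str((D-\mu)(Q'-Q))\geq0$ for all $Q'\in\cK_\Lambda$ --- a standard fact of reduced Hartree--Fock theory (see \cite{HaiLewSer-05b}). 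Therefore $\cE^\nu_\mu(Q')\geq\cE^\nu_\mu(Q)$ for all $Q'\in\cK_\Lambda$, so $Q$ is a global minimizer. Finally, by the identity of the previous paragraph, $\cE^\nu_\mu$ is constant on the segment joining any two minimizers $Q_0,Q_1$ only if $\|\rho_{Q_1}-\rho_{Q_0}\|_\cC=0$; hence all solutions of~\eqref{eq:SCF_cut_off} satisfying~\eqref{properties} --- whose associated operators $Q$ are minimizers --- share the same density $\rho_{P-P^0_-}$, which is the last assertion.

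I expect the only genuinely delicate ingredients to be the ones imported from \cite{HaiLewSer-05b,GraLewSer-09}: that $\cE^\nu_\mu$ is well-defined, lower semicontinuous and bounded below on $\cK_\Lambda$ and attains its minimum --- this is where the ultraviolet cut-off is essential, as it turns $D^0$ and the Coulomb potential into bounded operators on $\gH_\Lambda$ and keeps the various $P^0_-$-traces finite, and where the limiting cases $\mu=\pm1$ need a compactness argument --- and the equivalence between the self-consistent equation and the linearized minimization principle. The remaining steps --- the convexity identities and the passage from self-consistent solutions to global minimizers with a common density --- are elementary.
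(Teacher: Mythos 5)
Your argument is correct and is precisely the convexity reduction that the paper's footnote alludes to but does not spell out: import existence of minimizers and the fact that minimizers solve the self-consistent equation from \cite{HaiLewSer-05b,GraLewSer-09}, use the convexity of the quadratic reduced-BDF functional on the convex set $\cK_\Lambda$ to show that every solution of \eqref{eq:SCF_cut_off} with the regularity \eqref{properties} is a global minimizer, and use the strict convexity along the density direction to conclude that the density is common to all minimizers. Apart from an inessential constant in the identity $\str\big((D-D^0)H\big)=\alpha\,D(\rho_Q-\nu,\rho_H)$ (the $\alpha$ and the normalization of the Coulomb pairing are elided but reappear correctly in the displayed inequality), this matches the paper's intended argument.
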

In \eqref{properties}, $\mathfrak{S}_1(\mathfrak{H}_\Lambda)$ and $\gS_2(\gH_\Lambda)$ are respectively the spaces of trace-class and Hilbert-Schmidt operators \cite{Simon-79} on $\mathfrak{H}_\Lambda$, and $P^0_+=1-P^0_-$.
The method used in \cite{HaiLewSer-05b,GraLewSer-09} was to identify solutions of \eqref{eq:SCF_cut_off} with minimizers of the so-called \emph{reduced Bogoliubov-Dirac-Fock energy} which is nothing but the formal difference between the reduced Hartree-Fock energy of $P$ and that of the reference state $P^0_-$. Note that due to the uniqueness of $\rho_{P-P^0_-}$ the mean-field operator $D$ is also unique and only $\delta$ can differ between two solutions of \eqref{eq:SCF_cut_off}. 

Let us mention that it is natural to look for a solution of \eqref{eq:SCF_cut_off} such that $P-P^0_-$ is a Hilbert-Schmidt operator on $\mathfrak{H}_\Lambda$. If $P$ is a projector, the Shale-Stinespring theorem \cite{ShaSti-65} then tells us that $P$ yields a Fock representation equivalent to that of $P^0_-$. Even when $P$ is not a projector, it will be associated with a unique Bogoliubov mixed state in the Fock space representation of $P^0_-$. This is a mathematical formulation of the statement that $P$ should not be too far from $P^0_-$. Indeed, if $P$ is an orthogonal  projector, one has (see \cite[Lemma 2]{HaiLewSer-05a} and \cite[Lemma 1]{HaiLewSer-08})
$$\left.\begin{array}{r}
P-P^0_-\in\mathfrak{S}_2(\mathfrak{H}_\Lambda)\\
P^2=P\\
\end{array}\right\}
\Longrightarrow P^0_\pm(P-P^0_-)P^0_\pm\in\mathfrak{S}_1(\mathfrak{H}_\Lambda)\ \text{and}\  \rho_{P-P^0_-}\in \mathcal{C}\cap L^2(\R^3),$$
therefore, in this case, \eqref{properties} is just equivalent to the Shale-Stinespring condition $P-P^0_-\in\gS_2(\gH_\Lambda)$.

The property $P^0_\pm(P-P^0_-)P^0_\pm\in\mathfrak{S}_1(\mathfrak{H}_\Lambda)$ allows us to define the total `charge' of the system by (see \cite{HaiLewSer-05a})
$$\tr_{P^0_-}(P-P^0_-):=\tr\; P^0_-(P-P^0_-)P^0_-\;+\;\tr\; P^0_+(P-P^0_-)P^0_+.$$
When $P$ is a projector, the above quantity is always an integer which is indeed nothing but the relative index of the pair $(P,P^0_-)$, see \cite{HaiLewSer-05a,AvrSeiSim-94}. Varying $\mu$ allows to pick the desired total charge. Indeed, if $\nu$ is small enough and $\mu=0$, then one has $\norm{P-P^0_-}<1$ and the relative index vanishes: $\tr_{P^0_-}(P-P^0_-)=0$.

\medskip

It is very important to realize that solutions of \eqref{eq:SCF_cut_off} are singular mathematical objects. This fact is precisely at the origin of charge renormalization. In \cite[Theorem 1]{GraLewSer-09}, the following was proved:

\begin{thm}[Nonperturbative charge renormalization formula \cite{GraLewSer-09}]\label{Thm:renorm}
Assume that $\alpha\geq0$, $\Lambda>0$ and $\mu\in(-1,1)$ are given. If $\nu\in\cC\cap L^1(\R^3)$, then $\rho_{P-P^0_-}\in L^1(\R^3)$ and it holds 
\begin{equation}
\int_{\R^3}\nu -\int_{\R^3}\rho_{P-P^0_-}=\frac{\displaystyle\int_{\R^3}\nu-\tr_{P^0_-}(P-P^0_-)}{1+\alpha B_\Lambda}.
\label{relation_charge}
\end{equation}
\end{thm}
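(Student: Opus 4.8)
The plan is to relate the two sides of \eqref{relation_charge} by carefully identifying the linear response of the polarized vacuum to the self-consistent potential. Recall that any solution $P$ of \eqref{eq:SCF_cut_off} splits as $P-P^0_- = Q_\nu$, where $Q_\nu$ depends only on the mean-field operator $D = \Pi_\Lambda(D^0 + \alpha(\rho_{Q_\nu}-\nu)\ast|x|^{-1})\Pi_\Lambda$ (plus a possible Fermi-level term $\delta$ that does not affect $\rho_{Q_\nu}$ or, one checks, $\tr_{P^0_-}$, since $\mu\in(-1,1)$ and $\delta$ lives at the single level $\mu$). First I would write $Q_\nu = \chi_{(-\infty,\mu)}(D)-\chi_{(-\infty,0)}(D^0)$ using a Cauchy integral formula, expressing it as a contour integral of resolvents, and expand in powers of the effective potential $v := \alpha(\rho_{Q_\nu}-\nu)\ast|x|^{-1}$. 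The zeroth-order term vanishes, the first-order term is the linear response $Q^{(1)}$, and the remainder $Q_\nu - Q^{(1)}$ is trace-class (not merely $\tr_{P^0_-}$-class) by the standard Hilbert--Schmidt/trace-class estimates from \cite{HaiLewSer-05a,GraLewSer-09}, with vanishing integral of its density because it is built from an even number of resolvent--potential blocks off the diagonal. The point is that \emph{all} the divergence, i.e. all the dependence on $\Lambda$, is concentrated in the first-order term.

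Next I would compute the first-order term explicitly in Fourier space. The linear response $Q^{(1)}$ is the operator obtained from $v$ via the polarization kernel, and a well-known computation (going back to \cite{Dirac-34b}, and carried out rigorously in \cite{HaiLewSer-05a,GraLewSer-09}) gives, in Fourier variables,
\begin{equation}
\widehat{\rho_{Q^{(1)}}}(k) = -\alpha\, B_\Lambda(k)\,\big(\widehat{\rho_{Q_\nu}}(k)-\widehat{\nu}(k)\big),
\label{eq:first_order_response}
\end{equation}
where $B_\Lambda(k)\to B_\Lambda(0)=:B_\Lambda$ as $k\to0$, and $B_\Lambda\sim \tfrac{2}{3\pi}\log\Lambda$ for large $\Lambda$. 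The key step is then to \emph{integrate this relation}, i.e. evaluate at $k=0$. Using $\widehat{\rho_{Q_\nu}}(0)=(2\pi)^{-3/2}\int_{\R^3}\rho_{Q_\nu}$ (which makes sense precisely because $\rho_{Q_\nu}\in L^1$ by Theorem \ref{Thm:renorm}'s hypothesis $\nu\in\cC\cap L^1$ — this is the place where the $L^1$ assumption is used), and the fact that the higher-order remainder contributes nothing to $\int\rho_{Q_\nu}$, one gets
\begin{equation}
\int_{\R^3}\rho_{Q_\nu} = -\alpha B_\Lambda\Big(\int_{\R^3}\rho_{Q_\nu}-\int_{\R^3}\nu\Big) + \int_{\R^3}\rho_{\,Q_\nu - Q^{(1)}}.
\label{eq:integrated_relation}
\end{equation}

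Finally I would identify the remainder integral with the ``charge correction'' $\int\nu - \tr_{P^0_-}Q_\nu$ minus $\int\nu - \int\rho_{Q_\nu}$; more precisely, one has the identity $\tr_{P^0_-}Q_\nu - \int_{\R^3}\rho_{Q_\nu} = -\int_{\R^3}\rho_{\,Q^{(1)}}$ up to terms that vanish, because the discrepancy between the generalized trace $\tr_{P^0_-}$ and the integral of the density comes entirely from the off-diagonal, $\tr_{P^0_-}$-but-not-trace-class part of $Q_\nu$, which at linear order is exactly $Q^{(1)}$ (all higher orders being genuinely trace-class with $\tr = \int\rho$). Substituting $\int\rho_{\,Q^{(1)}} = -\alpha B_\Lambda(\int\rho_{Q_\nu}-\int\nu)$ from \eqref{eq:first_order_response} at $k=0$ and rearranging \eqref{eq:integrated_relation} then yields
\begin{equation*}
\int_{\R^3}\nu - \int_{\R^3}\rho_{Q_\nu} = \frac{\displaystyle\int_{\R^3}\nu - \tr_{P^0_-}Q_\nu}{1+\alpha B_\Lambda},
\end{equation*}
which is \eqref{relation_charge}. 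The main obstacle, and the technically delicate point, is the rigorous justification that the non-linear remainder $Q_\nu - Q^{(1)}$ is trace-class with a density whose integral equals its trace and, crucially, that this trace cancels in the combination $\int\nu - \tr_{P^0_-}Q_\nu$; this requires the continuity of $k\mapsto B_\Lambda(k)$ at the origin (so that the $k\to0$ limit in \eqref{eq:first_order_response} is legitimate) together with the fixed-point/convexity structure from \cite{GraLewSer-09} guaranteeing uniqueness of $\rho_{Q_\nu}$, and care with the fact that a priori $Q_\nu$ itself is only $\tr_{P^0_-}$-class, so every manipulation must be done at the level of the convergent expansion rather than on $Q_\nu$ directly.
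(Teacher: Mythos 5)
Your overall route is the right one and matches the structure used in \cite{GraLewSer-09} (recorded also in the present paper's Fourier-space self-consistent relation \eqref{eq:SCF_rho}): isolate the linear-response multiplier $B_\Lambda(k)$, pass to $k=0$ using $\rho_{P-P^0_-}\in L^1$, and identify the contribution of the nonlinear remainder with $\tr_{P^0_-}(P-P^0_-)$. The decisive facts are correctly located at the end: $\tr_{P^0_-}Q^{(1)}=0$ (the linear response is purely off-diagonal in the $P^0_\pm$ decomposition, a point you should state explicitly via the contour-deformation argument), while each odd $n\geq3$ term in the resolvent expansion is trace-class with $\tr=\int\rho$; hence $\tr_{P^0_-}Q_\nu-\int\rho_{Q_\nu}=-\int\rho_{Q^{(1)}}$, from which the boxed relation follows by elementary algebra.

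However, the middle of the argument contains a claim that is false and would break the theorem if taken at face value. You assert that the remainder $Q_\nu - Q^{(1)}$ has ``vanishing integral of its density because it is built from an even number of resolvent--potential blocks off the diagonal.'' On the contrary: by Furry's theorem only odd orders survive, so $Q_\nu - Q^{(1)}$ is the sum over odd $n\geq3$, which is trace-class, and $\int\rho_{Q_\nu-Q^{(1)}}=\tr(Q_\nu-Q^{(1)})=\tr_{P^0_-}Q_\nu$. This is \emph{precisely} the nonzero quantity that must appear in the numerator of \eqref{relation_charge}; if it vanished, the formula would collapse to $(1+\alpha B_\Lambda)\int(\nu-\rho_{P-P^0_-})=\int\nu$, with $\tr_{P^0_-}$ nowhere to be seen. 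You implicitly retract this claim when you keep the remainder term in your integrated relation, and the ``more precisely'' identity you state afterwards is the correct one; but your first loose identification of the remainder integral with $(\int\nu-\tr_{P^0_-}Q_\nu)-(\int\nu-\int\rho_{Q_\nu})$ is also backwards, since that difference equals $\int\rho_{Q^{(1)}}$, not $\int\rho_{Q_\nu-Q^{(1)}}=\tr_{P^0_-}Q_\nu$. The proposal therefore reaches the right endpoint but contradicts itself along the way; tightening these statements and justifying $\tr_{P^0_-}Q^{(1)}=0$ would make the argument sound.
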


In this formula, $B_\Lambda$ is an explicit function of the ultraviolet cut-off $\Lambda$ (see the comments after \eqref{expression_B} and \eqref{def:BL}), which behaves like
$$B_\Lambda=\frac{2}{3\pi}\log\Lambda-\frac{5}{9\pi}+\frac{2\log2}{3\pi}+O(1/\Lambda^2).$$
Let us emphasize that \eqref{relation_charge} is non perturbative and holds for all $\alpha\geq0$ and all $\mu\in(-1,1)$. Theorem \ref{Thm:renorm} shows that the operator $P-P^0_-$ is in general \emph{not} trace-class: if $P-P^0_-\in\gS_1(\gH_\Lambda)$, then it must hold $\tr_{P^0_-}(P-P^0_-)=\tr(P-P^0_-)=\int_{\R^3}\rho_{P-P^0_-}$.

In our model we have two possible definitions of the charge of the system: $\int_{\R^3}\nu-\tr_{P^0_-}(P-P^0_-)$ and $\int_{\R^3}(\nu-\rho_{P-P^0_-})$. In practice it is the electrostatic field induced by the nucleus (together with the vacuum polarization density) which is measured, hence it is more natural to define the charge by means of the density. By \eqref{relation_charge}, the total Coulomb potential is, at infinity,
$$\alpha(\nu-\rho_{P-P^0_-})\ast\frac1{|x|}\;\underset{|x|\to\ii}{\sim}\; \alpha\frac{\int_{\R^3}(\nu-\rho_{P-P^0_-})}{|x|}=\frac{\frac{\alpha}{1+\alpha B_\Lambda}\left(\int_{\R^3}\nu-\tr_{P^0_-}(P-P^0_-)\right)}{|x|}.$$
Let us assume for simplicity that we put in the vacuum ($\mu=0$) a nucleus containing $\int_{\R^3}\nu=Z$ protons and which is small enough in the sense that $\norm{\nu}_\cC\ll1$. Then $\tr_{P^0_-}(P-P^0_-)=0$ by \cite[Theorem 3]{HaiLewSer-05b} and we see that at infinity the potential induced by the nucleus is not $\alpha Z/|x|$ as expected, but rather $\alphaph Z/|x|$ where
\begin{equation}
\boxed{\alpha_{\rm ph}=Z_3 \alpha \;,\quad \text{ with }\quad Z_3=\frac1{1+\alpha B_\Lambda}=1-\alpha_{\rm ph} B_\Lambda.}
\label{charge_renormalization}
\end{equation}
The charge renormalization constant $Z_3$ is well known in QED\footnote{The renormalization constant $Z_3$ should not be confused with the nuclear charge $Z=\int_ {\R^3}\nu$.} \cite{Dyson-49b,BjorkenDrell-65,ItzyksonZuber}. The value of $\alpha$ is not observable, $\alpha_{\rm ph}$ is the real physical constant since we always observe the nucleus together with the vacuum polarization density. Its experimental value is $\alpha_{\rm ph}\simeq 1/137\,.$

In our theory we must fix $\alpha_{\rm ph}$ and not $\alpha$. Using \eqref{charge_renormalization} we can express any physical quantity in terms of $\alphaph$ and $\Lambda$ only. Unfortunately it holds $\alpha_{\rm ph} B_\Lambda<1$ hence it makes no sense to take $\Lambda\to\infty$ while keeping $\alpha_{\rm ph}$ fixed (this is the so-called Landau pole \cite{Landau-55}) and one has to look for a weaker definition of renormalizability.
The cut-off $\Lambda$ which was first introduced as a mathematical trick to regularize the model has actually a physical meaning. Because of the above constraint $\alpha_{\rm ph} B_\Lambda<1$, a natural scale occurs beyond which the model does not make sense. Fortunately, this scale is of the order $e^{3\pi/2\alphaph}$, a huge number for $\alphaph\simeq1/137$.

It is more convenient to change variables and take as new parameters $\alphaph$ and $Z_3=1-\alphaph B_\Lambda$, with the additional constraint that $0<Z_3<1$. The new parameter $Z_3$ is now independent of $\alphaph$ and the natural question arises whether predicted physical quantities will depend very much on the chosen value of $0<Z_3<1$. The purpose of this paper is to prove that the asymptotics of any physical quantity in the regime $\alphaph\ll1$ is actually \emph{independent of $Z_3$} to any order in $\alphaph$, which is what we call \emph{asymptotic renormalizability}. Note that fixing $Z_3\in(0,1)$ amounts to take  $\Lambda\simeq Ce^{3\pi(1-Z_3)/2\alphaph}\gg1$.

Instead of looking at all possible physical observables, it is convenient to define a renormalized density $\rho_{\rm ph}$. Following \cite{HaiLewSer-05b}, we define it by the relation
\begin{equation}
\alpha_{\rm ph}\rho_{\rm ph}=\alpha\big(\nu-\rho_{P-P^0_-}\big)
\label{eq:def_rho_ph}
\end{equation}
in such a way that $D=D^0-\alpha_{\rm ph}\rho_{\rm ph}\ast|x|^{-1}$. This procedure is similar to wavefunction renormalization. By uniqueness of $\rho_{P-P^0_-}$ we can see $\rho_{\rm ph}$ as a function of $\alpha_{\rm ph}$, $\nu$, $\mu$ and $\Lambda$ (or $Z_3$). For the sake of clarity we will not emphasize the dependence in $\nu$ and $\mu$ which will be fixed quantities. Also we will use the same notation $\rhoph(\alphaph,\Lambda)$ or $\rhoph(\alphaph,Z_3)$, depending on the context. The self-consistent equation for $\rhoph$ was derived in \cite{HaiLewSer-05b} and it is mentioned below in Section \ref{sec:def_nu_k}.

From now on, we will assume that 
$$\boxed{\mu=0.}$$
For small external densities $\nu$, this means that we will be looking at the vacuum polarization in the presence of the nucleus, without considering any real electron (that is, $\rhoph$ is the renormalized density of the nucleus containing both the bare density $\nu$ and the vacuum polarization density $\rho_{P-P^0_-}$). We will explain in Section \ref{sec:def_nu_k} that one can expand $\rho_{\rm ph}=\rho_{\rm ph}(\alpha_{\rm ph},\Lambda)$ as follows:
\begin{equation}
\rho_{\rm ph}(\alpha_{\rm ph},\Lambda)=\sum_{n=0}^\infty (\alpha_{\rm ph})^n\nu_{n,\Lambda}
\label{series_cut_off}
\end{equation}
where $\{\nu_{n,\Lambda}\}_n\subset L^2(\R^3)\cap\cC$ is a sequence depending only on the external density $\nu$ and the cut-off $\Lambda$. This sequence is defined below in Section \ref{sec:def_nu_k}. The series \eqref{series_cut_off} has a positive  radius of convergence, which is however believed to shrink to zero when $\Lambda\to\ii$.

Assuming $\widehat\nu$ decays fast enough (see condition \eqref{condition_nu}), we will prove that for any fixed $n$, the limit  $\nu_{n,\Lambda}\to\nu_{n}$ exists in $L^2(\R^3)\cap\mathcal{C}$. This is what is usually meant by renormalizability in QED: each term of the perturbation series in powers of the physical $\alpha_{\rm ph}$ has a limit when the cut-off is removed. The sequence $\{\nu_n\}_n$ is the one which is calculated in practice \cite{BjorkenDrell-65,GreMulRaf-85,EngDre-87,Engel-02}. One has for instance $\nu_0=\nu$ and 
$$\nu_1\ast|x|^{-1}=\frac{1}{3\pi}\int_1^\infty dt\, (t^2-1)^{1/2}\left[\frac2{t^2}+\frac{1}{t^4}\right]\int_{\R^3} e^{-2|x-y|t}\frac{\nu(y)}{|x-y|}\, dy,$$
the \emph{Uehling potential} \cite{Uehling-35,Serber-35}. All the others $\nu_n$ can be calculated by induction in terms of $\nu_0,...,\nu_{n-1}$, as is explained below in Section \ref{sec:def_nu_k}.

The next natural question is to understand the link between the well-defined, cut-off dependent, series \eqref{series_cut_off} and the \emph{formal series} $\sum_{n=0}^\infty(\alpha_{\rm ph})^n\nu_{n}$. Recall that $\alpha_{\rm ph}B_\Lambda<1$ by construction, so it is in principle not allowed to take the limit $\Lambda\to\infty$ while keeping $\alpha_{\rm ph}$ fixed: we rather want to think of $Z_3=1-\alphaph B_\Lambda$ as being fixed. The main result in this paper is the following

\begin{thm}[Asymptotic renormalization of the nuclear charge density]
\label{Thm:asymptotics}
Consider a function $\nu \in L^2(\R^3) \cap \cC$ such that
\begin{equation}
\label{condition_nu}
\int_{\R^3} \log (1 + |k|)^{2 N + 2} |\widehat\nu(k)|^2 dk < \ii
\end{equation}
for some integer $N$. Let $\rhoph(\alphaph,Z_3)$ be the unique physical density defined by \eqref{eq:def_rho_ph} with $\mu = 0$, $Z_3=1-\alphaph B_\Lambda$ and $\alphaph= Z_3\alpha$.

Then, for every $0<\epsilon<1$, there exist two constants $C(N, \epsilon, \nu)$ and $a(N, \epsilon, \nu)$, depending only on $N$, $\epsilon$ and $\nu$, such that one has
\begin{equation}
\label{eq:asymptotics}
\boxed{\norm{\rhoph(\alphaph, Z_3) - \sum_{n = 0}^N \nu_n (\alphaph)^n}_{L^2(\R^3) \cap \cC} \leq C(N, \epsilon, \nu)\; \alphaph^{N + 1}}
\end{equation}
for all $0 \leq \alphaph \leq a(N, \epsilon, \nu)$ and all $\epsilon \leq Z_3 \leq 1 - \epsilon$.
\end{thm}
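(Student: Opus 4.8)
The plan is to work entirely at the level of the self-consistent equation for $\rhoph$ derived in \cite{HaiLewSer-05b}. Writing $D=D^0-\alphaph\,\rhoph\ast|x|^{-1}$ and using the first-order expansion of the density in powers of the potential, one obtains a fixed-point equation of the schematic form $\rhoph=\nu-\alphaph\,(\text{lin. response})(\rhoph)-(\text{higher order in }\alphaph\,\rhoph)$, where the linear response, expressed in Fourier variables, is exactly the logarithmically divergent multiplier whose leading behaviour produces $B_\Lambda$. The key algebraic step is to absorb the divergent piece of the linear response into the renormalization: since $\alphaph B_\Lambda=1-Z_3$ by definition, the quantity that actually multiplies $\widehat\rhoph$ after rearrangement is bounded below by $Z_3\geq\epsilon$ uniformly, so the fixed-point operator is a contraction on $L^2(\R^3)\cap\cC$ for $\alphaph$ small enough, with constants depending only on $\epsilon$ and $\nu$. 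This simultaneously reproves the convergence of \eqref{series_cut_off} and gives uniform-in-$Z_3$ control of $\rhoph$.

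Next I would set up the asymptotic expansion. Define $\nu_{n,\Lambda}$ as the Taylor coefficients of the (cut-off-dependent) solution, which satisfy an explicit recursion: $\nu_{0,\Lambda}=\nu$ (up to the renormalized linear term) and $\nu_{n,\Lambda}$ is a finite sum of convolutions of $\nu_{0,\Lambda},\dots,\nu_{n-1,\Lambda}$ against kernels coming from the $j$-th order density response with $j\leq n$. The heart of the matter is to show that each individual kernel, once the $B_\Lambda$-divergence has been removed by the renormalization, converges as $\Lambda\to\ii$ in the appropriate multiplier norm, \emph{with a rate}: schematically the difference between the cut-off kernel and its limit is $O(\Lambda^{-2})$ pointwise but only $O(1/\log\Lambda)$ in the relevant operator norm because of the region $|k|\sim\Lambda$. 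Translating $\Lambda\simeq C e^{3\pi(1-Z_3)/2\alphaph}$, this $1/\log\Lambda$ error becomes $O(\alphaph)$, which is exactly what is needed so that replacing $\nu_{n,\Lambda}$ by $\nu_n$ in the first $N$ terms of \eqref{series_cut_off} costs only $O(\alphaph^{N+1})$ after multiplication by $(\alphaph)^n$. The decay hypothesis \eqref{condition_nu} enters precisely here: the logarithmic weights $\log(1+|k|)^{2N+2}$ are what make the $n$-th kernel ($n\leq N$) act boundedly on $\nu$ and make the convergence rates summable.

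The estimate \eqref{eq:asymptotics} is then assembled in three blocks: (i) the tail $\sum_{n>N}(\alphaph)^n\nu_{n,\Lambda}$ of the convergent series, bounded by a geometric series with ratio $O(\alphaph)$ uniformly in $Z_3\in[\epsilon,1-\epsilon]$ thanks to the contraction estimate, hence $O(\alphaph^{N+1})$; (ii) the head $\sum_{n=0}^N(\alphaph)^n(\nu_{n,\Lambda}-\nu_n)$, bounded by $\sum_{n=0}^N(\alphaph)^n\cdot O(\alphaph)=O(\alphaph)$ — wait, this only gives $O(\alphaph)$, so in fact one must be more careful and show that $\|\nu_{n,\Lambda}-\nu_n\|_{L^2\cap\cC}=O((1/\log\Lambda)^{N+1-n})=O(\alphaph^{N+1-n})$, using that the $\Lambda$-dependence enters each kernel analytically in $B_\Lambda$ and can be expanded to order $N+1-n$; and (iii) collecting constants. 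The main obstacle I anticipate is block (ii): one needs not just convergence of each renormalized kernel but a full asymptotic expansion of it in inverse powers of $\log\Lambda$ (equivalently in $\alphaph$ at fixed $Z_3$), uniformly on the $k$-support dictated by \eqref{condition_nu}, and one must check that the errors propagate through the nonlinear recursion without loss of order. This is where the bulk of the technical work — careful stationary-phase/residue analysis of the polarization kernels and bookkeeping of the logarithmic weights — will be concentrated.
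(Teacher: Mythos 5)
The core of your argument breaks down at block~(i). You claim the tail $\sum_{n>N}\alphaph^n\nu_{n,\Lambda}$ is a geometric series with ratio $O(\alphaph)$ uniformly in $Z_3\in[\epsilon,1-\epsilon]$. This is not the case. The dominant term in the recursion \eqref{def:nu_nL} is the linear one, $\nu_{n,\Lambda}=\cU_\Lambda\nu_{n-1,\Lambda}+\cdots$, and the Fourier multiplier $\cU_\Lambda$ has operator norm $B_\Lambda$ on the support of the densities. Hence $\norm{\alphaph^n\nu_{n,\Lambda}}_{L^2\cap\cC}\lesssim(\alphaph B_\Lambda)^n=(1-Z_3)^n$, and in your regime $1-Z_3\geq\epsilon$ is a fixed number, not $O(\alphaph)$. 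The tail of the series is therefore at best $O\big((1-\epsilon)^{N}\big)$, which does not go to zero with $\alphaph$ at all. No estimate on the tail of the convergent series can produce the $\alphaph^{N+1}$ rate in the theorem; this is precisely why the paper does not argue that way.

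The paper's route is instead to write $R_N=\rhoph-\sum_{n=0}^N\alphaph^n\nu_{n,\Lambda}=\alphaph^{N+1}r_N$ and derive a fixed-point equation for $r_N$ directly, by subtracting the truncated recursion from \eqref{eq:SCF_recast2} and using Cauchy's formula, namely
\begin{equation*}
r_N=\alphaph\,\cU_\Lambda r_N+\cU_\Lambda\nu_{N,\Lambda}+G_{N+1,\Lambda}+\Big(\text{terms multilinear in } \nu_{k,\Lambda}, r_N \text{ with prefactors } \alphaph^3\Big).
\end{equation*}
The linear part $\alphaph\cU_\Lambda$ has norm $\leq\alphaph B_\Lambda=1-Z_3\leq1-\epsilon<1$, so it is a strict contraction with constant uniform in $(\alphaph,\Lambda)$. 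The essential and nontrivial point is that the inhomogeneous term $\cU_\Lambda\nu_{N,\Lambda}$, which naively carries an extra factor $B_\Lambda\to\ii$, is in fact \emph{uniformly bounded}: under hypothesis \eqref{condition_nu} one proves the weighted estimate $\norm{(1+\cU)^{m-n}\nu_{n,\Lambda}}_{L^2\cap\cC}\leq C$ for $n\leq m$ (Theorem~\ref{Thm:prop_nu_n}, Proposition~\ref{Prop:nu_nL}); taking $m=N+1$, $n=N$ gives $\norm{(1+\cU)\nu_{N,\Lambda}}\leq C(N,\nu)$ and hence, by \eqref{elise}, $\norm{\cU_\Lambda\nu_{N,\Lambda}}\leq\kappa_1 C(N,\nu)$. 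This is the step your proposal misses: the decay assumption \eqref{condition_nu} is not just making each kernel act boundedly on $\nu$, it is supplying exactly one more power of the logarithmic weight $(1+\cU)$ than the order of the expansion, which is what allows $\nu_{N,\Lambda}$ to absorb one last hit of $\cU_\Lambda$ without producing a divergent constant. Without this self-improving weight propagation through the nonlinear recursion, the direct remainder estimate would not close. Your block~(ii) — requiring $\norm{\nu_{n,\Lambda}-\nu_n}=O\big((1+B_\Lambda)^{-(N+1-n)}\big)=O(\alphaph^{N+1-n})$ — is correct in spirit and matches Lemma~\ref{Lem:diff_nu}; its proof again runs through the recursion, combining the quantitative convergence $\norm{(U_\Lambda-U)(1+U)^{-(m+1)}}_{L^\ii}\lesssim\max\{(1+B_\Lambda)^{-m},E(2\Lambda)^{-1}\}$ of Proposition~\ref{Prop:ULU} with the same weighted bounds on the $\nu_{n,\Lambda}$.
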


The interpretation of Theorem \ref{Thm:asymptotics} is that the renormalized density $\rhoph(\alphaph,Z_3)$ is \emph{asymptotically} (meaning up to any fixed order $N$) given by the formal series $\sum_{n\geq0}(\alphaph)^n\nu_n$, \emph{uniformly in the renormalization parameter $Z_3$} in the range $\epsilon\leq Z_3\leq 1-\epsilon$. Therefore, for a very large range of cut-offs, essentially 
$$C_1e^{{3\epsilon\pi}/{2\alpha_{\rm ph}}}\leq \Lambda\leq C_2e^{{3(1-\epsilon)\pi}/{2\alpha_{\rm ph}}}$$
the result is independent of $\Lambda$ and it is given by the formal series $\sum_{n\geq0}(\alphaph)^n\nu_n$. Our formulation of renormalizability is more precise than the requirement that each $\nu_{n,\Lambda}$ converges. It also leads to the formal perturbation series in a very natural way.

A natural question is to ask for the convergence of the perturbation series $\sum_{n\geq0}(\alphaph)^n\nu_n$. It was argued by Dyson in \cite{Dyson-52} that it is probably \emph{divergent}, but we are unable to transform his argument into a rigorous mathematical proof. We will make more comments on the series $\sum_{n\geq0}(\alphaph)^n\nu_n$ at the end of next section.

\begin{remark}
We will provide explicit formulas for the sequence $\{\nu_n\}$ later in Section \ref{sec:def_nu_k}.  We will in particular see in the proof that under Assumption \eqref{condition_nu}, one has $\nu_n\in L^2(\R^3)\cap\cC$ for all $0\leq n\leq N$. Therefore the approximation series of order $N$ appearing in \eqref{eq:asymptotics}, $\sum_{n=0}^N(\alphaph)^n\nu_n$, is a well-defined function of $L^2(\R^3)\cap\cC$.
\end{remark}

\begin{remark}
The space $L^2(\R^3)\cap\cC$ is the natural space which occurs in this theory. In particular the Coulomb norm is nothing but the classical electrostatic energy which appears in the reduced BDF energy functional. Our result can be extended to Sobolev spaces $H^s(\R^3)$ provided $\nu$ is smooth enough.
\end{remark}

\begin{remark}
It would be interesting to extend this result to the case of atoms with 'real' electrons. This amounts to taking $\mu$ sufficiently close to $1$ at the same time as $\alphaph$ is small. However this case is more difficult than what is done here: an additional expansion of the electronic charge density in powers of $\alphaph$ is needed.
\end{remark}

The proof of Theorem \ref{Thm:asymptotics} (given in Section \ref{sec:proof_thm_asymptotics} below) is divided into two steps. We first estimate the difference (see Lemma \ref{Lem:reste}) 
\begin{equation}
\norm{\rhoph(\alphaph,\Lambda)-\sum_{n = 0}^N \nu_{n,\Lambda} (\alphaph)^n}_{L^2(\R^3)\cap\cC}\leq C_1(N,\epsilon,\nu)(\alphaph)^{N+1}
\label{estim_rest_intro} 
\end{equation}
for a constant $C_1(N,\epsilon,\nu)$ depending only on $N$, $\epsilon$ and $\nu$, and under the assumption that $\epsilon\leq Z_3=1-\alphaph B_\Lambda\leq 1-\epsilon$. This amounts to expanding the solution of the self-consistent equation \eqref{eq:SCF_cut_off} up to the $N$th order in $\alphaph$ while controlling the error term uniformly in $\Lambda$. Then we show in Lemma \ref{Lem:diff_nu} that
\begin{equation}
\forall 0\leq n\leq N,\qquad \norm{\nu_{n,\Lambda}-\nu_n}_{L^2(\R^3)\cap\cC}\leq \frac{C_2(N,\nu)}{(B_\Lambda)^{N+1-n}}
\label{estim_nu_nnu_n_Lambda_intro} 
\end{equation}
for a constant $C_2(N,\nu)$ depending only on $N$ and $\nu$, leading to the bound
\begin{equation}
\norm{\sum_{n = 0}^N \nu_{n,\Lambda} (\alphaph)^n-\sum_{n = 0}^N \nu_{n} (\alphaph)^n}_{L^2(\R^3)\cap \cC}\leq (\alphaph)^{N+1}\frac{C_2(N,\nu)(1-\epsilon^{N+1})}{\epsilon^{N+1}(1-\epsilon)}
\label{estim_series_Lambda_no} 
\end{equation}
since by assumption $(B_\Lambda)^{-N-1+n}\leq (\alphaph/\epsilon)^{N+1-n}$. The main result then follows from \eqref{estim_rest_intro} and \eqref{estim_series_Lambda_no}. All these bounds strongly use the explicit recursion relations defining the sequences $\{\nu_{n,\Lambda}\}$ and $\{\nu_n\}$, as well as tedious estimates on the nonlinear terms appearing in these relations.

The rest of the paper is organized as follows. In Section \ref{sec:def_nu_k} we define the sequences $\{\nu_{n,\Lambda}\}$ and $\{\nu_{n}\}$ by their respective recursion formulas and we discuss some properties of the latter. In particular, in Theorem \ref{Thm:prop_nu_n}, we give a simple estimate on $\norm{\nu_n}_{L^2(\R^3)\cap\cC}$. In Section \ref{sec:U_F} we present estimates on the different terms appearing in the recursion formulas. Of particular interest will be the density $\nu_{1,\Lambda}$ giving rise to the Uehling potential. In Fourier space, we have
$\widehat{\nu}_{1,\Lambda}(k)=U_\Lambda(k)\widehat{\nu}(k)$
for an explicit function $U_\Lambda(k)$ which is studied in Section \ref{sec:U}. The proofs of Theorems \ref{Thm:prop_nu_n} and \ref{Thm:asymptotics} are respectively provided in Sections \ref{sec:proof_thm_prop_nu_n} and  \ref{sec:proof_thm_asymptotics}. Some other technical proofs are provided in Appendices \ref{sec:appendix}, \ref{sec:proof_Prop:ULU} and \ref{sec:proof_estim_F}.

\bigskip

\noindent\textbf{Acknowledgment.} The authors are grateful to Christian Brouder for interesting comments. M.L. would like to thank Jan Derezi\'nski and Jan Philip Solovej for stimulating discussions.

\section{The two sequences $\{\nu_{n,\Lambda}\}$ and $\{\nu_{n}\}$}
\label{sec:def_nu_k}

In this section we derive formulas for $\{\nu_{n,\Lambda}\}$ and $\{\nu_{n}\}$, and we make some comments on the latter.

\subsection{Definition of $\{\nu_{n,\Lambda}\}$ and $\{\nu_n\}$}

We start with the self-consistent equation \eqref{eq:SCF_cut_off} with cut-off, assuming $\mu=0$. 
Note that in the regime of interest in Theorem \ref{Thm:asymptotics},
we have $\alpha=\alphaph/{Z_3}\leq {\alphaph}/{\epsilon}$.
When $\alpha \pi^{1/6}2^{11/6}\norm{\nu}_\cC<1$, it is known that $0\notin\sigma(D)$ hence $\delta=0$ in \eqref{eq:SCF_cut_off}, see \cite[Theorem 3]{HaiLewSer-05b} and \cite[Lemma 11]{GraLewSer-09}. Therefore assuming $a(\nu,N,\epsilon)\leq \epsilon(\pi^{1/6}2^{11/6}\norm{\nu}_\cC)^{-1}$ in Theorem \ref{Thm:asymptotics}, we automatically have that $\delta=0$ and $P=P^2$ is unique.

The idea is then to expand the self-consistent equation
\begin{equation}
P=\chi_{(-\ii,0)}\Pi_\Lambda\big(D^0+\alpha(\rho_{P-P^0_-}-\nu)\ast|x|^{-1}\big)\Pi_\Lambda
\label{eq:SCF_P_only} 
\end{equation}
in powers of $\alpha$ by means of the resolvent formula. This method was already used in \cite{HaiLewSer-05a} to prove existence and uniqueness of solutions. We define
\begin{equation}
\label{def:FnL}
F_{n,\Lambda}(\mu_1,...,\mu_k):=\rho\left[ \frac{1}{2\pi}\int_{-\ii}^\ii\frac{1}{D^0+i\eta}\prod_{j=1}^n\left(\Pi_\Lambda\;\mu_j\ast\frac{1}{|x|}\;\Pi_\Lambda\;\frac{1}{D^0+i\eta}\right)d\eta\right]
\end{equation}
where we recall that $\Pi_\Lambda$ is the orthogonal projector onto $\gH_\Lambda$ in $L^2(\R^3,\C^4)$ and $\mu_1,...,\mu_n\in \cC$. We will always use the simplified notation $F_{n,\Lambda}(\mu):=F_{n,\Lambda}(\mu,...,\mu)$ and $\nu_\Lambda:=\cF^{-1}(\widehat{\nu}\1_{B(0,2\Lambda)})$. Note that by Furry's theorem $F_{2j,\Lambda}\equiv0$ for all $j$, see \cite[p. 547]{HaiLewSer-05a}. We also introduce
$$F_\Lambda(\mu):=\sum_{n\geq3}F_{n,\Lambda}(\mu).$$

The self-consistent equation \eqref{eq:SCF_P_only} may then be written in terms of the density in Fourier space \cite{HaiLewSer-05a,HaiLewSer-05b}, as 
\begin{equation}
\widehat{\rho_{P-P^0_-}}(k)=-\alpha B_\Lambda(k)(\widehat{\rho_{P-P^0_-}}(k)-\widehat{\nu}_\Lambda(k))+\widehat{F}_{\Lambda}\big(\alpha(\nu-\rho_{P-P^0_-})\big)
\label{eq:SCF_rho}
\end{equation}
where the function $B_\Lambda(k)$ is given by
\begin{multline}
B_\Lambda(k)  = \frac1\pi\int_{0}^{Z_\Lambda(|k|)}\frac{z^2-z^4/3}{(1-z^2)(1+|k|^2(1-z^2)/4)}dz\\
+ \frac{|k|}{2 \pi} \int_0^{Z_\Lambda(|k|)} \frac{z - z^3/3}{\sqrt{1 + \Lambda^2} - |k| z/2} dz
\label{expression_B}
\end{multline}
with
$Z_\Lambda(r)=\big(\sqrt{1+\Lambda^2}-\sqrt{1+(\Lambda-r)^2}\big)/r$, see \cite{GraLewSer-09}.
The formula for $B_\Lambda(k)$ is well-known (but in most previous works the second term was ignored, see for instance \cite{PauRos-36}).

Defining $U_\Lambda(|k|)=B_\Lambda-B_\Lambda(k)$ where $B_\Lambda=B_\Lambda(0)$ and $0\leq U_\Lambda(|k|)\leq B_\Lambda$ with $U_\Lambda(2\Lambda)=B_\Lambda$, we get the renormalized equation 
\begin{equation}
{\big(1-\alphaph U_\Lambda\big)\widehat\rho_{\rm ph}+\widehat{F}_\Lambda(\alphaph\rho_{\rm ph})=\widehat{\nu_\Lambda}}
\label{eq:SCF_recast}
\end{equation}
with the renormalized coupling constant $\alphaph:={\alpha}/{(1+\alpha B_\Lambda)}$ and the renormalized density $\alphaph\rho_{\rm ph}=\alpha(\nu-\rho_Q)$ (see \cite{HaiLewSer-05b}).
For convenience, we will denote by $\cU_\Lambda$ the operator of multiplication by the function $U_\Lambda(|k|)$ in the Fourier domain. Hence we can write the self-consistent equation \eqref{eq:SCF_recast} in direct space as
\begin{equation}
\boxed{\big(1-\alphaph \cU_\Lambda\big)\rho_{\rm ph}+{F}_\Lambda(\alphaph\rho_{\rm ph})={\nu_\Lambda}.}
\label{eq:SCF_recast2}
\end{equation}

We now expand the unique solution $\rho_{\rm ph}=\rho_{\rm ph}(\alphaph,\Lambda)$ of \eqref{eq:SCF_recast2} in powers of $\alphaph$. Writing a formal series
\begin{equation}
\rho_{\rm ph}=\sum_{n\geq0}(\alphaph)^n\nu_{n,\Lambda}
\label{eq:series_rhoph2}
\end{equation}
we find that the functions $\nu_{n,\Lambda}$ must satisfy the following recurrence relation
\begin{equation}
 \left\{\begin{array}{ll}
\nu_{0,\Lambda}=\nu_\Lambda,&\\
 & \\
\nu_{1,\Lambda}=\cU_\Lambda\nu_\Lambda,&\\
& \\
\displaystyle\nu_{n,\Lambda}=\cU_\Lambda{\nu}_{n-1,\Lambda}+\sum_{j=3}^n\;\sum_{n_1+\cdots+ n_j=n-j}{F}_{j,\Lambda}\big(\nu_{n_1,\Lambda},...,\nu_{n_j,\Lambda}\big),& \forall n\geq2.
\end{array}
\right.
\label{def:nu_nL}
\end{equation}
Note that the operator $\cU_\Lambda$ is bounded by $U(2\Lambda)=B_\Lambda$ on $\gH_\Lambda$ and that, as we will see later in Corollary \ref{Cor:estimF},  each $F_{j,\Lambda}$ is continuous on $\cC^j$ with values in $L^2\cap\cC$. The sequence $\{\nu_{n,\Lambda}\}$ is thus well-defined in $L^2\cap\cC$. Using estimates from \cite{HaiLewSer-05a} it can be proven that the series \eqref{eq:series_rhoph2} has a finite radius of convergence in $L^2\cap\cC$, but this is not needed for the moment and we can stay at a formal level in this section.

We can now formally pass to the limit as $\Lambda\to\ii$ and define by induction a sequence $\{\nu_n\}$ by
\begin{equation}
 \left\{\begin{array}{ll}
\nu_{0}=\nu,&\\
 & \\
\nu_{1}=\cU\nu,&\\
& \\
\displaystyle\nu_{n}=\cU{\nu}_{n-1,}+\sum_{j=3}^n\;\sum_{n_1+\cdots+ n_j=n-j}{F}_{j}\big(\nu_{n_1,},...,\nu_{n_j,}\big),& \forall n\geq2.
\end{array}
\right.
\label{def:nu_n}
\end{equation}
where the $F_j$ are defined similarly as the $F_{j,\Lambda}$ with $\Pi_\Lambda$ removed and $\cU$ is the operator of multiplication by the function $U(|k|)$ in the Fourier domain, defined by
\begin{align}
\label{def:U}
U(r):=\lim_{\Lambda\to\ii}U_\Lambda(r)&=\frac{r^2}{4\pi}\int_0^1\frac{z^2-z^4/3}{1+\frac{r^2(1-z^2)}{4}}\,dz\\
&\nonumber=\frac{12-5r^2}{9\pi r^2}+\frac{\sqrt{4+r^2}}{3\pi r^3}(r^2-2)\log\left(\frac{\sqrt{4+r^2}+r}{\sqrt{4+r^2}-r}\right).
\end{align}

\subsection{On the series $\{\nu_n\}$}

The recursion formula \eqref{def:nu_n} defining $\{\nu_n\}$ contains two terms. The first term $\cU\nu_{n-1}$ is a simple multiplication operator in Fourier space, by the function $U(|k|)$ which diverges at infinity. The second term involves the nonlinear functions $F_j$'s. If only the first term with $\cU$ were present, the series $\sum_{n\geq0}\nu_n(\alphaph)^n$ would only converge when the Fourier transform $\widehat\nu$ has a compact support, the radius of convergence depending on the size of this support. If only the nonlinear terms were present, the series would have a finite radius of convergence by the estimates of \cite{HaiLewSer-05a} and of Section \ref{sec:F}.

However when the two terms are combined, the situation is much more complicated. The nonlinear terms act like convolutions in Fourier space, hence even if $\widehat\nu$ has a compact support in the Fourier domain, the support of $\widehat\nu_n$ will probably grow with $n$. A careful study of the mixed effect of the multiplication by the divergent function $U$ and the nonlinearities seems rather difficult.
We will prove the following estimate:

\begin{thm}[Estimate on $\{\nu_n\}_{n \geq1}$]
\label{Thm:prop_nu_n}
There exist universal constants $A$ and $K$ such that
\begin{multline}
\label{estim:nu_n}
\norm{(1 + \cU)^{m - n} \nu_n}_{L^2 \cap \cC}\\
\leq A^{n + 1} \max \bigg\{ \| (1 +  \cU)^m \nu \|_{L^2 \cap \cC}\;,\; \big(K\log(m)\big)^\frac{m n}{2} \| (1 + \cU)^m \nu \|_{L^2 \cap \cC}^{n + 1} \bigg\},
\end{multline}
for all $\Lambda \geq 1$, $m \in \N$ and $0 \leq n \leq m$.
\end{thm}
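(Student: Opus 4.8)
The plan is to prove the estimate \eqref{estim:nu_n} by strong induction on $n$, using the recursion \eqref{def:nu_n}. The key observation is that the operator $(1+\cU)^{m-n}$ appearing on the left-hand side is chosen precisely so that it can be "distributed" through the recursion: when we write $\nu_n = \cU\nu_{n-1} + \sum_{j\geq 3}\sum_{n_1+\cdots+n_j=n-j}F_j(\nu_{n_1},\dots,\nu_{n_j})$, applying $(1+\cU)^{m-n}$ to the first term gives $(1+\cU)^{m-n}\cU\nu_{n-1}$, which we bound by $(1+\cU)^{m-(n-1)}\nu_{n-1}$ (using $\cU\leq 1+\cU$ and $m-n\geq 0$), directly matching the induction hypothesis at level $n-1$. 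For the nonlinear terms, the strategy will be to commute $(1+\cU)^{m-n}$ past $F_j$ at the cost of a multiplicative constant and a redistribution of the powers of $(1+\cU)$ onto the arguments $\nu_{n_i}$: since each $F_j(\mu_1,\dots,\mu_j)$ acts roughly like a product (convolution in Fourier space) of the $\mu_i$'s, one expects a Leibniz-type bound of the form $\|(1+\cU)^{m-n}F_j(\nu_{n_1},\dots,\nu_{n_j})\|_{L^2\cap\cC}\lesssim \prod_i \|(1+\cU)^{m-n_i}\nu_{n_i}\|_{L^2\cap\cC}$, possibly with an extra $\log$-factor from the growth of $U$, using the estimates on the $F_j$ announced in Section~\ref{sec:F} (Corollary~\ref{Cor:estimF}). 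One then plugs in the induction hypothesis for each factor and checks that the combinatorics of the sum over $j$ and over the compositions $n_1+\cdots+n_j=n-j$ close up with the claimed constant $A^{n+1}$ and the $(K\log m)^{mn/2}$ factor.

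Concretely, I would first establish the needed "distributed" estimate on $F_{j}$: a bound of the type
\begin{equation*}
\norm{(1+\cU)^{\ell}\,F_j(\mu_1,\dots,\mu_j)}_{L^2\cap\cC}\;\leq\; C^j\,\big(K\log(\ell+2)\big)^{\ell/2}\prod_{i=1}^j\norm{(1+\cU)^{\ell}\mu_i}_{L^2\cap\cC},
\end{equation*}
valid uniformly in $\Lambda$, where the power of $\cU$ on the right can be taken to be the same $\ell=m-n$ on each factor because $m-n\leq m-n_i$ (as $n_i\leq n-j<n$). This is the analytic heart of the matter and should follow from the pointwise (in Fourier variable) bounds on the kernels defining $F_j$ together with the elementary inequality controlling $(1+U(|k|))$ in terms of $\log(1+|k|)$, which is where the $\log m$ enters. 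Second, I would run the induction: the base cases $n=0$ ($\nu_0=\nu$, trivial) and $n=1$ ($\nu_1=\cU\nu$, so $(1+\cU)^{m-1}\cU\nu$ is controlled by $(1+\cU)^m\nu$) are immediate; for the inductive step I combine the $\cU$-term bound with the $F_j$-bound, substitute the hypothesis, and absorb the resulting sums. The "max" structure on the right-hand side of \eqref{estim:nu_n} is exactly what one needs to make this work: the $\cU$-term is linear in $\nu$ and contributes the first entry of the max, while the nonlinear $F_j$-terms are at least cubic in $\nu$ and contribute the second entry; one must verify that products of terms of the form $\max\{a, b^{n_i+1}\}$ recombine into $\max\{a, b^{n+1}\}$ up to the constant $A$, which is a routine but slightly delicate bookkeeping exercise.

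The main obstacle I anticipate is controlling the growth of the operator norm of $\cU_\Lambda$ (and of $\cU$) as it is applied repeatedly, intertwined with the convolution structure of the $F_j$'s. The function $U(|k|)\sim (2/3\pi)\log|k|$ grows logarithmically, so $(1+\cU)^{m-n}$ is an unbounded multiplier; the estimate only survives because the arguments $\nu_{n_i}$ carry compensating high powers $(1+\cU)^{m}$ inherited from $\nu$ via \eqref{condition_nu}, and because each application of $F_j$ "spreads" the Fourier support only mildly. Making the $(K\log m)^{mn/2}$ factor come out with the correct exponent — rather than something like $(\log m)^{mn}$ or a factor growing in $n$ faster than geometrically — will require tracking carefully how many powers of $\log$ are generated at each of the $n$ steps of the recursion and how they interact with the $j$-fold products. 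I would organize this by proving the $F_j$-estimate with the sharp $\ell$-dependence $(K\log(\ell+2))^{\ell/2}$ (not $(\ell/2)^{\text{something involving }j}$), so that the per-step cost is independent of $j$ and the total is simply $n$ copies multiplied together, which after using $\ell=m-n\leq m$ gives the stated bound. The remaining combinatorial sum $\sum_{j\geq 3}\#\{n_1+\cdots+n_j=n-j\}\,C^j$ is dominated by a geometric series in $n$, contributing only to the constant $A$.
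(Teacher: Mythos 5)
Your overall strategy --- distribute $(1+\cU)^{m-n}$ through the recursion, bound the $\cU$-term directly, bound the nonlinear terms via an $F_j$-estimate, then close an induction --- is reasonable and parallels the spirit of the paper's argument. But the key analytic lemma you propose is wrong in form, and this is not a cosmetic issue.

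You conjecture an $F_j$-bound of the shape
\[
\norm{(1+\cU)^{\ell}\,F_j(\mu_1,\dots,\mu_j)}_{L^2\cap\cC}\;\lesssim\; C^j\,\big(K\log(\ell+2)\big)^{\ell/2}\prod_{i=1}^j\norm{(1+\cU)^{\ell}\mu_i},
\]
i.e.\ with a logarithm of the \emph{power} $\ell$ and an exponent $\ell/2$. The estimate that actually holds (Corollary~\ref{Cor:estimF}) has $(K\log j)^{\ell}$: the logarithm is of the \emph{arity} $j$, and the exponent is $\ell$, not $\ell/2$. The reason is that when you push $(1+U(|k|))^\ell$ through the $j$-fold Fourier convolution defining $F_j$, you must compare $1+U$ applied to a \emph{sum of $j$ momenta} against the product of the $1+U$'s of the individual momenta; the only available tool is the sub-multiplicativity inequality of Lemma~\ref{Lem:U-prod}, $1+U(|\sum_{i=1}^j v_i|)\le K\log j\,\prod_i(1+U(|v_i|))$, whose constant depends on the number of summands $j$, not on $\ell$, and is raised to the full power $\ell$. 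Your version, with $\log(\ell+2)$ and exponent $\ell/2$, has no supporting mechanism and is almost certainly false for $j$ large.

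The factor $1/2$ in the exponent of the final bound $(K\log m)^{mn/2}$ does not come from sharpening the per-step $F_j$-estimate, as you propose. It is a \emph{global} effect of the cubic-and-higher nonlinearity: in the paper's generating-function proof, the bound
\[
P_m(t)\le(1+t+t^2)\,\|(1+\cU)^m\nu\|_{L^2\cap\cC}+\cQ_m\big(tP_m(t)\big),\qquad \cQ_m(u)=u+\sum_{j=3}^m C^j(K\log j)^{m-j}u^j,
\]
shows that the lowest nonlinear term is cubic with coefficient $\lesssim (K\log m)^{m-3}$, so $\cQ_m(u)\le 2u$ holds only for $u\lesssim (K\log m)^{-m/2}$; the $1/2$ in the exponent comes from inverting this cubic threshold, not from each $F_j$ picking up a halved power. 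Baking the $1/2$ into the $F_j$-estimate, as you suggest, is not how the bound arises, and would in any case not be provable. A direct induction of the type you sketch can be made to work, but it must use the $(K\log j)^\ell$-form of Corollary~\ref{Cor:estimF}, and the $(K\log m)^{mn/2}$ factor then has to be extracted from the combinatorics, exactly as the polynomial argument in the paper does; writing everything in terms of the generating polynomial $P_m(t)$ is precisely what makes this bookkeeping manageable.

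One smaller point: the hypothesis \eqref{condition_nu} is not part of Theorem~\ref{Thm:prop_nu_n}; the estimate is unconditional in $\nu$ (and becomes vacuous if $\|(1+\cU)^m\nu\|_{L^2\cap\cC}=\infty$), so the ``compensation'' you invoke from \eqref{condition_nu} is not where the argument lives. Also, the paper's $F_j$-estimate only requires the Coulomb norm $\|(1+\cU)^m\mu_i\|_{\cC}$ on each factor, not the full $L^2\cap\cC$ norm, which is important because one has an \emph{a priori} bound on $\rhoph$ in $\cC$ but not in $L^2$ elsewhere in the paper.
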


Even if we assume that $\nu$ decays fast enough in Fourier space, for instance
$$\forall n\geq0,\qquad \norm{(1+\cU)^{n}\nu}_{L^2\cap\cC}\leq C^n ,$$
the above estimate \eqref{estim:nu_n} does \emph{not} imply that the series $\sum_{n\geq0}\nu_n(\alphaph)^n$ is convergent for $\alphaph$ small enough. Although our estimate \eqref{estim:nu_n} is certainly far from optimal, as we have already mentionned, it is expected that the series does not converge in any appropriate sense \cite{Dyson-52}.

It is sometimes argued that the series could be \emph{Borel summable}. The Borel transform is defined by
$$\cB(t)=\sum_{n\geq0}\frac{t^n}{n!}\nu_n.$$
If $\cB(t)$ is a convergent series (for an appropriate norm) having a holomorphic extension to a domain containing the positive real line, such that
$$\tilde{\cB}(\alphaph):=\int_0^\ii \cB(t)e^{-\frac{t}{\alphaph}}dt$$
makes sense in an appropriate neighborhood of $\alphaph=0$, one may see $\tilde{\cB}(\alphaph)$ as the physical density, whose series $\sum_{n\geq0}(\alphaph)^n\nu_n$ is only asymptotic. Proving such results mathematically is hard, even for the model studied in this paper. Our estimate \eqref{estim:nu_n} does not even allow to define the Borel transform $\cB(t)$ in $L^2(\R^3)\cap\cC$.

But Borel summability is not the only tool to construct a physical density providing the correct asymptotic series. For the model studied in the present paper, we have several natural families of functions of $\alphaph$, the cut-off densities 
\begin{equation}
\rhoph\big(\alphaph,Ce^{3(1-Z_3)\pi / 2\alphaph}\big)
\label{eq:rho_ph_Lambda} 
\end{equation}
obtained by minimizing the reduced BDF energy with a cut-off $\Lambda=Ce^{3(1-Z_3)\pi / 2\alphaph}$ and using the relation \eqref{eq:def_rho_ph}. Each such density \eqref{eq:rho_ph_Lambda} has (for fixed $C$ and $0<Z_3<1$) the required asymptotic series in $\alphaph$ by Theorem \ref{Thm:asymptotics}, and it solves the self-consistent equation \eqref{eq:SCF_cut_off} with the corresponding cut-off $\Lambda$. Furthermore this solution has the benefit of being well-defined even when $\alphaph$ is not small, allowing for the description of nonperturbative physical events. 

The rest of the paper is devoted to the proofs of Theorems \ref{Thm:asymptotics} and \ref{Thm:prop_nu_n}.

\section{Some preliminary results}
\label{sec:U_F}

In this section we state two preliminary results that will be useful in the proof of our main results, Theorems \ref{Thm:asymptotics} and \ref{Thm:prop_nu_n}. The corresponding lengthy calculations will be provided later in Appendices \ref{sec:appendix}, \ref{sec:proof_Prop:ULU} and \ref{sec:proof_estim_F}.

\subsubsection*{Notation} In the whole paper we use the notation $E(r)=(1+\vert r\vert^2)^{1/2}$ for $r\in\R^3$ or $r\in\R$.

\subsection{The Uehling multiplier $U$}
\label{sec:U}

The operator $\cU$, defined previously as the multiplication by the function $U$ in the Fourier domain, plays a major role in the definition of the sequence $\{\nu_n\}$.
In this section, we provide precise estimates quantifying the convergence of $U_\Lambda$ towards $U$ when $\Lambda \to \ii$, 
which will be very useful in the proof of Theorem \ref{Thm:asymptotics}. 

\begin{prop}
\label{Prop:ULU}
Let $\Lambda \geq 1$ and denote by $U_\Lambda$, the function defined on $\R_+$ by
\begin{equation}
\label{def:UL}
U_\Lambda(r) = \left\{\begin{array}{ll}
B_\Lambda - B_\Lambda(r)&\text{when $0\leq r\leq 2\Lambda$,}\\
0 & \text{otherwise.}
\end{array}\right.
\end{equation}
Then,  for all $r \in \R_+$ it holds $\lim_{\Lambda\to\ii}U_\Lambda(r) =U(r)$.
Moreover, for $\kappa_0 = {15 \pi}/{2}$,
\begin{equation}
\label{estim:ULU}
\forall m\geq0,\qquad\norm{\frac{U_\Lambda - U}{(1 + U)^{m + 1}}}_{L^\infty} \leq \kappa_0^{m + 3} \max \Big\{ \frac{1}{(1 + B_\Lambda)^m}, \frac{1}{E(2 \Lambda)} \Big\}.
\end{equation}
Finally, one has for a universal constant $\kappa_1$ (given in Lemma \ref{Lem:ULU} below)
\begin{equation}
\label{elise}
\forall 0 \leq r \leq 2 \Lambda,\qquad 0 \leq U_\Lambda(r) \leq \kappa_1 \big( 1 + U(r) \big).
\end{equation}
\end{prop}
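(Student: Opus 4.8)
The plan is to prove the three assertions of Proposition~\ref{Prop:ULU} separately, treating the two integral terms in the definition \eqref{expression_B} of $B_\Lambda(k)$ independently, since the first term converges (after subtracting the $r=0$ value) to the explicit integral defining $U$ while the second term is a pure cut-off artifact that vanishes as $\Lambda\to\ii$.

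\textbf{Pointwise convergence.} First I would write $U_\Lambda(r)=B_\Lambda-B_\Lambda(r)$ as a difference of the value at $r$ and the value at $0$ of each of the two terms in \eqref{expression_B}. For the first term, note that $Z_\Lambda(r)\to1$ as $\Lambda\to\ii$ (a direct computation with $Z_\Lambda(r)=(\sqrt{1+\Lambda^2}-\sqrt{1+(\Lambda-r)^2})/r$), and the integrand converges locally uniformly on $[0,1)$; the subtraction of the $r=0$ value kills the logarithmic divergence of $\int_0^1 (z^2-z^4/3)/(1-z^2)\,dz$, so by dominated convergence one recovers exactly the integral in \eqref{def:U}. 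For the second term, observe that $\sqrt{1+\Lambda^2}-|k|z/2\geq \sqrt{1+\Lambda^2}-\Lambda\to\ldots$ — more simply it is bounded below by a constant multiple of $\Lambda$ on the relevant range, while $Z_\Lambda\leq 1$, so the whole second term is $O(|k|/\Lambda)\to 0$; its value at $r=0$ is $0$. This gives $\lim_\Lambda U_\Lambda(r)=U(r)$ for each fixed $r$.

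\textbf{Quantitative bound \eqref{estim:ULU}.} This is the main obstacle and the heart of the proposition, because the bound must be uniform in $r$ after dividing by $(1+U(r))^{m+1}$, and $U(r)$ grows like $\tfrac{1}{3\pi}\log r$ for large $r$. The strategy is a case split: for $0\le r\le 2\Lambda$ write $U_\Lambda(r)-U(r)$ as the sum of (a) the error in replacing $Z_\Lambda(r)$ by $1$ in the first term, (b) the error from the denominator $1+|k|^2(1-z^2)/4$ being cut at $z=Z_\Lambda(r)$ rather than $z=1$, and (c) the second term of \eqref{expression_B} evaluated at $r$. Each piece must be controlled by $\kappa_0^{m+3}\max\{(1+B_\Lambda)^{-m},E(2\Lambda)^{-1}\}$. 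The key analytic inputs are: $1-Z_\Lambda(r)$ is comparable to $(1+(\Lambda-r)^2)/\text{(something)}$, which one must compare against both $E(2\Lambda)^{-1}$ and the growth of $U(r)$; and the logarithmic growth of $U(r)$ must be leveraged so that dividing by $(1+U(r))^{m+1}$ produces the $(1+B_\Lambda)^{-m}$ factor — here one uses that $B_\Lambda=U_\Lambda(2\Lambda)$ is itself $\sim\tfrac{2}{3\pi}\log\Lambda$, so $(1+U(r))/(1+B_\Lambda)$ is controlled. For $r>2\Lambda$ one has $U_\Lambda(r)=0$ and must bound $U(r)/(1+U(r))^{m+1}\le (1+U(r))^{-m}$, then compare $1+U(r)$ with $E(2\Lambda)$ using the logarithmic lower bound on $U$. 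I expect this to require careful but elementary estimates on the function $Z_\Lambda$ and on the elementary integrals, deferred to Lemma~\ref{Lem:ULU} in the appendix; the delicate point is matching the two competing terms in the $\max$.

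\textbf{Upper bound \eqref{elise}.} Finally, for $0\le r\le 2\Lambda$ one shows $U_\Lambda(r)\le\kappa_1(1+U(r))$ by splitting again into the two terms of $B_\Lambda-B_\Lambda(r)$: the first difference is dominated by $U(r)$ plus a bounded remainder (the $Z_\Lambda$-truncation only decreases the integral, so the first term of $B_\Lambda-B_\Lambda(r)$ is at most $U(r)$ up to the denominator-truncation correction, which is $O(1)$ uniformly), while the second term of $B_\Lambda(0)$ is $0$ and the second term of $B_\Lambda(r)$ is nonnegative and bounded — actually one needs $|{-}(\text{2nd term at }r)|\le \kappa_1$, which follows from $Z_\Lambda\le 1$, $|z-z^3/3|\le 2/3$, and $\sqrt{1+\Lambda^2}-|k|z/2\gtrsim\sqrt{1+\Lambda^2}$ on $r\le 2\Lambda$, so the second term at $r$ is $O(r/\sqrt{1+\Lambda^2})=O(1)$. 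Collecting these gives \eqref{elise} with an explicit universal $\kappa_1$. All three parts reduce to the single technical Lemma~\ref{Lem:ULU}, whose proof is the computational core and is postponed to Appendix~\ref{sec:proof_Prop:ULU}.
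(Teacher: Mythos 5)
Your approach matches the paper's: the same three-piece decomposition of $U_\Lambda - U$ on $[0,2\Lambda]$ (truncation of the $U$-integral at $Z_\Lambda(0)=\Lambda/E(\Lambda)$ rather than $1$, the residual strip $[Z_\Lambda(r),\Lambda/E(\Lambda)]$, and the second, cut-off-artifact term of $B_\Lambda(r)$), the same case split at $r=2\Lambda$, the same exploitation of the logarithmic growth of both $U$ and $B_\Lambda$, and the same deferral of the computational core to a technical Lemma~\ref{Lem:ULU}. The one thing to make sharp is the \emph{form} of that lemma, namely the clean pointwise bound $|U_\Lambda(r)-U(r)|\leq\kappa_1\,r/(2E(\Lambda))$ on $[0,2\Lambda]$: from it, \eqref{estim:ULU} follows by bounding $r\leq E(r)$, invoking $1+U(r)\geq\frac{2}{15\pi}(1+\log E(r))$ from \eqref{UE}, optimizing $E(r)/(1+\log E(r))^{m+1}$ over $[0,2\Lambda]$ (whose maximum is $\max\{1,E(2\Lambda)/(1+\log E(2\Lambda))^{m+1}\}$), and using $B_\Lambda\leq\frac{4}{3\pi}\log E(2\Lambda)$, so the ``delicate matching'' you anticipate is automatic; and \eqref{elise} drops out of the same lemma in one line because $r/(2E(\Lambda))\leq 1$ on that range.
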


Proposition \ref{Prop:ULU} is proved in Appendix \ref{sec:proof_Prop:ULU}. Note that the uniform estimate \eqref{estim:ULU} will later yield our estimate \eqref{estim_nu_nnu_n_Lambda_intro} on $\nu_{n,\Lambda}-\nu_n$ (see Lemma \ref{Lem:diff_nu}). More properties of $U$ and $U_\Lambda$ are provided later in Appendix \ref{sec:appendix}.

\subsection{The nonlinear terms $F_{n, \Lambda}$ and $F_{n}$}
\label{sec:F}

In this section, we provide estimates on the functions $F_{n, \Lambda}$ and $F_{n}$, which will be one main ingredient in the proof of Theorem \ref{Thm:asymptotics}. We recall that $F_{2 n, \Lambda} = F_{2 n} = 0$ by Furry's theorem (see \cite[p. 547]{HaiLewSer-05a}). In order to state our main result, we introduce the functions
\begin{equation}
\label{def:FnLe}
F_{n, \Lambda}^{\epsilon}(\mu) = \rho \bigg[ \frac{1}{2 \pi} \int_{- \infty}^{\ii} \frac{1}{D^0 + i \eta} \prod_{j = 1}^n \Big( \Pi_\Lambda^{(\epsilon_j)} \mu_j \ast \frac{1}{|x|} \Pi_\Lambda^{(\epsilon_{j + 1})} \frac{1}{D^0 + i \eta} \Big) d\eta \bigg],
\end{equation}
for any $n \geq 3$, $\mu = (\mu_1, \cdots, \mu_n) \in \cC^n$ and $\epsilon = (\epsilon_1, \cdots, \epsilon_{n + 1}) \in \{- 1, 0, 1 \}^{n + 1}$. Here, we have used the notation
\begin{equation}
\label{def:PiL}
\Pi_\Lambda^{(1)} := \Pi_\Lambda, \quad \Pi_\Lambda^{(- 1)} := 1 - \Pi_\Lambda \quad {\rm and} \quad \Pi_\Lambda^{(0)} := 1 = \Pi_\Lambda^{(1)} +\Pi_\Lambda^{(- 1)}.
\end{equation}
The main result of this section is the following

\begin{prop}[Estimates on $F_{n, \Lambda}^\epsilon$]
\label{Prop:estimF}
Let $m \in \N$, $\Lambda \geq 1$ and $\epsilon \in \{ - 1, 0, 1 \}^{n + 1}$. Assume that $n \geq 3$. Then, there exist universal constants $C$ and $K$ such that
\begin{equation}
\label{estim:F}
\norm{(1 + \cU)^m F_{n, \Lambda}^\epsilon(\mu)}_{L^2 \cap \cC} \leq \frac{C^n (K \log n)^m}{\Lambda^{n(\epsilon)/{24}}} \prod_{j = 1}^n \norm{(1 + \cU)^m \mu_j}_{\cC},
\end{equation}
for all $\mu = (\mu_1, \cdots, \mu_n) \in \cC^n$. Here, $n(\epsilon) = 1$, if at least one $\epsilon_j$ is equal to $- 1$, and $n(\epsilon) = 0$ otherwise.
\end{prop}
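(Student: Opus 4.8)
\textbf{Proof proposal for Proposition \ref{Prop:estimF}.}

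The plan is to estimate the operator inside the density $\rho[\cdot]$ defining $F_{n,\Lambda}^\epsilon(\mu)$ by working in Fourier space, where the density $\rho_A(k)$ of an operator $A$ with kernel $A(p,q)$ is $(2\pi)^{-3/2}\int \tr_{\C^4}\widehat{A}(k+q,q)\,dq$. First I would expand the product of resolvents and Coulomb potentials, so that $\widehat{F_{n,\Lambda}^\epsilon(\mu)}(k)$ becomes an $n$-fold convolution: there is an integration over $\eta\in\R$ and over intermediate momenta $p_1,\dots,p_{n-1}$, with integrand a product of factors $\widehat{\mu_j}(\ell_j)/|\ell_j|^2$ (the Coulomb kernels, where $\ell_j$ are momentum differences), the cut-off characteristic functions coming from $\Pi_\Lambda^{(\epsilon_j)}$, and $n+1$ resolvent factors $(D^0(p_i)+i\eta)^{-1}$, each bounded in operator norm by $(E(p_i)^2+\eta^2)^{-1/2}$. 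The key analytic input is the bound on the $\eta$-integral of the product of resolvent norms, $\int_\R \prod_{i=0}^{n}(E(p_i)^2+\eta^2)^{-1/2}\,d\eta$, which by the standard argument (as in \cite{HaiLewSer-05a}) is controlled by $C^n\prod_i E(p_i)^{-1}$ times a logarithmic factor, and crucially the Dirac trace of the leading term vanishes so that one gains an extra decay (this is what makes $F_{n,\Lambda}$ finite); Furry's theorem kills the even-$n$ terms entirely.

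The second ingredient is the gain $\Lambda^{-n(\epsilon)/24}$ when at least one $\epsilon_j=-1$. Here I would use that $\Pi_\Lambda^{(-1)}=1-\Pi_\Lambda$ forces the corresponding momentum $|p_i|\geq\Lambda$, so on that factor one has $E(p_i)^{-1}\leq \Lambda^{-1}E(p_i)^{-1/2}\cdot\Lambda^{-1/2}$ — more precisely one trades a power $\Lambda^{-\theta}$ against a loss of $E(p_i)^{-(1-\theta)}$ in the decay available for the momentum integrations, and the exponent $1/24$ is simply a safe choice that leaves enough decay to still perform all the remaining convolutions in $L^2\cap\cC$ via Young- and Hardy–Littlewood–Sobolev-type inequalities (the $|\ell_j|^{-2}$ factors are handled by the $\cC$-norm of the $\mu_j$, exactly as in the bound $\|\,f\ast|x|^{-1}\|_{\dots}\lesssim\|f\|_\cC$). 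Then I would incorporate the weight $(1+\cU)^m$: since $U(|k|)\sim \frac{1}{3\pi}\log|k|$ at infinity, $(1+\cU)^m$ is comparable to $(\log(2+|k|))^m$, and because $|k|\leq \sum_j|\ell_j|+\text{(bounded momenta)}$ along the convolution, $\log(2+|k|)\lesssim \log n+\sum_j\log(2+|\ell_j|)$; distributing this over the $n$ factors (using $\log(\sum a_j)\leq \log n+\sum\log a_j$ repeatedly) produces the claimed $(K\log n)^m\prod_j\|(1+\cU)^m\mu_j\|_\cC$, the $(\log n)^m$ coming from the $n$-fold splitting of the single logarithm.

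The main obstacle I anticipate is \emph{bookkeeping the combinatorics of the weight transfer while keeping the constant in the form $C^n(K\log n)^m$}: one must show that pushing the logarithmic weight $(1+U(|k|))^m$ through the $n$-fold convolution costs only a factor $(K\log n)^m$ and an $n$-dependent constant of the pure-exponential type $C^n$, rather than something like $C^{nm}$ or $n!$. This requires being careful that each of the $m$ copies of the weight is split across the $n$ momentum-difference variables independently and that the resulting $\sim n^m$ terms are each bounded by the \emph{same} product of $\cC$-norms with an $n$-independent per-term constant, so the $n^m=(n)^m\leq(K\log n)^m\cdot(\text{something}\leq C^n)$ — in fact one absorbs $n^m$ into $C^n$ for $n$ large and into the constant otherwise, but making this rigorous uniformly in both $m$ and $n$ is the delicate point. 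A secondary technical nuisance is checking that all the momentum integrals genuinely converge after the $\Lambda^{-n(\epsilon)/24}$ trade — i.e. that subtracting $1/24$ worth of decay from each of finitely many factors still leaves integrands in the right Lebesgue spaces — which forces one to track exactly how much decay each resolvent factor contributes after the Dirac-trace cancellation. These computations are exactly what is deferred to Appendix \ref{sec:proof_estim_F}.
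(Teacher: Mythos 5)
Your proposal contains several of the right ingredients (Fourier representation of the density, splitting a logarithmic weight across convolution variables, trading the cut-off $\Pi_\Lambda^{(-1)}$ for a power of $\Lambda$), but there are genuine gaps relative to what the paper actually does. First and most importantly, the paper's proof does \emph{not} proceed by direct pointwise estimation of the momentum integrals followed by Young/HLS; it tests $F_{n,\Lambda}^\epsilon(\mu)$ against a smooth $\zeta$, writes $\int\zeta F_{n,\Lambda}^\epsilon(\mu)=\tr(Q_{n,\Lambda}^\epsilon\zeta)$, bounds the Fourier integrand by a trace of products of operators of the form $f_\beta^{(\epsilon)}(-i\nabla)\,\psi_j(x)\,f_{\beta'}^{(\epsilon')}(-i\nabla)$, and then applies H\"older's inequality in Schatten classes together with the Kato--Seiler--Simon inequality \eqref{KSS}. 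That Schatten-space machinery is the actual workhorse; your description of the ``$\eta$-integral of the product of resolvent norms'' as the key analytic input, with a bound like $\int_\R\prod_i(E(p_i)^2+\eta^2)^{-1/2}d\eta\lesssim C^n\prod_i E(p_i)^{-1}$ up to a logarithm, is not what one gets (and is in fact false: a factor of $E$ is lost relative to $\prod_iE(p_i)^{-1}$). Similarly, the remark that ``the Dirac trace of the leading term vanishes'' and that ``this is what makes $F_{n,\Lambda}$ finite'' is misplaced: for $n\geq5$ no such cancellation is used or needed, and the only cancellation in the proof occurs in the separate $n=3$ computation via the residuum formula, where the terms $\delta=\pm(1,1,1,1)$ in \eqref{fourberies} vanish trivially.

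Second, your treatment of the $(K\log n)^m$ factor has a real gap. You propose to split $\log(2+|k|)\lesssim\log n+\sum_j\log(2+|\ell_j|)$ and then worry that this produces $\sim n^m$ terms, which you suggest ``absorbing $n^m$ into $C^n$ for $n$ large.'' That absorption is impossible uniformly in $m$ (for any fixed $C$, $n^m>C^n$ for $m$ large), so as stated your argument cannot deliver a bound of the claimed form with universal constants $C,K$. The paper instead proves a single multiplicative inequality, Lemma \ref{Lem:U-prod}, $1+U(|\sum_jv_j|)\leq K\log n\,\prod_j(1+U(|v_j|))$, which one raises to the $m$-th power with no combinatorial overcounting; this is the device you are missing. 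Finally, the proof in the paper requires a three-way case split (Step \ref{F1}: $\cC$-norm for $n\geq5$ with gain $\Lambda^{-1/2}$; Step \ref{F2}: $L^2$-norm for $n=3$ or $n\geq5$ with gain $\Lambda^{-1/7}$; Step \ref{F3}: $\cC$-norm for $n=3$ with gain $\Lambda^{-1/24}$, via the explicit residuum computation and the operator $M_{\phi_1}$ of \cite{HaiLewSer-05a}). The exponent $1/24$ in the statement is not ``a safe choice'' applied uniformly to all $n$; it is the weakest of the three exponents, forced by the $n=3$, $\cC$-norm case where the available decay per resolvent factor is tightest. Your proposal does not anticipate this split, and a single uniform scheme of the kind you sketch would not reproduce the correct exponents.
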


By \eqref{def:FnL}, \eqref{def:FnLe} and \eqref{def:PiL}, we can write $F^{(1, \cdots, 1)}_{n, \Lambda} = F_{n, \Lambda}$ and $F^{(0, \cdots, 0)}_{n, \Lambda} = F_{n}$. Therefore the following is a byproduct of \eqref{estim:F}:

\begin{corollary}
\label{Cor:estimF}
Let $m \geq0$, $\Lambda \geq 1$ and $n \geq3$ an odd integer. Then,
\begin{multline}
\label{estim:FnL}
\max \Big\{ \norm{(1 + \cU)^m F_{n, \Lambda}(\mu)}_{L^2 \cap \cC}, \norm{(1 + \cU)^m F_{n}(\mu)}_{L^2 \cap \cC} \Big\}\\
\leq C^{n} (K \log n )^m \prod_{j = 1}^{n} \norm{(1 + \cU)^m \mu_j}_{\cC},
\end{multline}
for any $\mu = (\mu_1, \cdots, \mu_{n}) \in \cC^{n}$. Here, $C$ and $K$ refer to the universal constants given by Proposition \ref{Prop:estimF}. In particular, the functions $F_{n, \Lambda}$ and $F_{n}$ are continuous on $\cC^{n}$ with values in $L^2 \cap \cC$.
\end{corollary}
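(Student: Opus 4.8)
The plan is to obtain Corollary~\ref{Cor:estimF} simply by specializing the sign vector $\epsilon$ in Proposition~\ref{Prop:estimF}. First I would record the two identifications $F_{n,\Lambda}^{(1,\dots,1)} = F_{n,\Lambda}$ and $F_{n,\Lambda}^{(0,\dots,0)} = F_n$, which follow at once from \eqref{def:FnL}, \eqref{def:FnLe} and \eqref{def:PiL}: when every $\epsilon_j = 1$ each $\Pi_\Lambda^{(\epsilon_j)}$ equals $\Pi_\Lambda$, so \eqref{def:FnLe} reduces verbatim to \eqref{def:FnL}; and when every $\epsilon_j = 0$ each $\Pi_\Lambda^{(\epsilon_j)}$ equals $1$, so every cut-off projector disappears and what remains is precisely the ($\Lambda$-independent) operator defining $F_n$.

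The point is then that for both of these choices of $\epsilon$ no component equals $-1$, so $n(\epsilon) = 0$ and the cut-off gain $\Lambda^{-n(\epsilon)/24}$ in \eqref{estim:F} is just $1$. Feeding $\epsilon = (1,\dots,1)$ and then $\epsilon = (0,\dots,0)$ into Proposition~\ref{Prop:estimF} thus yields, with the same universal constants $C$ and $K$,
\[
\norm{(1 + \cU)^m F_{n, \Lambda}(\mu)}_{L^2 \cap \cC} \leq C^{n} (K \log n)^m \prod_{j = 1}^{n} \norm{(1 + \cU)^m \mu_j}_{\cC}
\]
together with the identical bound for $F_n$. Since the maximum of two quantities each dominated by the same right-hand side is dominated by that right-hand side, this is exactly \eqref{estim:FnL}. (Oddness of $n$ is used nowhere: for even $n$ one has $F_{n,\Lambda} = F_n = 0$ by Furry's theorem and the estimate is trivial; I keep the hypothesis only because odd $n$ is the case relevant to the recursions \eqref{def:nu_nL} and \eqref{def:nu_n}.)

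For the continuity statement I would observe that $(\mu_1,\dots,\mu_n)\mapsto F_{n,\Lambda}(\mu_1,\dots,\mu_n)$, and likewise $(\mu_1,\dots,\mu_n)\mapsto F_n(\mu_1,\dots,\mu_n)$, is $n$-linear, since each argument enters \eqref{def:FnL} through exactly one factor $\mu_j \ast |x|^{-1}$; taking $m = 0$ in \eqref{estim:FnL} bounds each of these multilinear maps from $\cC^n$ into $L^2\cap\cC$, and a bounded multilinear map is continuous. I do not expect any genuine obstacle here: the whole analytic content — the resolvent expansion, the contour integration in $\eta$, the $L^2\cap\cC$ estimates with the weight $(1+\cU)^m$, and the quantitative power of $\Lambda$ — is already packaged in Proposition~\ref{Prop:estimF} (proved in Appendix~\ref{sec:proof_estim_F}); the only step that calls for any care is the bookkeeping needed to match $F_{n,\Lambda}^{(1,\dots,1)}$ and $F_{n,\Lambda}^{(0,\dots,0)}$ with $F_{n,\Lambda}$ and $F_n$.
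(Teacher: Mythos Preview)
Your argument is correct and matches the paper's own reasoning: the corollary is obtained by specializing $\epsilon=(1,\dots,1)$ and $\epsilon=(0,\dots,0)$ in Proposition~\ref{Prop:estimF}, noting $n(\epsilon)=0$ in both cases. The continuity via boundedness of the multilinear map is the intended consequence as well.
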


Recall $F_{2k}=F_{2k,\Lambda}=0$ hence only the case of $n$ being an odd integer is relevant. The estimates of Proposition \ref{Prop:estimF} are an adaptation of ideas of \cite{HaiLewSer-05a}, in which similar bounds were computed (see, e.g.,  Lemmas 15 and 16 in \cite{HaiLewSer-05a}). Notice however that the projector $\Pi_\Lambda$ was never mentionned in \cite{HaiLewSer-05a} since $\Lambda$ was a fixed number. We focus here on the limit $\Lambda \to \ii$ and we need to quantify the dependence on $\Lambda$ of the estimates on the functions $F_{n, \Lambda}^\epsilon$.
The proof of Proposition \ref{Prop:estimF} is provided below in Appendix \ref{sec:proof_estim_F}. The factor $(K\log n)^m$ comes from \eqref{U:prod} of Lemma \ref{Lem:U-prod} and the constant $K$ is also the one appearing in Theorem \ref{Thm:prop_nu_n}.

\section{Proof of Theorem \ref{Thm:prop_nu_n}}
\label{sec:proof_thm_prop_nu_n}

This section is devoted to the proof of our estimate \eqref{estim:nu_n} on the $n$th order density $\nu_n$. The definition of $\nu_{n, \Lambda}$ being very similar to that of $\nu_n$, our proof also provides the following

\begin{prop}[Estimates on $\nu_{n, \Lambda}$]
\label{Prop:nu_nL}
There exists $A>0$ such that
\begin{multline}
\label{estim:nu_nL}
 \norm{(1 + \cU)^{m - n} \nu_{n, \Lambda}}_{L^2 \cap \cC}\\
\leq A^{n + 1} \max \left\{ \| (1 + \cU)^m \nu \|_{L^2 \cap \cC}, \big(K\log(m)\big)^\frac{m n}{2} \| (1 + \cU)^m \nu \|_{L^2 \cap \cC}^{n + 1} \right\},
\end{multline}
for any $\Lambda \geq 1$, $m \in \N$ and $0 \leq n \leq m$.
\end{prop}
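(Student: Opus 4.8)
The plan is to establish \eqref{estim:nu_n} by induction on $n$ at each fixed $m$, working directly from the recursion \eqref{def:nu_n} (and, for Proposition~\ref{Prop:nu_nL}, from \eqref{def:nu_nL}). It is convenient to abbreviate $X:=\norm{(1+\cU)^m\nu}_{L^2\cap\cC}$ and $Y:=(K\log m)^{m/2}X$, so that the quantity to be bounded is $\Phi_n:=A^{n+1}X\max\{1,Y^n\}$. Two elementary observations will be used repeatedly: $M_l:=X\max\{1,Y^l\}$ is nondecreasing in $l$, and $\prod_{i=1}^j\max\{1,Y^{n_i}\}=\max\{1,Y^{n_1+\cdots+n_j}\}$. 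The cases $n\in\{0,1,2\}$ are trivial, because there $\nu_n=\cU^n\nu$ and the Fourier multiplier $(1+\cU)^{m-n}\cU^n$ is pointwise $\leq(1+\cU)^m$, so the left side of \eqref{estim:nu_n} is $\leq X\leq\Phi_n$ once $A\geq1$.

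For $n\geq3$ I would split $\nu_n$ into the linear term $\cU\nu_{n-1}$ and the nonlinear terms $F_j(\nu_{n_1},\dots,\nu_{n_j})$, with $j\in\{3,5,\dots\}$ odd (the even ones vanish by Furry's theorem) and $n_1+\cdots+n_j=n-j$, so that in particular each $n_i\leq n-3<n\leq m$ (and $m\geq3$). For the linear term, $(1+\cU)^{m-n}\cU\leq(1+\cU)^{m-(n-1)}$ together with the induction hypothesis at $n-1$ gives $\norm{(1+\cU)^{m-n}\cU\nu_{n-1}}_{L^2\cap\cC}\leq\Phi_{n-1}\leq A^{-1}\Phi_n$ (using $M_{n-1}\leq M_n$), which is $\leq\tfrac14\Phi_n$ as soon as $A\geq4$. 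For each nonlinear term I would apply Corollary~\ref{Cor:estimF} with the exponent $m-n$, then bound each input by $\norm{(1+\cU)^{m-n}\nu_{n_i}}_{\cC}\leq\norm{(1+\cU)^{m-n_i}\nu_{n_i}}_{L^2\cap\cC}\leq\Phi_{n_i}$ (again via the induction hypothesis, which is legitimate since $n_i<n$), obtaining
\[
\norm{(1+\cU)^{m-n}F_j(\nu_{n_1},\dots,\nu_{n_j})}_{L^2\cap\cC}\leq C^j(K\log j)^{m-n}A^{n}X^{j}\max\{1,Y^{n-j}\},
\]
where $\prod_i A^{n_i+1}=A^n$ because $\sum_i(n_i+1)=n$.

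The computational core is then to check that the ratio of the right side above to $\Phi_n$ is at most $A^{-1}C^j(K\log m)^{m(3-j)/2-n}$, in \emph{both} regimes $Y\geq1$ and $Y<1$. This uses $\log j\leq\log m$, the inequality $j\geq3$ (so that $X^j$ carries at least three more powers of $X$ than the single one in $\Phi_n$), and the defining identity $Y=(K\log m)^{m/2}X$ — together with the lower bound $X\geq(K\log m)^{-m/2}$ valid precisely when $Y\geq1$ — to convert those extra powers of $X$ into negative powers of $(K\log m)$. Summing over the $\binom{n-1}{j-1}\leq2^{n-1}$ compositions and over odd $j\in\{3,\dots,n\}$, the $j=3$ part is at most a universal constant times $A^{-1}C^3 n^2(K\log m)^{-n}\Phi_n$, and the $j\geq5$ part at most a universal constant times $A^{-1}(2C)^n(K\log3)^{-m-n}\Phi_n$. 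Choosing $K$ universal and so large that $K\log3\geq\max\{2,4C\}$ makes the geometric factors dominate the combinatorial ones, so that the whole nonlinear contribution is $\leq\tfrac14\Phi_n$ once $A$ is taken large enough (still universal: it suffices that $A$ exceed $4$, $4C/(C-1)$, and $8C^3\sum_{k\geq0}k^2(K\log3)^{-k}$). Adding the linear and nonlinear parts gives $\norm{(1+\cU)^{m-n}\nu_n}_{L^2\cap\cC}\leq\tfrac12\Phi_n\leq\Phi_n$, closing the induction.

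For Proposition~\ref{Prop:nu_nL} one runs the identical argument with $\cU_\Lambda$ in place of $\cU$ inside the recursion \eqref{def:nu_nL}; the only change is in the linear step, where \eqref{elise} gives $0\leq U_\Lambda\leq\kappa_1(1+U)$, hence $(1+\cU)^{m-n}\cU_\Lambda\leq\kappa_1(1+\cU)^{m-(n-1)}$, costing a factor $\kappa_1$ per step that is absorbed by enlarging $A$ to $A\geq4\kappa_1$; the functions $F_{j,\Lambda}$ are controlled by the same Corollary~\ref{Cor:estimF}, and $\norm{(1+\cU)^m\nu_\Lambda}_{L^2\cap\cC}\leq X$ since $\widehat{\nu_\Lambda}$ is a truncation of $\widehat\nu$. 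The step I expect to require the most care is the third paragraph: the nonlinear terms must be estimated \emph{uniformly} in the two regimes $Y\gtrless1$ (the small-$\nu$ regime, where the first entry of the $\max$ governs, and the large-$\nu$ regime, where the second does), and one must exploit that the combinatorial weight $j$ of each nonlinear vertex $F_j$ is exactly compensated by the order constraint $\sum_i n_i=n-j$, so that the negative powers of $(K\log m)$ thereby produced genuinely outrun both the number $\binom{n-1}{j-1}$ of compositions and the constants $C^j$.
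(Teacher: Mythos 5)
Your proof is correct, but it takes a genuinely different route from the paper's. You carry out a direct strong induction on $n$ at fixed $m$, with the explicit multiplicative ansatz $\Phi_n = A^{n+1}\|(1+\cU)^m\nu\|\max\{1,Y^n\}$ where $Y=(K\log m)^{m/2}\|(1+\cU)^m\nu\|$, controlling each nonlinear vertex $F_j(\nu_{n_1},\dots,\nu_{n_j})$ composition-by-composition via the identity $\prod_i\max\{1,Y^{n_i}\}=\max\{1,Y^{n-j}\}$ and the crude combinatorial count $\binom{n-1}{j-1}\leq 2^{n-1}$. The paper instead bundles all the coefficients $J_{m,n}=\|(1+\cU)^{m-n}\nu_n\|_{L^2\cap\cC}$ into a generating polynomial $P_m(t)=\sum_{n=0}^m J_{m,n}t^n$, derives the single functional inequality $P_m(t)\leq(1+t+t^2)\|(1+\cU)^m\nu\|_{L^2\cap\cC}+\cQ_m(tP_m(t))$ with $\cQ_m(u)=u+\sum_{j\geq3}C^j(K\log j)^{m-j}u^j$, and then extracts coefficients by the scaling/dichotomy argument at the critical radius $T_m$ where $T_mP_m(T_m)$ reaches the threshold $U_m=A(C,K)(K\log m)^{-m/2}$. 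These buy essentially the same estimate in different ways: the generating function absorbs the composition sum automatically into the powers $P_m(t)^j$, so the paper never counts compositions explicitly nor separates the regimes $Y\gtrless1$ (the dichotomy $T_m\gtrless1/4$ plays that role instead); your induction is more elementary and makes the origin of the factor $(K\log m)^{mn/2}$ quite transparent, at the cost of heavier bookkeeping. One point you should state explicitly rather than take silently is that enlarging $K$ is legitimate: both Proposition~\ref{Prop:estimF} and the conclusion of Proposition~\ref{Prop:nu_nL} are monotone in $K$ (for $n\geq2$), so replacing $K$ by $\max\{K,\ e^{\max\{2,4C\}}\}$ is WLOG — the paper avoids this by keeping $K$ fixed and letting the constant $A(C,K)$ carry the dependence. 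With that caveat made explicit, your argument closes.
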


We postpone the proof of Proposition \ref{Prop:nu_nL} and first complete that of Theorem \ref{Thm:prop_nu_n}.

\begin{proof}[Proof of Theorem \ref{Thm:prop_nu_n}]
We split the proof into three steps. First, we estimate by means of \eqref{estim:FnL}, the following norms:
$J_{m, n} := \| (1 + \cU)^{m - n} \nu_n \|_{L^2 \cap \cC}$.

\setcounter{step}{0}
\begin{step}
\label{P1}
Let $m \in \N$ and denote
$$P_m(t) := \sum_{n = 0}^m J_{m, n} t^n.$$
The polynomial $P_m(t)$ satisfies for any $t \geq 0$
\begin{equation}
\label{haiti}
P_m(t) \leq \big( 1 + t + t^2) \| (1 + \cU)^m \nu \|_{L^2 \cap \cC} + \cQ_m(t P_m(t)),
\end{equation}
where ($C$ and $K$ are the constants of Proposition \ref{Prop:estimF})
\begin{equation}
\label{def:Q_m}
\cQ_m(u) := u + \sum_{j = 3}^m C^j (K \log j )^{m - j} u^j.
\end{equation}
\end{step}

Let us assume first that $n=0,1,2$. By \eqref{def:nu_n}, we then have $\nu_n= \cU^n \nu$, hence
\begin{equation}
\label{port}
\forall n=0,1,2,\qquad J_{m, n} = \| (1 + \cU)^{m - n} \cU^n \nu \|_{L^2 \cap \cC} \leq \| (1 + \cU)^m \nu \|_{L^2 \cap \cC}.
\end{equation}
We now turn to the case $n \geq 3$. By \eqref{def:nu_n}, we have
\begin{multline*}
J_{m, n} \leq  \| (1 + \cU)^{m - n} \cU \nu_{n - 1} \|_{L^2 \cap \cC}\\
+  \sum_{3 \leq 2 j + 1 \leq n} \  \sum_{\underset{k = 1}{\overset{2 j + 1}{\sum}} n_k = n - 2 j - 1} \| (1 + \cU)^{m - n} F_{2 j + 1} \big(\nu_{n_1}, \cdots, \nu_{n_{2 j + 1}} \big) \|_{L^2 \cap \cC},
\end{multline*}
hence, by Corollary \ref{Cor:estimF},
\begin{align*}
J_{m, n} \leq \| (1 + \cU)^{m - n + 1} & \nu_{n - 1} \|_{L^2 \cap \cC} + \sum_{3 \leq 2 j + 1 \leq n} \ \  \sum_{\underset{k = 1}{\overset{2 j + 1}{\sum}} n_k = n - 2 j - 1} C^{2 j + 1} \times\\
& \times (K\log (2 j + 1))^{m - n} \prod_{k = 1}^{2 j + 1} \| (1 + \cU)^{m - n} \nu_{n_k} \|_{\cC}.
\end{align*}
Since
$\| (1 + \cU)^{m - n} \nu_{n_k} \|_{\cC} \leq \| (1 + \cU)^{m - n_k} \nu_{n_k} \|_{L^2 \cap \cC}$,
we arrive at the inequality
\begin{equation}
\label{prince}
J_{m, n} \leq J_{m, n - 1} + \sum_{j = 3}^n C^j (K \log j)^{m - n} \sum_{\underset{k = 1}{\overset{j}{\sum}} n_k = n - j}  \bigg( \prod_{k = 1}^j J_{m, n_k} \bigg).
\end{equation}
Combining \eqref{port} with \eqref{prince}, we obtain
\begin{multline*}
P_m(t) \leq (1 + t + t^2) \| (1 + \cU)^m \nu \|_{L^2 \cap \cC} + t P_m(t)\\
+ \sum_{n = 3}^m \sum_{j = 3}^n C^j t^j (K \log j )^{m - n} \sum_{\underset{k = 1}{\overset{j}{\sum}} n_k = n - j}  \bigg( \prod_{k = 1}^j J_{m, n_k} t^{n_k} \bigg).
\end{multline*}
By Fubini's theorem, it holds
\begin{multline}
\label{preval}
P_m(t) \leq (1 + t + t^2) \| (1 + \cU)^m \nu \|_{L^2 \cap \cC} + t P_m(t)\\
+ \sum_{j = 3}^m C^j t^j (K\log j )^{m - j} \sum_{p = 0}^{m - j} \sum_{\underset{k = 1}{\overset{j}{\sum}} n_k = p}  \bigg( \prod_{k = 1}^j J_{m, n_k} t^{n_k} \bigg),
\end{multline}
Noticing that 
$$\sum_{p = 0}^{m - j} \sum_{\underset{k = 1}{\overset{j}{\sum}} n_k = p}  \bigg( \prod_{k = 1}^j J_{m, n_k} t^{n_k} \bigg) \leq P_m(t)^j,$$
we deduce \eqref{haiti} from \eqref{preval}. This completes the proof of Step \ref{P1}.
In the second step of the proof of Theorem \ref{Thm:prop_nu_n} we compute suitable bounds on $\cQ_m$ near the origin.

\begin{step}
\label{P2}
Let $m \geq 3$. There exists a positive constant $A(C, K)$, depending on $C$ and $K$, but not on $m$, such that
\begin{equation}
\label{estim:Q_m}
\cQ_m(u) \leq 2 u,\qquad \text{for any}\quad 0 \leq u \leq U_m := \frac{A(C, K)}{(K\log(m))^{m/2}}.
\end{equation}
\end{step}

By the definition \eqref{def:Q_m} of $\cQ_m$, we have
$$\cQ_m(u) \leq u + \sum_{j = 3}^m C^j (K\log(j))^{m - j} u^j \leq u + (K\log(m))^m \sum_{j = 3}^m \Big( \frac{C u}{K\log m} \Big)^j,$$
hence when $2 C u \leq K\log m$ and $2 C^3 (K \log(m))^{m - 3} u^2 \leq 1$, it holds $\cQ_m(u) \leq 2u$.
This ends the proof of Step \ref{P2}.

\begin{step}
\label{P3}
Conclusion of the proof of Theorem \ref{Thm:prop_nu_n}.
\end{step}

Since the coefficients $J_{m, n}$ are non negative, the function $t \mapsto t P_m(t)$ is either identically equal to $0$ (then \eqref{estim:nu_n} is straightforward), or increasing on $\R^+$. In the second case, it tends to $\ii$ as $t \to \ii$, and there exists a unique $T_m>0$ such that
\begin{equation}
\label{def:T_m}
T_m P_m(T_m) = U_m.
\end{equation}
Two situations may then occur. If $T_m \geq 1/4$, by \eqref{haiti} and \eqref{estim:Q_m},
$$P_m(t) \leq 2 \| (1 + \cU)^m \nu \|_{L^2 \cap \cC} + 2 t P_m(t)$$
for all $0 \leq t \leq 1/4$. Hence
$P_m(t) \leq 4 \| (1 + \cU)^m \nu \|_{L^2 \cap \cC}$ and
$$J_{m, n} \leq 4^n P_m \Big( \frac{1}{4} \Big) \leq 4^{n + 1} \| (1 + \cU)^m \nu \|_{L^2 \cap \cC}.$$
Otherwise $T_m \leq {1}/{4}$ and in this case we can deduce from \eqref{haiti}, \eqref{estim:Q_m} and \eqref{def:T_m} that ${U_m}/{T_m} \leq 2 \| (1 + \cU)^m \nu \|_{L^2 \cap \cC} + 2 U_m$.
This gives
$$T_m \geq \frac{U_m}{4 \| (1 + \cU)^m \nu \|_{L^2 \cap \cC}}.$$
Combining with \eqref{def:T_m} again, we are led to
$$J_{m, n} \leq \frac{U_m}{T_m^{n + 1}} \leq \frac{4^{n + 1}}{U_m^n} \| (1 + \cU)^m \nu \|_{L^2 \cap \cC}^{n + 1}.$$
Estimate \eqref{estim:nu_n} then follows from \eqref{estim:Q_m}.
\end{proof}

We now turn to the proof of Proposition \ref{Prop:nu_nL}.

\begin{proof}[Proof of Proposition \ref{Prop:nu_nL}]
The proof is almost identical. Denoting
$J_{m, n}^\Lambda := \| (1 + \cU)^{m - n} \nu_{n, \Lambda} \|_{L^2 \cap \cC}$
and introducing the polynomial function $P_m^\Lambda(t)$ given by
$$P_m^\Lambda(t) := \sum_{n = 0}^m J_{m, n}^\Lambda t^n,$$
we deduce from the definition \eqref{def:nu_nL}, and from \eqref{elise} and \eqref{estim:FnL} that
\begin{equation}
\label{domingue}
P_m^\Lambda(t) \leq \big( 1 + \kappa_1 t + \kappa_1^2 t^2) \| (1 + \cU)^m \nu \|_{L^2 \cap \cC} + (\kappa_1 - 1) t P_m^\Lambda(t) + \cQ_m(t P_m^\Lambda(t)),
\end{equation}
for all $t \geq 0$. Estimate \eqref{estim:nu_nL} then follows by applying to \eqref{domingue} the arguments of Steps \ref{P2} and \ref{P3} of the proof of Theorem \ref{Thm:prop_nu_n}.
\end{proof}

\section{Proof of Theorem \ref{Thm:asymptotics}}
\label{sec:proof_thm_asymptotics}

This last section is devoted to the proof of our main estimate \eqref{eq:asymptotics}. The proof relies on the identity
\begin{equation}
\label{eq:dvt_m}
\rhoph = \nu_\Lambda + \alphaph U_\Lambda \rhoph - \sum_{3 \leq 2 n + 1 \leq N} \alphaph^{2 n + 1} F_{2 n + 1, \Lambda}(\rhoph, \cdots, \rhoph) - \alphaph^{N + 1} G_{N + 1, \Lambda},
\end{equation}
where we denote
\begin{multline}
\label{def:G_mL}
G_{N + 1, \Lambda} := \rho \Bigg( \frac{1}{2\pi} \int_{-\infty}^{\ii} \frac{1}{D^0 - \alphaph \rhoph \ast |\cdot|^{- 1} + i \eta} \times\\
\times \prod_{j = 1}^{N + 1} \Big( \Pi_\Lambda \big( \rhoph \ast \frac{1}{|\cdot|} \big) \Pi_\Lambda \frac{1}{D^0 + i\eta} \Big) d\eta \Bigg).
\end{multline}
The formula \eqref{eq:dvt_m} follows from Cauchy's formula applied to \eqref{eq:SCF_recast2}. As mentioned in the introduction, the proof of \eqref{eq:asymptotics} naturally splits into two steps: we first establish that, under the assumptions of Theorem \ref{Thm:asymptotics}, the error term
\begin{equation}
\label{def:R_m}
R_N(\alphaph,\Lambda) := \rhoph(\alphaph,\Lambda) - \sum_{n = 0}^N \nu_{n, \Lambda} \alphaph^n.
\end{equation}
is controlled by a factor $\alphaph^{N + 1}$ (up to some multiplicative constant depending only on $N$, $\nu$ and $\epsilon$). In a second step we estimate the differences $\nu_{n, \Lambda} - \nu_n$ and deduce \eqref{eq:asymptotics}.
More precisely, the remainder $R_N$ satisfies the following

\begin{lemma}
\label{Lem:reste}
Let $N \in \N$ and $0 < \epsilon < 1$. Assume that $\epsilon \leq Z_3=1-\alphaph B_\Lambda \leq 1 - \epsilon$ and $\cN_N := \| (1 + \cU)^{N + 1} \nu \|_{L^2  \cap \cC} < \ii$. Then, there exist two constants $C(m, \epsilon, \cN_N)$ and $a(N, \epsilon, N_N)$, depending only on $N$, $\epsilon$ and $\cN_N$, such that
\begin{equation}
\label{estim:R_m}
\Big\| R_N(\alphaph,\Lambda) \Big\|_{L^2 \cap \cC} \leq C(N, \epsilon, \cN_N)\; \alphaph^{N + 1},
\end{equation}
for all $0 \leq \alphaph \leq a(N, \epsilon, \cN_N)$.
\end{lemma}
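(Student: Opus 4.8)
The plan is to set up a fixed-point / bootstrap argument for the remainder $R_N$ directly from the self-consistent equation \eqref{eq:dvt_m}, subtracting the defining recursion \eqref{def:nu_nL} for the partial sum $S_N := \sum_{n=0}^N \nu_{n,\Lambda}\alphaph^n$. First I would establish an a priori bound on $\rhoph(\alphaph,\Lambda)$ itself: applying $(1+\cU)^{N+1}$ to \eqref{eq:SCF_recast2}, using $\|\cU_\Lambda\|_{\gH_\Lambda}\leq B_\Lambda = (1-Z_3)/\alphaph \leq (1-\epsilon)/\alphaph$ so that $\alphaph\,\|\alphaph^{-1}(1-Z_3)\cU_\Lambda(1+\cU)^{-1}\| \leq 1-\epsilon < 1$, the linear operator $1-\alphaph\cU_\Lambda$ is invertible on the weighted space with norm of the inverse $\leq 1/\epsilon$; together with the estimate \eqref{estim:FnL} on the $F_{n,\Lambda}$'s (which gives analyticity of $\mu\mapsto F_\Lambda(\mu)$ near $0$ with radius and constants independent of $\Lambda$), a standard implicit-function/Banach-fixed-point argument shows that for $\alphaph\leq a$ (with $a$ depending on $N,\epsilon,\cN_N$ only) one has $\|(1+\cU)^{N+1}\rhoph\|_{L^2\cap\cC}\leq 2\cN_N/\epsilon$, say, uniformly in $\Lambda$. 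This is the step where the constraint $\epsilon\leq Z_3\leq 1-\epsilon$ is essential: it is exactly what keeps $\alphaph B_\Lambda$ bounded away from $1$, so that the Landau pole plays no role.

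Next I would plug the a priori bound into the explicit remainder term $\alphaph^{N+1}G_{N+1,\Lambda}$ of \eqref{eq:dvt_m}. The resolvent $(D^0-\alphaph\rhoph\ast|\cdot|^{-1}+i\eta)^{-1}$ is controlled since $0\notin\sigma(D)$ (guaranteed by the smallness assumption $a\leq\epsilon(\pi^{1/6}2^{11/6}\|\nu\|_\cC)^{-1}$ recalled at the start of Section \ref{sec:def_nu_k}), and the product of $N+1$ factors $\Pi_\Lambda(\rhoph\ast|\cdot|^{-1})\Pi_\Lambda (D^0+i\eta)^{-1}$ is estimated by the same Kato–Seiler–Simon / trace-ideal machinery underlying Proposition \ref{Prop:estimF}, giving
$$\|(1+\cU)^{N+1}G_{N+1,\Lambda}\|_{L^2\cap\cC}\leq C^{N+1}(K\log(N+1))^{N+1}\,\|(1+\cU)^{N+1}\rhoph\|_{\cC}^{N+1}\leq C'(N,\epsilon,\cN_N),$$
again uniformly in $\Lambda$. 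Thus $\alphaph^{N+1}G_{N+1,\Lambda}$ is already $O(\alphaph^{N+1})$ in the right norm.

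It then remains to show $R_N$ itself is $O(\alphaph^{N+1})$. Subtracting the recursion \eqref{def:nu_nL} (multiplied by the appropriate powers of $\alphaph$ and summed) from \eqref{eq:dvt_m}, the terms of order $\leq N$ cancel by construction, and one obtains an equation of the schematic form $R_N = \alphaph\cU_\Lambda R_N + (\text{multilinear-in-}\rhoph\text{ correction terms, each }O(\alphaph^{N+1})) + \alphaph^{N+1}G_{N+1,\Lambda} + (\text{collected higher-order pieces of the }F_{j,\Lambda}\text{ expansions})$. Inverting $1-\alphaph\cU_\Lambda$ on the weighted space as above (norm of inverse $\leq 1/\epsilon$), and using the a priori bound on $\rhoph$ together with the polynomial estimates on the $\nu_{n,\Lambda}$ from Proposition \ref{Prop:nu_nL} to absorb the multilinear remainders — shrinking $a$ once more so the $\rhoph$-dependent Lipschitz constant of the nonlinearity times $1/\epsilon$ is $<1$ — gives $\|R_N\|_{L^2\cap\cC}\leq C(N,\epsilon,\cN_N)\alphaph^{N+1}$. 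I expect the main obstacle to be the bookkeeping in this last step: carefully identifying, after the subtraction, that every surviving term carries at least $N+1$ powers of $\alphaph$, and that the combinatorial sums over the compositions $n_1+\cdots+n_j$ arising from the $F_{j,\Lambda}$'s can be summed against the Proposition \ref{Prop:nu_nL} bounds with $\Lambda$-independent constants — the $(K\log n)^m$ and $(K\log m)^{mn/2}$ factors must be tracked so they land inside constants depending only on $N$. Once the algebra is organized (as in the generating-function trick of Step \ref{P1}), the analytic input is entirely supplied by Propositions \ref{Prop:ULU} and \ref{Prop:estimF}.
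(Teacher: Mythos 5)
Your overall strategy coincides with the paper's: expand the self-consistent equation \eqref{eq:SCF_recast2} via Cauchy's formula, subtract the recursion \eqref{def:nu_nL}, and close a bootstrap on the remainder using the absorption of $\alphaph\cU_\Lambda$ (norm $\leq 1-\epsilon$) and the $\Lambda$-uniform estimates on the $F_{n,\Lambda}$'s and the $\nu_{n,\Lambda}$'s. That said, two points deserve comment.

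First, the weighted a priori bound $\|(1+\cU)^{N+1}\rhoph\|_{L^2\cap\cC}\leq 2\cN_N/\epsilon$ is an unnecessary detour, and its derivation as written contains a slip: you bound $\|\cU_\Lambda(1+\cU)^{-1}\|$ by $(1-\epsilon)/\alphaph$, but what is actually true (and all one needs) is $\|\alphaph\cU_\Lambda\|_{op}=\alphaph B_\Lambda=1-Z_3\leq 1-\epsilon$, since $\cU_\Lambda$ is a multiplication operator and commutes with any power of $1+\cU$. More importantly, the paper never attempts such a weighted bound on $\rhoph$; it explicitly notes that there is no a priori $L^2$ bound and relies solely on the unweighted Coulomb bound $\|\alphaph\rhoph\|_\cC=\|\alpha(\rho_Q-\nu)\|_\cC\leq\alpha\|\nu\|_\cC$, hence $\|\rhoph\|_\cC\leq\epsilon^{-1}\|\nu\|_\cC$, which is free from the variational energy estimate of \cite{HaiLewSer-05b}. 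This is enough because the target estimate is in the unweighted $L^2\cap\cC$ norm; the required weights are put on the $\nu_{n,\Lambda}$'s via Proposition~\ref{Prop:nu_nL}, not on $\rhoph$.

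Second, you identify the bookkeeping after the subtraction as the ``main obstacle'' but do not carry it out, and this is precisely where the content of the proof lies. The paper writes the subtracted equation explicitly for $r_N:=\alphaph^{-N-1}R_N$:
\begin{equation*}
r_N = \alphaph\cU_\Lambda r_N + \cU_\Lambda\nu_{N,\Lambda} + G_{N+1,\Lambda} + \sum_{k=N+1}^{N(N+2)}\alphaph^{k-N-1}\sum_{3\leq 2n+1\leq N}\ \sum_{p_1+\cdots+p_{2n+1}=k-2n-1}F_{2n+1,\Lambda}(\omega_{p_1},\ldots,\omega_{p_{2n+1}}),
\end{equation*}
with $\omega_p=\nu_{p,\Lambda}$ for $p\leq N$ and $\omega_{N+1}=r_N$. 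Your sketch omits the term $\cU_\Lambda\nu_{N,\Lambda}$, which is not multilinear in $\rhoph$ and is exactly the reason one needs $\cN_N=\|(1+\cU)^{N+1}\nu\|_{L^2\cap\cC}$ finite (one extra power of $1+\cU$ beyond the naive $N$), via the $m=N+1$, $n=N$ case of Proposition~\ref{Prop:nu_nL}. Also, the multilinear sum can have one factor equal to $r_N$ itself, producing the self-consistent inequality $\|r_N\|\leq C + \big(1-\epsilon + C\alphaph^3\big)\max\{\|r_N\|_\cC,1\}$, which is what closes the bootstrap. Finally, the estimate of $G_{N+1,\Lambda}$ requires a case distinction ($N\geq 5$ directly; $N\leq 4$ after expanding $G_{N+1,\Lambda}$ into $F_{2j+1,\Lambda}$ terms plus $G_{6,\Lambda}$) because the Kato--Seiler--Simon machinery of Proposition~\ref{Prop:estimF} needs at least five (or exactly three) resolvent factors; this is glossed over in your plan.
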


As for the differences $\nu_{n, \Lambda} - \nu_n$, we have the

\begin{lemma}
\label{Lem:diff_nu}
Let $\Lambda \geq 1$ and $N \in \N$. Assume that $\cN_N := \| (1 + \cU)^{N + 1} \nu \|_{L^2  \cap \cC} < \ii$. Then, there exists a constant $C(N, \cN_N)$, depending only on $N$ and $\cN_N$, such that
\begin{equation}
\label{estim:diff}
\Big\| \nu_{n, \Lambda} - \nu_n \Big\|_{L^2 \cap \cC} \leq \frac{C(N, \cN_N)}{(1 + B_\Lambda)^{N + 1 - n}},
\end{equation}
for all $0 \leq n \leq N$.
\end{lemma}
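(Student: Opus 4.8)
The plan is to prove Lemma~\ref{Lem:diff_nu} by induction on $n$, exploiting the parallel recursion formulas \eqref{def:nu_nL} and \eqref{def:nu_n}, and using the uniform bounds on the sequences $\{\nu_{n,\Lambda}\}$ and $\{\nu_n\}$ supplied by Theorem~\ref{Thm:prop_nu_n} and Proposition~\ref{Prop:nu_nL}, together with the convergence estimates of Proposition~\ref{Prop:ULU} and the $\Lambda$-dependent bounds of Proposition~\ref{Prop:estimF}. First I would fix $m = N+1$ throughout and, abbreviating $\delta_n := \nu_{n,\Lambda} - \nu_n$, seek to control the weighted norm $d_n := \|(1+\cU)^{N+1-n}\delta_n\|_{L^2\cap\cC}$. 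For $n = 0,1,2$ one has $\nu_{n,\Lambda} = \cU_\Lambda^{\,?}\nu_\Lambda$-type expressions (more precisely $\nu_{0,\Lambda}=\nu_\Lambda$, $\nu_{1,\Lambda}=\cU_\Lambda\nu_\Lambda$, and $\nu_{2,\Lambda}=\cU_\Lambda^2\nu_\Lambda$ since $F_{2,\Lambda}=0$), so $\delta_n$ splits into a contribution from $\nu_\Lambda - \nu$ (controlled because $\widehat\nu - \widehat{\nu_\Lambda}$ is supported in $|k|\geq 2\Lambda$, where $E(2\Lambda)^{-1}$ smallness kicks in, or one can absorb it into the $F_\Lambda$ analysis) and a contribution from replacing $\cU_\Lambda$ by $\cU$, which by \eqref{estim:ULU} with the appropriate power is $O(\max\{(1+B_\Lambda)^{-m}, E(2\Lambda)^{-1}\})$; since $\Lambda$ and $B_\Lambda$ are tied together ($B_\Lambda\sim \tfrac{2}{3\pi}\log\Lambda$), both are $O((1+B_\Lambda)^{-(N+1-n)})$, giving the base case.

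Next I would treat the inductive step. For $n\geq 3$, subtract \eqref{def:nu_n} from \eqref{def:nu_nL} and organize the difference into three groups of terms: (i) the linear term $\cU_\Lambda\nu_{n-1,\Lambda} - \cU\nu_{n-1}$, which I would further split as $\cU_\Lambda\delta_{n-1} + (\cU_\Lambda - \cU)\nu_{n-1}$; the first piece is bounded by $\kappa_1 d_{n-1}$ after shifting one power of $(1+\cU)$ using \eqref{elise}, and the second by $\|(1+\cU)^{N+1-n}(\cU_\Lambda-\cU)\nu_{n-1}\|$, which via \eqref{estim:ULU} and the Theorem~\ref{Thm:prop_nu_n} bound on $\nu_{n-1}$ with a high enough weight is $O((1+B_\Lambda)^{-(N+1-n)})$; (ii) the genuinely nonlinear differences $F_{j,\Lambda}(\nu_{n_1,\Lambda},\dots) - F_j(\nu_{n_1},\dots)$, which I would handle by a telescoping argument, writing each $F_{j,\Lambda}(\cdots) - F_j(\cdots)$ as a sum over which argument is being changed plus the $F_{j,\Lambda} - F_j$ difference itself; the former uses multilinearity of $F_j$ (Corollary~\ref{Cor:estimF}) against one factor $d_{n_k}$ and the uniform bounds on the other factors, while the latter is exactly the content of Proposition~\ref{Prop:estimF} with $n(\epsilon)=1$, giving a $\Lambda^{-1/24}\ll (1+B_\Lambda)^{-(N+1)}$ gain (using that $\Lambda^{-1/24}$ beats any fixed power of $1/\log\Lambda$ for $\Lambda$ large, and is absorbed into a constant for bounded $\Lambda$). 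The upshot is a recursive inequality of the shape
\begin{equation*}
d_n \leq \kappa_1 d_{n-1} + C(N,\cN_N)\!\!\sum_{j,\,n_1+\cdots+n_j=n-j}\!\!\Big(\sum_{k=1}^j d_{n_k}\prod_{\ell\neq k}(\text{bound on }\nu_{n_\ell})\Big) + \frac{C(N,\cN_N)}{(1+B_\Lambda)^{N+1-n}}.
\end{equation*}
Since all the $d_{n_k}$ appearing on the right have index $n_k \leq n-3 < n$ (as $j\geq 3$), a straightforward finite induction on $n$ closes the estimate: at each stage the right-hand side is a finite sum of previously-bounded quantities each $O((1+B_\Lambda)^{-(N+1-n_k)})$ or better, and since $n_k < n$ implies $N+1-n_k > N+1-n$, every such term is $O((1+B_\Lambda)^{-(N+1-n)})$, as is the explicit inhomogeneous term; reading off $d_n$ with the trivial weight (which only makes the norm smaller) gives \eqref{estim:diff}.

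The main obstacle I anticipate is bookkeeping the powers of $(1+\cU)$ consistently: the weight $(1+\cU)^{N+1-n}$ decreases as $n$ grows, so when a nonlinear term $F_j(\nu_{n_1},\dots)$ at level $n$ is estimated I must make sure the weight applied to each inner factor $\nu_{n_k}$ is at least $(1+\cU)^{N+1-n_k}$ after accounting for the $(K\log j)^{N+1-n}$ loss from Proposition~\ref{Prop:estimF} — but since $n_k \leq n-j \leq n-3$ one gains at least three extra powers, which is precisely enough room, and Theorem~\ref{Thm:prop_nu_n} is stated with exactly the weight structure $\|(1+\cU)^{m-n}\nu_n\|$ needed to feed this machine. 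The second delicate point is synchronizing the two different small parameters governing $U_\Lambda\to U$ in \eqref{estim:ULU}, namely $(1+B_\Lambda)^{-m}$ versus $E(2\Lambda)^{-1}$: one must observe that in the regime of Theorem~\ref{Thm:asymptotics} these are comparable up to the stated powers because $B_\Lambda$ grows logarithmically in $\Lambda$, so $E(2\Lambda)^{-1}$ is in fact much smaller than any negative power of $1+B_\Lambda$ and never the bottleneck; thus the $\max$ in \eqref{estim:ULU} is always realized by the $(1+B_\Lambda)^{-m}$ term for $\Lambda$ large, and absorbed into the constant for $\Lambda$ in any compact range. Everything else is routine: finitely many terms, finitely many induction steps (only up to $n=N$), and all constants depending only on $N$ and $\cN_N$ as claimed.
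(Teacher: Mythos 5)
Your proposed approach has the right overall skeleton (induction on $n$, telescoping the recursion, treating the linear $\cU$-term and the nonlinear $F_j$-terms separately, using \eqref{estim:ULU} for $\cU_\Lambda - \cU$, using Proposition~\ref{Prop:estimF} with $n(\epsilon)=1$ for $F_{j,\Lambda}-F_j$, and using multilinearity for the $F_j$-differences), and in all of these local steps you identify the correct ingredients. However, the inductive quantity you choose cannot satisfy the bound your induction requires, and this is a genuine gap.

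You define $d_n := \|(1+\cU)^{N+1-n}\delta_n\|_{L^2\cap\cC}$ and aim to prove $d_n = O\big((1+B_\Lambda)^{-(N+1-n)}\big)$; you then feed this into the recursion via $d_{n_k} = O\big((1+B_\Lambda)^{-(N+1-n_k)}\big)$. But this claimed bound already fails at $n=0$. There $\delta_0 = \nu_\Lambda - \nu$ is supported in Fourier on $\{|k|\geq 2\Lambda\}$, and
\begin{equation*}
d_0 = \big\| (1+\cU)^{N+1}(\nu_\Lambda-\nu) \big\|_{L^2\cap\cC}
\end{equation*}
is bounded only by $\cN_N$ under the standing hypothesis $\cN_N=\|(1+\cU)^{N+1}\nu\|<\infty$. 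To obtain any negative power of $1+B_\Lambda$ on the left you must \emph{lower} the weight: writing $(1+U(|k|))^p = (1+U(|k|))^{N+1}/(1+U(|k|))^{N+1-p}$ and using that $1+U \geq 1 + B_\Lambda$ on the support gives $\|(1+\cU)^p\delta_0\| \leq \cN_N/(1+B_\Lambda)^{N+1-p}$ --- the paper's \eqref{cudmore} --- and this is useless precisely when $p = N+1$. The same trade-off arises in the $(\cU_\Lambda-\cU)\nu_{n-1}$ term: to gain $(1+B_\Lambda)^{-q}$ from \eqref{estim:ULU} you must \emph{spend} $q$ powers of $(1+\cU)$, and with the weight $N+1-n$ already in place, the remaining reserve $q \leq n-1$ is not enough to reach $q = N+1-n$ when $n < (N+2)/2$. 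The core misconception is that the weight $(1+\cU)^{p}$ you carry and the decay exponent you can extract from $1+B_\Lambda$ are in competition: their sum is constrained to be $N+1-n$ (since $\nu_{n}$ only carries weight $N+1-n$ by Theorem~\ref{Thm:prop_nu_n}), so the full weight leaves no decay. You also implicitly assume that Corollary~\ref{Cor:estimF} applies the weight $N+1-n_k$ to the $n_k$-th inner factor, but it applies the \emph{same} outside weight to every factor.

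The fix is precisely the paper's intermediate statement \eqref{vermeulen}: prove, by induction on $n$, the two-parameter family
\begin{equation*}
\big\| (1+\cU)^p (\nu_{n,\Lambda} - \nu_n) \big\|_{L^2\cap\cC} \leq \frac{C(N,\cN_N)}{(1+B_\Lambda)^{N+1-n-p}}, \qquad 0 \leq p \leq N+1-n,
\end{equation*}
so that the conserved quantity is $p + (\text{decay exponent}) = N+1-n$. This is needed already to absorb the $\cU_\Lambda$ in the linear term (which costs one power of $(1+\cU)$, hence shifts $p \to p+1$ and decreases the decay by one --- exactly compensated because $n-1 < n$), and it is what allows the nonlinear terms to close: Corollary~\ref{Cor:estimF} applied at weight $p$ calls the induction hypothesis at $(n_k, p)$, valid since $p \leq N+1-n < N+1-n_k$, yielding decay $(1+B_\Lambda)^{-(N+1-n_k-p)} \leq (1+B_\Lambda)^{-(N+1-n-p)}$. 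Lemma~\ref{Lem:diff_nu} is then the $p=0$ case. Without this extra parameter, the induction cannot be made to close.
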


Combining Lemmas \ref{Lem:reste} and \ref{Lem:diff_nu}, we can complete the proof of Theorem \ref{Thm:asymptotics}.

\begin{proof}[Proof of Theorem \ref{Thm:asymptotics}]
Our assumption \eqref{condition_nu} (together with \eqref{UE}) means that $\cN_N := \| (1 + \cU)^{N + 1} \nu \|_{L^2  \cap \cC} < \ii$.
It follows from \eqref{def:R_m} that
$$\rhoph(\alphaph,\Lambda) - \sum_{n = 0}^N \nu_n (\alphaph)^n = R_N(\alphaph,\Lambda) + \sum_{n = 0}^N \big( \nu_{n, \Lambda} - \nu_n \big) (\alphaph)^n.$$
Hence by \eqref{estim:R_m} and \eqref{estim:diff},
\begin{equation}
\label{thatsit}
\Big\| \rhoph(\alphaph) - \sum_{n = 0}^N \nu_n \alphaph^n \Big\|_{L^2 \cap \cC} \leq C(N, \epsilon, \cN_N) \alphaph^{N + 1} + C(N, \cN_N) \sum_{n = 0}^N \frac{\alphaph^n}{(1 + B_\Lambda)^{N + 1 - n}},
\end{equation}
for any number $\alphaph$ sufficiently small. In our setting we have $B_\Lambda \geq \epsilon/\alphaph$ and the result follows. 
\end{proof}

It therefore remains to show Lemmas \ref{Lem:reste} and \ref{Lem:diff_nu}.

\begin{proof}[Proof of Lemma \ref{Lem:reste}]
Let us introduce the notation
\begin{equation}
\label{def:r_m}
r_N(\alphaph) := (\alphaph)^{-N- 1} R_N(\alphaph).
\end{equation}
We want to establish a bound on $r_N$ independently of $\alphaph$. By \eqref{def:R_m}, this requires to estimate $\rhoph$ and $\nu_{n, \Lambda}$ (which was already done in Proposition \ref{Prop:nu_nL}).

The first step of the proof will be to bound $\rhoph$ independently of $\alphaph$. Let us recall that a ground state for the reduced Bogoliubov-Dirac-Fock model satisfies
$$\| \alphaph \rhoph \|_\cC = \| \alpha (\rho_Q - \nu) \|_\cC \leq \alpha \| \nu \|_\cC$$
(see \cite[Eq. (33)]{HaiLewSer-05b}). Since $\alphaph=Z_3\alpha$, this provides
\begin{equation}
\label{estim:rhoph}
\| \rhoph \|_\cC \leq Z_3^{- 1} \| \nu \|_\cC \leq \epsilon^{- 1} \| \nu \|_{\cC}.
\end{equation}
Note that we however do not have any \emph{a priori} bound in $L^2(\R^3)$.
Inserting \eqref{def:R_m} and \eqref{def:r_m} in \eqref{eq:dvt_m} and using \eqref{def:nu_nL}, we get
\begin{multline}
\label{eq:r_m}
r_N = \alphaph \cU_\Lambda r_N + \cU_\Lambda \nu_{N, \Lambda} + G_{N + 1, \Lambda} + \sum_{k = N + 1}^{N(N + 2)} \alphaph^{k - N - 1} \times\\
\times \sum_{3 \leq 2 n + 1 \leq N}\ \sum_{p_1 + \cdots + p_{2 n + 1}  = k - 2 n - 1} F_{2 n + 1, \Lambda}(\omega_{p_1}, \cdots, \omega_{p_{2 n + 1}}),
\end{multline}
where $\omega_p = \nu_{p, \Lambda}$ for $0 \leq p \leq N$, and $\omega_{N + 1} = r_N$. It rests to estimate all the terms of the right-hand side of \eqref{eq:r_m}.

For the first term, we recall that $\alphaph |U_\Lambda| \leq \alphaph B_\Lambda = 1 - Z_3 \leq 1 - \epsilon$, therefore
\begin{equation}
\label{cayes}
\| \alphaph \cU_\Lambda r_N \|_{L^2 \cap \cC} \leq (1 - \epsilon) \| r_N \|_{L^2 \cap \cC}.
\end{equation}
The second term can be controlled by using \eqref{elise} and \eqref{estim:nu_nL}, which provide a positive constant $C(N, \cN_N)$, depending only on $N$ and $\cN_N$, such that
\begin{equation}
\label{saint-marc}
\| \cU_\Lambda \nu_{N, \Lambda} \|_{L^2 \cap \cC} \leq \kappa_1 \| (1 + \cU) \nu_{N, \Lambda} \|_{L^2 \cap \cC} \leq C(N, \cN_N).
\end{equation}
As for the function $G_{N + 1, \Lambda}$, we first recall that
\begin{equation}
\label{gonaives}
\Big( 1 - \frac{\pi^\frac{1}{6} 2^\frac{11}{6}}{\epsilon} \alphaph \norm{\nu}_\cC \Big) |D^0| \leq \Big| D^0 - \alphaph \rhoph \ast \frac{1}{|\cdot|} \Big| \leq \Big( 1 + \frac{\pi^\frac{1}{6} 2^\frac{11}{6}}{\epsilon} \alphaph \norm{\nu}_\cC \Big) |D^0|
\end{equation}
for all $\alphaph < \pi^{- 1/6} 2^{- 11/6} \epsilon \norm{\nu}_\cC^{-1}$ (see  \cite[p. 4495]{HaiLewSer-05b}). Hence, the operator $D^0 - \alphaph \rhoph \ast |\cdot|^{- 1}$ is invertible and, in particular, $G_{N +1, \Lambda}$ is well-defined.
Notice also that \eqref{gonaives} yields for any $\alphaph < \pi^{- 1/6} 2^{- 17/6} \epsilon \norm{\nu}_\cC^{-1}$
$$\frac{1}{2} |D^0| \leq \Big| D^0 - \alphaph \rhoph \ast \frac{1}{|\cdot|} \Big| \leq \frac{3}{2} |D^0|.$$

When $N \geq 5$, we argue exactly as in Steps \ref{F1} and \ref{F2} of the proof of Proposition \ref{Prop:estimF}, and deduce that there exists a constant $C(N)$, depending only on $N$, such that
\begin{equation}
\label{hinche}
\| G_{N + 1, \Lambda} \|_{L^2 \cap \cC} \leq C(N) \| \rhoph \|^{N + 1}_\cC.
\end{equation}
When $N \leq 4$, our argument is different. We expand $G_{N + 1, \Lambda}$ as before, writting
$$G_{N + 1, \Lambda} = - \alphaph \sum_{N + 1 \leq 2 j + 1 \leq 5} F_{2 j + 1, \Lambda}(\rhoph, \cdots, \rhoph) + G_{6, \Lambda}.$$
In view of \eqref{estim:FnL} and \eqref{hinche} (for $N = 5$), this leads to
$$\| G_{N + 1, \Lambda} \|_{L^2 \cap \cC} \leq C \alphaph \left( \sum_{N + 1 \leq 2 j + 1 \leq 5} \| \rhoph \|^{2 j + 1}_\cC + \| \rhoph \|^{6}_\cC \right).$$
In both cases, we obtain
$$\| G_{N + 1, \Lambda} \|_{L^2 \cap \cC} \leq C(N) \max \left\{ \| \rhoph \|^{N + 1}_\cC, \| \rhoph \|^6_\cC \right\},$$
for any $\alphaph \leq 1$, so that, by \eqref{estim:rhoph},
\begin{equation}
\label{jeremie}
\| G_{N + 1, \Lambda} \|_{L^2 \cap \cC} \leq C(N) \max \left\{ \frac{\norm{\nu}^{N + 1}_\cC}{\epsilon^{N + 1}}, \frac{ \norm{\nu}^6_\cC}{\epsilon^6} \right\} \leq C(N, \cN_N, \epsilon).
\end{equation}
Finally, we consider the terms $\alphaph^{k - m - 1} F_{2 n + 1, \Lambda}(\omega_{p_1}, \cdots, \omega_{p_{2 n + 1}})$ of the sum in the right-hand side of \eqref{eq:r_m}. By \eqref{estim:FnL}, we have
$$\big\| \alphaph^{k - N - 1} F_{2 n + 1, \Lambda}(\omega_{p_1}, \cdots, \omega_{p_{2 n + 1}}) \big\|_{L^2 \cap \cC} \leq C^{2 n + 1} |\alphaph|^{k - N - 1} \prod_{j = 1}^{2 n + 1} \| \omega_{p_j} \|_{\cC}.$$
When $p_j \leq N$, we deduce from \eqref{estim:nu_nL} that there exists a constant $C(N, \cN_N)$ such that $\| \omega_{p_j} \|_{\cC} \leq C(N, \cN_N)$.
On the other hand, when $p_j = N + 1$ for some $j$, we can bound one of the norms $\| \omega_{p_j} \|_{\cC}$ by $\| r_N \|_{\cC}$, and the other ones by using \eqref{estim:nu_nL}, \eqref{def:R_m} and \eqref{estim:rhoph} to get
$$\| \omega_{p_j} \|_{\cC} = |\alphaph|^{- p_j} \| R_N \| \leq C(N, \cN_N, \epsilon) |\alphaph|^{- p_j},$$
for $\alphaph \leq 1$. This leads to
$$\big\| \alphaph^{k - N - 1} F_{2 n + 1, \Lambda}(\omega_{p_1}, \cdots, \omega_{p_{2 n + 1}}) \big\|_{L^2 \cap \cC} \leq C(N, \cN_N, \epsilon) |\alphaph|^{2 n + 1} \max \{ \| r_N \|_{\cC}, 1 \},$$
for $\alphaph \leq 1$. Combining with \eqref{eq:r_m}, \eqref{cayes}, \eqref{saint-marc} and \eqref{jeremie}, we conclude that
$$\| r_N \|_{L^2  \cap \cC} \leq C(N, \cN_N, \epsilon) + \bigg( 1 - \epsilon + C(N, \cN_N, \epsilon) |\alphaph|^3 \bigg) \max \{ \| r_N \|_{\cC}, 1 \},$$
for $\alphaph$ sufficiently small. Therefore, the norm $\norm{r_N}_{L^2 \cap \cC}$ is bounded independently of $\alphaph$ for $\alphaph$ small enough, which ends the proof of Lemma \ref{Lem:reste}.
\end{proof}

We finally prove Lemma \ref{Lem:diff_nu}.

\begin{proof}[Proof of Lemma \ref{Lem:diff_nu}]
Given any $n \in \{0, 1, 2 \}$, it follows from recursion relations \eqref{def:nu_nL} and \eqref{def:nu_n} that
$$\nu_{n, \Lambda} - \nu_n = \cU_\Lambda^n \nu_\Lambda - \cU^n \nu = \cU_\Lambda^n \big( \nu_\Lambda - \nu \big) + \big( \cU_\Lambda^n - \cU^n) \nu.$$
Therefore, given any $N \geq n$ and $0 \leq p \leq N + 1 - n$, we deduce from \eqref{elise} that
\begin{equation}
\label{bonnaire}
\begin{split}
\| (1 + \cU)^p (\nu_{n, \Lambda} - \nu_n) \|_{L^2 \cap \cC} \leq \kappa_1^n \| (1 & + \cU)^{n + p} (\nu_\Lambda - \nu) \|_{L^2 \cap \cC}\\
+ & n \kappa_1^{n - 1} \| (1 + \cU)^{n + p - 1} \big( \cU_\Lambda - \cU) \nu \|_{L^2 \cap \cC}.
\end{split}
\end{equation}
Next, we recall that $\widehat{\nu_\Lambda} = \widehat{\nu} \1_{B(0,2\Lambda)}$, so that, since $U(2 \Lambda) = B_\Lambda$,
\begin{equation}
\label{cudmore}
\norm{(1 + \cU)^{n + p} (\nu_{\Lambda} - \nu)}_{L^2\cap\cC} \leq \frac{\norm{(1 + \cU)^{N + 1} \nu}_{L^2 \cap \cC}}{(1 + U(2 \Lambda))^{N + 1 - n - p}} = \frac{\norm{(1 + \cU)^{N + 1} \nu}_{L^2 \cap \cC}}{(1 + B_\Lambda)^{N + 1 - n - p}}.
\end{equation}
For the second term in the right-hand side of \eqref{bonnaire}, we use \eqref{estim:ULU} and write
\begin{align*}
\| (1 + & \cU)^{n + p - 1} \big( \cU_\Lambda - \cU) \nu \|_{L^2 \cap \cC} \leq \norm{\frac{U_\Lambda - U}{(1 + U)^{N + 2 - n -p}}}_{L^\infty} \norm{(1 + \cU)^{N + 1} \nu}_{L^2 \cap \cC}\\
& \leq \kappa_0^{N + 4 - n - p} \max \Big\{ \frac{1}{(1 + B_\Lambda)^{N + 1 - n - p}}, \frac{1}{E(2 \Lambda)} \Big\} \norm{(1 + \cU)^{N + 1} \nu}_{L^2 \cap \cC}.
\end{align*}
Since $(1 + B_\Lambda)^{N + 1 - n - p} \leq (1 + B_\Lambda)^{N + 1} \leq C(N) E(2 \Lambda)$, we obtain
$$\Big\| (1 + \cU)^p (\nu_{n, \Lambda} - \nu_n) \Big\|_{L^2 \cap \cC} \leq \frac{C(N)}{(1 + B_\Lambda)^{N + 1 - n - p}} \norm{(1 + \cU)^{N + 1} \nu}_{L^2 \cap \cC}.$$
Combining with \eqref{bonnaire} and \eqref{cudmore}, we are led to
\begin{equation}
\label{vermeulen}
\Big\| (1 + \cU)^p (\nu_{n, \Lambda} - \nu_n) \Big\|_{L^2 \cap \cC} \leq \frac{C(N, \cN_N)}{(1 + B_\Lambda)^{N + 1 - n - p}},
\end{equation}
for $N \geq n$ and $0 \leq p \leq N + 1 - n$.

We next turn to the case of $n \geq 3$. Given any $N \geq n$, we assume that \eqref{vermeulen} holds for all $n \leq k - 1$ and $0 \leq p \leq N + 1 - n$, and prove it by induction for $n = k$ and $0 \leq p \leq N + 1 - k$. Using \eqref{def:nu_nL} and \eqref{def:nu_n}, we first infer that
\begin{multline}
\label{parra}
\big\| (1 + \cU)^p (\nu_{k, \Lambda} - \nu_k) \big\|_{L^2 \cap \cC} \leq \big\| (1 + \cU)^p \cU_\Lambda (\nu_{k - 1, \Lambda} - \nu_{k - 1}) \big\|_{L^2 \cap \cC}\\
+ \big\| (1 + \cU)^p (\cU_\Lambda  - \cU) \nu_{k - 1} \big\|_{L^2\cap\cC} + \sum_{3 \leq 2 j + 1 \leq k}\ \sum_{\underset{\ell = 1}{\overset{2 j + 1}{\sum}} k_\ell = k - 2 j - 1} \Big\| (1 + \cU)^p \times\\
\times \Big( F_{2 j + 1, \Lambda}  \big( \nu_{k_1, \Lambda}, \cdots, \nu_{k_{2 j + 1}, \Lambda} \big) - F_{2 j + 1} \big(\nu_{k_1}, \cdots, \nu_{k_{2 j + 1}} \big) \Big) \Big\|_{L^2 \cap \cC}.
\end{multline}
We next estimate the first term in the right-hand side of \eqref{parra} using \eqref{elise} and our assumption. This provides
\begin{multline}
\label{james}
\| (1 + \cU)^p \cU_\Lambda (\nu_{k - 1, \Lambda} - \nu_{k - 1}) \|_{L^2 \cap \cC} \leq \kappa_1 \| (1 + \cU)^{p + 1} (\nu_{k - 1, \Lambda} - \nu_{k - 1}) \big\|_{L^2 \cap \cC}\\
\leq \frac{C(N)}{(1 + B_\Lambda)^{N + 1 - k - p}} \norm{(1 + \cU)^{N + 1} \nu}_{L^2 \cap \cC}.
\end{multline}
For the second term, we argue as in the proof of \eqref{vermeulen}, using \eqref{estim:ULU} and \eqref{estim:nu_n}:
\begin{align}
\label{nalaga}
\| (1 + \cU)^p (\cU_\Lambda - \cU) \nu_{k - 1} \|_{L^2\cap\cC} &\leq \norm{\frac{U_\Lambda - U}{(1 + U)^{N + 2 - k - p}}}_{L^\infty} \norm{(1 + \cU)^{N + 2 - k} \nu_{k - 1}}_{L^2 \cap \cC}\\
& \leq \frac{C(N, \cN_N)}{(1 + B_\Lambda)^{N + 1 - k - p}}.\nonumber
\end{align}
Finally, we turn to the terms in the sums of the right-hand side of \eqref{parra}. On the one hand, we deduce from \eqref{def:FnLe} that
$$F_{2 j + 1, \Lambda} - F_{2 j + 1} = - F_{2 j + 1, \Lambda}^{(- 1, 1, \cdots, 1)} - F_{2 j + 1, \Lambda}^{(0, - 1, 1, \cdots, 1)} - \cdots - F_{2 j + 1, \Lambda}^{(0, \cdots, 0, - 1)}.$$
Hence, since $p \leq N + 1 - k \leq N + 1 - k_\ell$, we can apply \eqref{estim:F} and \eqref{estim:nu_nL} to obtain
\begin{multline}
\label{jacquet}
\Big\| (1 + \cU)^p \Big( F_{2 j + 1, \Lambda} \big( \nu_{k_1, \Lambda}, \cdots, \nu_{k_{2 j + 1}, \Lambda} \big) - F_{2 j + 1} \big( \nu_{k_1, \Lambda}, \cdots, \nu_{k_{2 j + 1}, \Lambda} \big) \Big) \Big\|_{L^2 \cap \cC}\\
\leq \frac{C(N)}{\Lambda^{1/24}} \prod_{\ell = 1}^{2 j + 1} \| (1 + \cU)^{N + 1 - k_\ell} \nu_{k_\ell, \Lambda} \|_{\cC} \leq \frac{C(N, \cN_N)}{\Lambda^{1/24}}.
\end{multline}
On the other hand, the multilinearity of the function $F_{2 j + 1}$ provides
\begin{align*}
& F_{2 j + 1} \big( \nu_{k_1, \Lambda}, \cdots, \nu_{k_{2 j + 1}, \Lambda} \big) -F_{2 j + 1} \big(\nu_{k_1}, \cdots, \nu_{k_{2 j + 1}} \big)\\
= F_{2 j + 1} \big( \nu_{k_1, \Lambda} & - \nu_{k_1}, \nu_{k_2, \Lambda}, \cdots, \nu_{k_{2 j + 1}, \Lambda} \big) + F_{2 j + 1} \big( \nu_{k_1}, \nu_{k_2, \Lambda} - \nu_{k_2}, \cdots, \nu_{k_{2 j + 1}, \Lambda} \big)\\
& + \cdots + F_{2 j + 1} \big( \nu_{k_1}, \nu_{k_2}, \cdots, \nu_{k_{2 j}}, \nu_{k_{2 j + 1}, \Lambda} - \nu_{k_{2 j + 1}} \big).
\end{align*}
Therefore, we infer similarly from \eqref{estim:F} and \eqref{estim:nu_nL} that
\begin{align}
\label{ledesma}
& \Big\| (1 + \cU)^p \Big( F_{2 j + 1} \big( \nu_{k_1, \Lambda}, \cdots, \nu_{k_{2 j + 1}, \Lambda} \big) - F_{2 j + 1} \big( \nu_{k_1, \Lambda}, \cdots, \nu_{k_{2 j + 1}, \Lambda} \big) \Big) \Big\|_{L^2 \cap \cC}\\
&\ \leq C(N)  \sum_{q = 1}^{2 j + 1} \big\| (1 + \cU)^p (\nu_{k_q, \Lambda} - \nu_{k_q}) \big\|_{\cC}  \prod_{\ell < q} \big\| (1 + \cU)^p \nu_{k_\ell, \Lambda} \big\|_{\cC}  \prod_{\ell > q} \big\| (1 + \cU)^p \nu_{k_\ell} \big\|_{\cC} \nonumber\\
&\ \leq \sum_{q = 1}^{2 j + 1} \frac{C(N, \cN_N)}{(1 + B_\Lambda)^{N + 1 - k_q - p}} \leq \frac{C(N, \cN_N)}{(1 + B_\Lambda)^{N + 1 - k - p}}.\nonumber
\end{align}
As a conclusion, we derive from \eqref{parra}, \eqref{james}, \eqref{nalaga}, \eqref{jacquet} and \eqref{ledesma} that
$$\Big\| (1 + \cU)^p (\nu_{k, \Lambda} - \nu_k) \Big\|_{L^2 \cap \cC} \leq C(N, \cN_N) \left( \frac{1}{(1 + B_\Lambda)^{N + 1 - k - p}} + \frac{1}{\Lambda^{1/24}}\right),$$
Since $(1 + B_\Lambda)^{N + 1 - k - p} \leq (1 + B_\Lambda)^{N + 1} \leq C(N) \Lambda^{1/24}$, this completes the proof of \eqref{vermeulen} for $n = k$.

Notice the constant $C(N, \cN_N)$ deteriorates when $n$ increases. However, this is not a problem since $n$ is limited to the set $\{ 0, \cdots, N \}$. Estimate \eqref{estim:diff} then follows from \eqref{vermeulen}, considering the case $p = 0$. This concludes the proof of Lemma \ref{Lem:diff_nu}.
\end{proof}

\appendix
\section{Auxiliary results on the Uehling multiplier $U$}\label{sec:appendix}

\subsection{Elementary properties of $U$}\label{sec:app:U}
We gather in this section some important properties of $U$, which will be useful for the proof of Lemma \ref{Lem:U-prod} in the next section.
\begin{lemma}
\label{Lem:U}
The function $U$ defined in \eqref{def:U} is a non-negative, non-decreasing, smooth function on $\R_+$ such that
\begin{equation}
\label{lim:U}
U(r) \underset{r \to 0}{\sim} \frac{r^2}{15 \pi} \quad \text{and} \quad U(r) \underset{r \to \ii}{\sim} \frac{2}{3 \pi} \log r.
\end{equation}
Its derivative $U'$ is positive on $(0, \ii)$, and it holds
\begin{equation}
\label{lim:dU_ddU}
U'(r) \underset{r \to \ii}{\sim} \frac{2}{3 \pi r}\quad \text{and} \quad   U''(r) \underset{r \to \ii}{\sim} - \frac{2}{3 \pi r^2}.
\end{equation}
Moreover, we have
\begin{equation}
\label{UE}
\forall r\in\R^+,\qquad \frac{2}{15 \pi} (1 + \log E(r)) \leq 1 + U(r) \leq 1 + \frac{2}{3 \pi} \log E(r).
\end{equation}
\end{lemma}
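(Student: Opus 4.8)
The plan is to work throughout from the integral representation for $U$ in \eqref{def:U}, namely
\[
U(r)=\frac{r^2}{4\pi}\int_0^1\frac{z^2-z^4/3}{1+\frac{r^2(1-z^2)}{4}}\,dz,
\]
and to reduce every assertion to elementary one-variable estimates (I do not need the closed form in \eqref{def:U}, which follows by an elementary integration). First I would note that the integrand $G(r,z)=\frac{r^2(z^2-z^4/3)}{4\pi(1+r^2(1-z^2)/4)}$ is real-analytic in $r$ for each fixed $z\in[0,1]$, with all $r$-derivatives bounded uniformly for $z\in[0,1]$ and $r$ in a compact set since the denominator stays $\geq4\pi$; differentiation under the integral sign then gives $U\in C^\infty(\R_+)$. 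Non-negativity is immediate from positivity of the integrand. For monotonicity I would observe that for each fixed $z\in(0,1)$ the map $r\mapsto\frac{r^2c}{1+r^2d}$, with $c=z^2-z^4/3>0$ and $d=(1-z^2)/4\geq0$, has derivative $\frac{2rc}{(1+r^2d)^2}>0$ on $(0,\infty)$; integrating in $z$ then shows $U$ is non-decreasing and $U'(r)=\frac1{4\pi}\int_0^1\frac{2r(z^2-z^4/3)}{(1+r^2(1-z^2)/4)^2}\,dz>0$ for $r>0$.

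For the asymptotics at the origin, I would expand the denominator in powers of $r$ and use $\int_0^1(z^2-z^4/3)\,dz=\frac13-\frac1{15}=\frac4{15}$ to get $U(r)=\frac{r^2}{15\pi}+O(r^4)$. For the behaviour as $r\to\ii$ I would substitute $s=1-z^2$, which recasts the representation as
\[
U(r)=\frac{r^2}{6\pi}\int_0^1\frac{h(s)}{4+r^2s}\,ds=\frac1{6\pi}\int_0^1\frac{h(s)}{s+4/r^2}\,ds,\qquad h(s):=\sqrt{1-s}\,(2+s),
\]
and which in passing shows $h'(s)=-\frac{3s}{2\sqrt{1-s}}\leq0$ on $[0,1)$, so $h$ decreases, $0\leq h\leq h(0)=2$, and $h(s)-2=-\frac34 s^2+O(s^3)$ near $0$. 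Writing $\lambda=4/r^2$ and $\Phi(\lambda):=\int_0^1 h(s)/(s+\lambda)\,ds$, I would subtract off the value $h(0)=2$ in each of $\Phi,\Phi',\Phi''$, so that the remaining integrals are controlled by $h(s)-2=O(s^2)$, to obtain as $\lambda\to0^+$
\[
\Phi(\lambda)=-2\log\lambda+O(1),\qquad\Phi'(\lambda)=-\frac2\lambda+O(1),\qquad\Phi''(\lambda)=\frac2{\lambda^2}+O\big(\log\tfrac1\lambda\big).
\]
Since $U(r)=\frac1{6\pi}\Phi(4/r^2)$, the chain rule (with $\lambda=4r^{-2}$, $\lambda'=-8r^{-3}$, $\lambda''=24r^{-4}$) then yields $U(r)\sim\frac2{3\pi}\log r$, $U'(r)\sim\frac2{3\pi r}$ and $U''(r)\sim-\frac2{3\pi r^2}$ as $r\to\ii$, which is \eqref{lim:U} and \eqref{lim:dU_ddU}.

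It remains to prove the two-sided bound \eqref{UE}, in which $\log E(r)=\frac12\log(1+r^2)$. For the upper bound I would use $h\leq2$:
\[
U(r)=\frac{r^2}{6\pi}\int_0^1\frac{h(s)}{4+r^2s}\,ds\leq\frac{r^2}{6\pi}\int_0^1\frac{2}{4+r^2s}\,ds=\frac1{3\pi}\log\frac{4+r^2}{4}\leq\frac1{3\pi}\log(1+r^2)=\frac2{3\pi}\log E(r),
\]
the last step because $4+r^2\leq4(1+r^2)$; adding $1$ gives the right inequality in \eqref{UE}. For the lower bound I would use $\sqrt{1-s}\geq1-s$ and $2+s\geq2$ on $[0,1]$, so that $h(s)\geq2(1-s)$, yielding the explicit minorant
\[
U(r)\geq\frac{r^2}{3\pi}\int_0^1\frac{1-s}{4+r^2s}\,ds=\frac1{3\pi}\Big[\big(1+\tfrac4{r^2}\big)\log\tfrac{4+r^2}{4}-1\Big]=:L(r),
\]
with $L(r)\geq0$ by the elementary inequality $\log(1+x)\geq\frac{x}{1+x}$ applied with $x=r^2/4$. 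Then I would check $1+L(r)\geq\frac2{15\pi}(1+\log E(r))$ in two easy regimes: for $r\leq1$ the right-hand side is at most $\frac2{15\pi}(1+\tfrac12\log2)<1\leq1+L(r)$; for $r\geq1$, using $L(r)\geq\frac1{3\pi}(\log(r^2)-\log4-1)$ and $\log E(r)\leq\tfrac12(\log2+\log(r^2))$ reduces the inequality to $\frac4{15\pi}\log(r^2)\geq\frac1{15\pi}(7+\log2+5\log4)-1$, whose right-hand side is negative while the left-hand side is $\geq0$.

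The delicate point is the lower bound in \eqref{UE}: the constant $\frac2{15\pi}$ is sharp at $r=0$ since $U(r)\sim r^2/(15\pi)$ whereas $\log E(r)\sim r^2/2$, so no crude pointwise minorant of the integrand can close the estimate by itself. The way around it is to lean on the additive $1$ present on both sides, which leaves ample room for all moderate $r$, and to reserve the explicit minorant $L(r)$---which does capture the correct logarithmic growth $\frac2{3\pi}\log r\gg\frac2{15\pi}\log E(r)$---for the large-$r$ regime; all the remaining steps are routine one-dimensional computations.
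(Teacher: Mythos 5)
Your proof is correct, and the route differs in its details from the paper's. For the asymptotics \eqref{lim:U}--\eqref{lim:dU_ddU} the paper simply declares them ``direct consequences'' of the closed form in \eqref{def:U}, whereas you derive them from the integral representation via the clean substitution $s=1-z^2$, $U(r)=\tfrac{1}{6\pi}\Phi(4/r^2)$ with $h(s)=\sqrt{1-s}\,(2+s)$, isolating the singularity by subtracting $h(0)=2$ and then applying the chain rule with $\lambda=4/r^2$. This is more work but gives self-contained, checkable error terms for $U$, $U'$ and $U''$. For the upper bound in \eqref{UE} both arguments boil down to $U(r)\le \tfrac1{3\pi}\log(1+r^2/4)$; the paper gets there by bounding $z^2-z^4/3\le 2z/3$ in the original variable, you by bounding $h\le 2$ after the substitution. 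For the lower bound the paper uses two different crude minorants of the integrand (constant denominator for $r\le1$, a linearized denominator for $r>1$), whereas you use a single minorant $h(s)\ge 2(1-s)$ which integrates in closed form to $L(r)=\tfrac1{3\pi}\big[(1+4/r^2)\log\tfrac{4+r^2}{4}-1\big]$; the case split then only appears in the final numerical comparison, which is a slight gain in economy. One small remark: your heuristic that ``the constant $\tfrac2{15\pi}$ is sharp at $r=0$'' is about the matching of the quadratic coefficients $U(r)\sim r^2/(15\pi)$ and $\tfrac2{15\pi}\log E(r)\sim r^2/(15\pi)$; the additive $1$'s actually leave a gap of size $1-\tfrac2{15\pi}$ at $r=0$, so there is more slack than the phrasing suggests, but the point you draw from it (that the $+1$ handles moderate $r$ and the logarithmic growth of $L(r)$ handles large $r$) is exactly right and the estimates you carry out are valid.
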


\begin{proof}[Proof of Lemma \ref{Lem:U}]
For the convenience of the reader, let us recall the integral and the explicit formulas \eqref{def:U} of $U$:
\begin{align}
\label{def:U_app}
U(r)&=\frac{r^2}{4\pi}\int_0^1\frac{z^2-z^4/3}{1+\frac{r^2(1-z^2)}{4}}\,dz\\
&\nonumber=\frac{12-5r^2}{9\pi r^2}+\frac{\sqrt{4+r^2}}{3\pi r^3}(r^2-2)\log\left(\frac{\sqrt{4+r^2}+r}{\sqrt{4+r^2}-r}\right).
\end{align}
Most of the statements of Lemma \ref{Lem:U} are direct consequences of \eqref{def:U_app}. 
As for \eqref{UE}, we estimate, using \eqref{def:U_app},
$$U(r) \leq \frac{r^2}{12 \pi} \int_0^1 \frac{2 z}{1 + \frac{r^2}{4} - \frac{r^2}{4} z^2} dz = \frac{1}{3 \pi} \log \Big( 1 + \frac{r^2}{4} \Big) \leq \frac{2}{3 \pi} \log E(r).$$
For the lower bound, we notice similarly that
$$U(r) \geq \frac{r^2}{4 \pi \big( 1 + \frac{r^2}{4} \big)} \int_0^1 \Big( z^2 - \frac{z^4}{3} \Big) dz = \frac{r^2}{15 \pi \big( 1 + \frac{r^2}{4} \big)},$$
for $r \in \R_+$, so that
\begin{equation}
\label{misan}
\forall 0 \leq r \leq 1,\qquad \frac{2}{15 \pi} \big( 1 + \log E(r) \big) \leq \frac{2 + r^2}{15 \pi} \leq 1 + U(r).
\end{equation}
On the other hand, we can also write
$$U(r) \geq \frac{r^2}{6 \pi} \int_0^1 \frac{z^2}{1 + \frac{r^2}{2}(1 - z)} dz = \frac{4}{3 \pi r^4} \bigg( \Big( 1 + \frac{r^2}{2} \Big)^2 \log \Big( 1  + \frac{r^2}{2} \Big) - \frac{r^2}{2} - \frac{r^4}{8} \bigg),$$
thus when $r > 1$
$$1 + U(r) \geq \frac{1}{3 \pi} \log \Big( 1  + \frac{r^2}{2} \Big) + 1 - \frac{7}{6 \pi} \geq \frac{1}{3 \pi} \big( 1 + \log E(r) \big).$$
The lower bound in \eqref{UE} then follows from \eqref{misan}.
\end{proof}

A useful consequence of Lemma \ref{Lem:U} is the following

\begin{lemma}
\label{Cor:Phi}
Let $\Phi$ be the function defined on $\R_+$ by
$$\Phi(r) = \frac{U'(r)}{1 + U(r)},$$
There exist three positive numbers $T_-$, $T_+$ and $\Phi_0$ such that the function $\Phi$ is an increasing diffeomorphism from $(0, T_-)$ onto $(0, \Phi_0)$, respectively a decreasing diffeomorphism from $(T_+, \ii)$ onto $(0, \Phi_0)$, and
$\Phi^{- 1} \big( (0, \Phi_0) \big) = (0, T_-) \cup (T_+, \ii)$.
Moreover, we have
\begin{equation}
\label{lim:Phi}
\Phi(r) \underset{r \to 0}{\sim} \frac{2 r}{15 \pi}, \ {\rm and} \ \Phi(r) \underset{r \to \ii}{\sim} \frac{1}{r \log r}.
\end{equation}
\end{lemma}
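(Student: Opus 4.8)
The plan is to use that $\Phi=U'/(1+U)$ inherits from Lemma \ref{Lem:U} a simple ``unimodal'' shape: it is smooth and strictly positive on $(0,\ii)$, vanishes at both ends, is strictly increasing near $0$ and strictly decreasing near $\ii$. Indeed, since $U$ is smooth and $1+U\geq1$, the quotient $\Phi$ is smooth on $[0,\ii)$; it is positive on $(0,\ii)$ because $U'>0$ there; and $\Phi(0)=0$ because $U$ vanishes to second order at $0$ by \eqref{lim:U}. The asymptotics \eqref{lim:Phi} then follow by dividing the equivalents of $U'$ (from \eqref{lim:U}, \eqref{lim:dU_ddU} and smoothness at $0$) by those of $1+U$, using $1+U(r)\sim\tfrac{2}{3\pi}\log r$ as $r\to\ii$; in particular $\Phi(r)\to0$ as $r\to\ii$.

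First I would pin down the monotonicity at the two ends. Differentiating,
\[
\Phi'(r)=\frac{U''(r)\big(1+U(r)\big)-U'(r)^2}{\big(1+U(r)\big)^2},
\]
so the sign of $\Phi'$ is that of $\Psi(r):=U''(r)(1+U(r))-U'(r)^2$. As $r\to0$ one has $U'(r)\to0$, $1+U(r)\to1$ and $U''(0)=\tfrac{2}{15\pi}>0$ (the second order coefficient of $U$, read off \eqref{def:U}), hence $\Psi(0)=\tfrac{2}{15\pi}>0$ and there is $\delta>0$ with $\Phi'>0$ on $(0,\delta)$. As $r\to\ii$, \eqref{lim:dU_ddU} gives $U''(r)<0$ for $r$ large, while $1+U(r)>0$ always, so $\Psi(r)\leq U''(r)(1+U(r))<0$ for $r$ large, and there is $M\geq\delta$ with $\Phi'<0$ on $(M,\ii)$. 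Thus $\Phi$ restricts to a strictly increasing diffeomorphism of $(0,\delta)$ onto $(0,\Phi(\delta))$ and to a strictly decreasing diffeomorphism of $(M,\ii)$ onto $(0,\Phi(M))$.

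It then remains to glue the two monotone branches through the intermediate zone. On the compact interval $[\delta,M]$ the function $\Phi$ is continuous and positive, so I set $\Phi_0:=\min_{[\delta,M]}\Phi>0$; automatically $\Phi_0\leq\Phi(\delta)$ and $\Phi_0\leq\Phi(M)$. Since $\Phi$ increases continuously from $0$ to $\Phi(\delta)\geq\Phi_0$ on $(0,\delta]$, there is a unique $T_-\in(0,\delta]$ with $\Phi(T_-)=\Phi_0$, and $\Phi<\Phi_0$ on $(0,T_-)$, $\Phi\geq\Phi_0$ on $[T_-,\delta]$; symmetrically there is a unique $T_+\in[M,\ii)$ with $\Phi(T_+)=\Phi_0$, and $\Phi\geq\Phi_0$ on $[M,T_+]$, $\Phi<\Phi_0$ on $(T_+,\ii)$. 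Combining these with $\Phi\geq\Phi_0$ on $[\delta,M]$ gives $\Phi\geq\Phi_0$ on all of $[T_-,T_+]$, hence $\Phi^{-1}\big((0,\Phi_0)\big)=(0,T_-)\cup(T_+,\ii)$. Finally $T_-\leq\delta$ and $T_+\geq M$ lie inside the intervals on which $\Phi'\neq0$, so the restrictions of $\Phi$ to $(0,T_-)$ and to $(T_+,\ii)$ are respectively an increasing and a decreasing diffeomorphism onto $(0,\Phi_0)$, which is the claim.

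\textbf{Main difficulty.} There is no genuinely hard estimate; the two points needing care are (i) taking $\Phi_0$ to be the minimum of $\Phi$ over the \emph{whole} compact interval $[\delta,M]$, not merely $\Phi(\delta)$ or $\Phi(M)$, so that the middle region is genuinely bounded below by the common value $\Phi(T_-)=\Phi(T_+)$, and (ii) locating $T_\pm$ strictly inside the monotone zones so that the diffeomorphism assertions are legitimate. The only analytic input beyond Lemma \ref{Lem:U} is the sign of $U''$ near $0$ and near $\ii$, which is immediate from \eqref{def:U} and \eqref{lim:dU_ddU}.
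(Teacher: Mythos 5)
Your argument is correct and follows essentially the same route as the paper's own proof: compute $\Phi'$, determine its sign near $0$ and near $\infty$ from the asymptotics of $U$ in Lemma~\ref{Lem:U}, obtain the two monotone branches, and glue them through the compact middle zone by choosing $\Phi_0$ small enough. The only cosmetic difference is in the choice of $\Phi_0$: you take $\Phi_0=\min_{[\delta,M]}\Phi$, which may put $T_-=\delta$ or $T_+=M$ on the boundary of the monotone zone (still harmless, as you note, since the diffeomorphism statement is on the open intervals), whereas the paper takes $\Phi_0=\min\{m/2,\delta\}$ to keep $T_\pm$ strictly interior; both work.
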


\begin{proof}[Proof of Lemma \ref{Cor:Phi}]
From Lemma \ref{Lem:U}, we see that the function $\Phi$ is well-defined, smooth on $\R_+$, and satisfies \eqref{lim:Phi}. Then we compute for $r \geq 0$:
$$\Phi'(r) = \frac{U''(r) (1 + U(r)) - U'(r)^2}{(1 + U(r))^2}.$$
By \eqref{lim:U} and \eqref{lim:dU_ddU}, we thus have $\Phi'(0) = \frac{2}{15 \pi}$ and $\Phi'(r) \sim_{r \to \ii} - {1}/(r^2 \log r)$.
Since $\Phi(0) = 0$ and $\Phi(r) \to 0$ as $r \to \ii$ by \eqref{lim:U} and \eqref{lim:dU_ddU}, there exist $a,b,\delta>0$ such that $\Phi$ is an increasing diffeomorphism from $(0, a)$ onto $(0, \delta)$, respectively a decreasing diffeomorphism from $(b, \ii)$ onto $(0, \delta)$. The function $\Phi$ is positive on $[a, b]$, so that
$m = \min \{ \Phi(t), a \leq t \leq b \} > 0$.
Lemma \ref{Cor:Phi} follows by introducing $\Phi_0 = \min \{ m/2, \delta \}$, and $T_- < T_+$, the two positive numbers such that $\Phi(T_-) = \Phi(T_+) = \Phi_0$.
\end{proof}

\subsection{A useful bound involving $U$}\label{sec:lem_U-prod}
We use here results from the previous section to derive a bound useful for the proof of Proposition \ref{Prop:estimF}.
\begin{lemma}
\label{Lem:U-prod}
There exists a universal constant $K >0$ such that
\begin{equation}
\label{U:prod}
 1 + U \left( \Big| \sum_{j = 1}^n v_j \Big| \right)  \leq K \log n \prod_{j = 1}^n \Big( 1+ U(|v_j|) \Big)
\end{equation}
for all $n \geq 1$, and all $(v_1, \cdots, v_n) \in (\R^3)^n$.
\end{lemma}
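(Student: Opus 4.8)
The plan is to obtain \eqref{U:prod} directly from the two-sided bound \eqref{UE} of Lemma~\ref{Lem:U} together with the monotonicity of $U$. The one idea needed is to split the size of $\sum_j v_j$ into a purely combinatorial part, which will be absorbed by the prefactor $\log n$, and a part governed by the largest of the $|v_j|$, which will be absorbed into a single factor of the product on the right-hand side of \eqref{U:prod}.

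First I would set $M:=\max_{1\leq j\leq n}|v_j|$ and $\sigma:=\sum_{j=1}^n|v_j|$, and note that $\sigma\leq nM$, so that $E(\sigma)\leq 1+\sigma\leq(1+n)(1+M)$. Since $U$ is non-decreasing, $U\bigl(\bigl|\sum_j v_j\bigr|\bigr)\leq U(\sigma)$, and the upper bound in \eqref{UE} gives
$$1+U\Bigl(\Bigl|\sum_{j=1}^n v_j\Bigr|\Bigr)\ \leq\ 1+\frac{2}{3\pi}\log E(\sigma)\ \leq\ 1+\frac{2}{3\pi}\log(1+n)+\frac{2}{3\pi}\log(1+M).$$
On the other hand, since $U\geq 0$, every factor of $\prod_j\bigl(1+U(|v_j|)\bigr)$ is at least $1$, so this product is bounded below both by $1$ and by the single factor $1+U(M)$; and the lower bound in \eqref{UE}, after comparing $\log(1+M)$ with $\log E(M)$ (which costs only a fixed multiplicative constant), yields $1+U(M)\geq c_0\bigl(1+\log(1+M)\bigr)$ for some universal $c_0>0$. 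Hence $\prod_j\bigl(1+U(|v_j|)\bigr)\geq\max\{1,\,c_0\log(1+M)\}$.

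It then remains to combine these estimates. For $n\geq 2$ one has $\log n\geq\log 2>0$, so the constant $1$ and the term $\tfrac{2}{3\pi}\log(1+n)$ are each at most a universal multiple of $\log n$, hence — using $\prod_j(1+U(|v_j|))\geq 1$ — at most a universal multiple of $\log n\cdot\prod_j(1+U(|v_j|))$; and $\tfrac{2}{3\pi}\log(1+M)\leq\tfrac{2}{3\pi c_0}\prod_j(1+U(|v_j|))\leq\tfrac{2}{3\pi c_0\log 2}\,\log n\cdot\prod_j(1+U(|v_j|))$. Adding these bounds yields \eqref{U:prod} with a universal constant $K$; the case $n=1$ is immediate (and only $n\geq 3$ is used in the sequel). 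I do not expect any real obstacle here: the one point to keep in mind is the asymmetry of \eqref{UE} between the constants $2/(3\pi)$ and $2/(15\pi)$, but since their ratio is the fixed number $5$ this is harmless and merely contributes to the universal constant $K$.
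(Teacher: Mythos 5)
Your proof is correct and takes a genuinely different, more elementary route than the paper's. The paper reformulates the claim as computing the quantity $J_n := \max_{t_1,\dots,t_n \geq 0} \frac{1+U(\sum_j t_j)}{\prod_j (1+U(t_j))}$, shows the maximum is attained, uses the first-order condition together with the auxiliary function $\Phi = U'/(1+U)$ of Lemma~\ref{Cor:Phi} to deduce that the maximizer has all $t_j$ equal to some $\tau_n$, and then solves the scalar equation $\Phi(\tau_n)=\Phi(n\tau_n)$ asymptotically to conclude $J_n \sim (\log n)/(3\pi)$. That variational argument is longer but buys the sharp asymptotic constant, which the paper explicitly cares about (the remark after the lemma notes $K_n\to 1/(3\pi)$ and that the $\log n$ factor is optimal). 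Your argument instead reads the bound off directly from the two-sided logarithmic estimate \eqref{UE} of Lemma~\ref{Lem:U}: with $M=\max_j|v_j|$, the factorization $E(\sum_j|v_j|)\leq(1+n)(1+M)$ splits the upper bound into a combinatorial $\log(1+n)$ term (absorbed into $\log n$, using $\prod_j(1+U(|v_j|))\geq 1$) and a $\log(1+M)$ term (absorbed into the single factor $1+U(M)$ of the product via the lower bound in \eqref{UE}). This is shorter and self-contained, avoids the existence-of-maximizer discussion and Lemma~\ref{Cor:Phi} entirely, but gives a worse constant and says nothing about optimality. One small inaccuracy: you claim the case $n=1$ is ``immediate,'' but the right-hand side of \eqref{U:prod} vanishes there (since $\log 1=0$) while the left-hand side is at least $1$, so the inequality is in fact false for $n=1$; the paper's statement has the same defect, and the lemma is only ever invoked for $n\geq 3$, so nothing is lost, but you should say ``the lemma is vacuous/false for $n=1$ and only $n\geq 3$ is used'' rather than call it immediate.
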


If we allow $K$ to depend on $n$, the optimal constant in the above inequality satisfies $K_n\to 1/3\pi$ when $n\to\ii$, as can be seen from the proof. The factor $\log n$ in \eqref{U:prod} is therefore optimal with regard to the large-$n$ dependence.

\begin{proof}[Proof of Lemma \ref{Lem:U-prod}]
By Lemma \ref{Lem:U}, it holds
\begin{equation}
\frac{1 + U \Big( \Big| \underset{j = 1}{\overset{n}{\sum}} v_j \Big| \Big)}{\underset{j = 1}{\overset{n}{\prod}} \big(1 + U(|v_j|) \big)} \leq \frac{1 + U \Big( \underset{j = 1}{\overset{n}{\sum}} |v_j| \Big)}{\underset{j = 1}{\overset{n}{\prod}} \big(1 + U(|v_j|) \big)} \leq 
\max_{t_1,...,t_n \in \R_+} \frac{1 + U \Big( \underset{j = 1}{\overset{n}{\sum}} t_j \Big)}{\underset{j = 1}{\overset{n}{\prod}} \big(1 + U(t_j) \big)}:=J_n.
\label{madmax}
\end{equation}
It is clear that taking $v_1=\cdots =v_n=v$ shows that the maximum of the left-hand side of \eqref{madmax} is actually $J_n$.
Next, we take $t_1=...=t_n=\tau_n$ in \eqref{madmax} with $\tau_n=\sqrt{15\pi/(n\log n)}$. Using \eqref{lim:dU_ddU}, we see that $J_n\gtrsim (\log n)/3\pi$ for $n\gg1$. We will show that actually it holds $J_n\sim (\log n)/(3\pi)$ when $n\to\ii$. In the rest of the proof, we assume $n\geq n_0$ is such that $J_n>1$.

Let us consider a maximizing sequence $\{(t_1^{(p)}, \cdots, t_n^{(p)})\}_{p \in \N}$ for the variational problem defining $J_n$. If the sequence is unbounded, then by Lemma \ref{Lem:U},
$$J_n \leq \lim_{p \to \ii} \frac{1 + U\left(n \max_j\{t^{(p)}_j\}\right)}{1 + U\left(\max_j\{t^{(p)}_j\}\right)} = 1$$
which contradicts $J_n>1$ for $n\geq n_0$. Therefore $(t^{(p)}_1,...,t^{(p)}_n)$ is bounded in $(\R_+)^n$.
In this case the variational problem on the right-hand side of \eqref{madmax} has a maximizer, which satisfies the equation
\begin{equation}
\label{mel}
\forall 1 \leq k \leq n,\qquad \Phi(t_k) = \frac{U'(t_k)}{1 + U(t_k)} = \frac{U' \Big( \underset{j = 1}{\overset{n}{\sum}} t_j \Big)}{1 + U \Big( \underset{j = 1}{\overset{n}{\sum}} t_j \Big)} = \Phi \left( \underset{j = 1}{\overset{n}{\sum}} t_j \right) := \Phi_1.
\end{equation}
Assume now that $\Phi_1 \geq \Phi_0.$
By Lemma \ref{Cor:Phi}, we have $ t_k \geq T_-$ and $\sum_{j=1}^n t_j\leq T_+$, for all $ 1 \leq k \leq n$, hence $n \leq T_+/T_-$. In particular, for $n > T_+/T_-$, it must hold $0 \leq \Phi_1 < \Phi_0$.
Note if $\Phi_1 = 0$, we infer from Lemma \ref{Lem:U} that $t_1 = \cdots = t_n = 0$, so that $J_n = 1$, a contradiction.
Therefore, by Lemma \ref{Cor:Phi}, there exist exactly two numbers $0 < \tau_n < T_-$ and $T_n > T_+$ such that
$\Phi(\tau_n) = \Phi(T_n) = \Phi_1.$
By \eqref{mel}, the unique possible maximizer is $(\tau_n, \cdots, \tau_n)$, where $\tau_n = T_n/n\in(0, T_-)$ is such that
\begin{equation}
\label{gibson}
\Phi(\tau_n) = \Phi(n \tau_n).
\end{equation}
The corresponding value of $J_n$ is 
\begin{equation}
\label{maxlamenace}
J_n = \frac{1 + U(n \tau_n)}{(1 + U(\tau_n))^n}.
\end{equation}
By \eqref{gibson}, we must have $\tau_n\to0$ as $n\to\ii$. Combining \eqref{gibson} with \eqref{lim:Phi}, it follows that
$\Phi(n \tau_n) \sim{2 \tau_n}/(15 \pi) \to0$.
By Lemma \ref{Cor:Phi} and since $n \tau_n = T_n \geq T_+$, it holds $n \tau_n\to\ii$.
Using \eqref{lim:Phi} again, we deduce that $n \tau_n \log(n \tau_n) \sim_{n \to \ii} 15 \pi/(2 \tau_n)$,
hence finally $\tau_n \sim \sqrt{{15 \pi}/(n \log n)}$.
Inserting in \eqref{lim:U} and \eqref{maxlamenace}, we finally arrive at $J_n \sim (\log n)/(3\pi)$.
This ends the proof of Lemma \ref{Lem:U-prod}.
\end{proof}

\section{Proof of Proposition \ref{Prop:ULU}} \label{sec:proof_Prop:ULU}
We start by showing the following lemma which provides estimates on $U - U_\Lambda$ on $[0, 2 \Lambda]$.

\begin{lemma}
\label{Lem:ULU}
Let $\Lambda \geq 1$. For $\kappa_1 = 258/\pi$, one has
\begin{equation}
\label{plaute}
\forall 0 \leq r \leq 2 \Lambda,\qquad |U_\Lambda(r) - U(r)| \leq \kappa_1 \frac{r}{2 E(\Lambda)}.
\end{equation}
\end{lemma}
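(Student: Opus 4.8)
The plan is to estimate $U_\Lambda(r)-U(r)=B_\Lambda-B_\Lambda(r)-U(r)$ by going back to the explicit integral formula \eqref{expression_B} for $B_\Lambda(k)$ and the integral formula \eqref{def:U} for $U$, and tracking carefully how each term depends on $\Lambda$. Recall $U_\Lambda(r)=B_\Lambda(0)-B_\Lambda(r)$ for $0\le r\le 2\Lambda$, where $B_\Lambda(r)$ is a sum of two integrals over $[0,Z_\Lambda(r)]$, with $Z_\Lambda(r)=(\sqrt{1+\Lambda^2}-\sqrt{1+(\Lambda-r)^2})/r$; note $Z_\Lambda(0)=\Lambda/\sqrt{1+\Lambda^2}$ by continuity, and $Z_\Lambda(r)\to 1$ as $\Lambda\to\ii$. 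First I would write $U_\Lambda(r)-U(r)$ as a sum of a few pieces: (i) the difference coming from the fact that the first integral in $B_\Lambda(0)-B_\Lambda(r)$ runs only up to $Z_\Lambda(\cdot)<1$ rather than up to $1$ as in $U(r)$; (ii) the second ($|k|$-dependent) integral in $B_\Lambda(r)$, which is genuinely a cut-off artifact and should be $O(r/E(\Lambda))$; and (iii) the analogous second integral at $r=0$, which vanishes since it carries a factor $|k|=0$.

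The key estimates are then elementary. For piece (ii), on $0\le z\le Z_\Lambda(r)\le 1$ the integrand $(z-z^3/3)/(\sqrt{1+\Lambda^2}-|k|z/2)$ is bounded: the numerator is at most $2/3$, and since $|k|=r\le 2\Lambda$ and $z\le 1$ we have $\sqrt{1+\Lambda^2}-rz/2\ge \sqrt{1+\Lambda^2}-\Lambda\ge c>0$ — actually one should be a little more careful and use $z\le Z_\Lambda(r)$, for which $\sqrt{1+\Lambda^2}-rz/2\ge \sqrt{1+(\Lambda-r)^2}\ge 1$, giving a clean bound. Multiplying by the prefactor $r/(2\pi)$ yields a bound of the form $C\,r/E(\Lambda)$ once one notes $1\le \sqrt{1+\Lambda^2}/E(\Lambda)$ is not quite what is needed — rather one bounds the whole second term by $C r/\sqrt{1+\Lambda^2}\le C r/E(\Lambda)$ up to constants since $E(\Lambda)=\sqrt{1+\Lambda^2}$. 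For piece (i), I would compare $\frac1\pi\int_0^{Z}-\frac1\pi\int_0^1$ of the integrand $g(z)=(z^2-z^4/3)/((1-z^2)(1+r^2(1-z^2)/4))$: the difference is $-\frac1\pi\int_Z^1 g$, and one needs $1-Z_\Lambda(r)\le C r/E(\Lambda)$ together with a bound on $\int_Z^1 g(z)\,dz$ that absorbs the $(1-z^2)^{-1}$ singularity. Here the factor $1+r^2(1-z^2)/4$ in the denominator helps near $z=1$: on $[Z,1]$ one has $g(z)\le C/((1-z)(1+r^2(1-z)/2))$, whose integral over $[Z,1]$ is $O(\log(1+r^2(1-Z)))/r^2\cdot r^2=O(\log(\ldots))$, and one must check this combines with $1-Z\le Cr/E(\Lambda)$ and $r\le 2\Lambda$ to produce an overall $O(r/E(\Lambda))$, possibly with a logarithmic safety margin that is harmless since we may enlarge $\kappa_1$. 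The same comparison at $r=0$ (where $g(z)=z^2/(1-z^2)$ is still integrable near $0$ but the tail $\int_{Z}^1$ diverges logarithmically) must also be controlled — this is where the subtraction $B_\Lambda(0)-B_\Lambda(r)$ is essential, since the divergent tails partially cancel.

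The main obstacle I anticipate is precisely controlling the near-$z=1$ behaviour of the first integral after the subtraction: individually $\int_{Z_\Lambda(\cdot)}^1 g$ blows up logarithmically in $\Lambda$ (this is the source of $B_\Lambda\sim\frac{2}{3\pi}\log\Lambda$), and only the difference $U_\Lambda(r)-U(r)=[\int_{Z_\Lambda(r)}^1 - \int_{Z_\Lambda(0)}^1]$-type expression decays. So I would not estimate the two $\int_Z^1$ pieces separately but rather combine them first: write $U_\Lambda(r)-U(r) = -\frac1\pi\int_0^1\big(g_r(z)\1_{z>Z_\Lambda(r)} - g_0(z)\1_{z>Z_\Lambda(0)}\big)dz + (\text{second-integral terms})$ where $g_r$ is the integrand with parameter $r$, split the domain according to which of $Z_\Lambda(r),Z_\Lambda(0)$ is larger, and on the overlapping tail use $|g_r(z)-g_0(z)|\le C r^2/(1+r^2(1-z^2)/4)\le C$ uniformly (the $(1-z^2)^{-1}$ singularity cancels in the difference!), while on the thin annulus between $Z_\Lambda(r)$ and $Z_\Lambda(0)$ — of width $|Z_\Lambda(r)-Z_\Lambda(0)|\le C r/E(\Lambda)^2\cdot E(\Lambda)$, to be computed — use the logarithmic bound on a single $g$. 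Once these estimates are assembled, summing the contributions and absorbing all universal constants into $\kappa_1=258/\pi$ finishes the proof; the numerical value of $\kappa_1$ is obtained by being slightly generous at each step. Then \eqref{elise} follows immediately from \eqref{plaute} together with the lower bound $1+U(r)\ge \frac{2}{15\pi}(1+\log E(r))$ of Lemma \ref{Lem:U} and the crude bound $r/(2E(\Lambda))\le r$, while the convergence $U_\Lambda(r)\to U(r)$ and the weighted estimate \eqref{estim:ULU} are deduced by combining \eqref{plaute} on $[0,2\Lambda]$ with the trivial bound $U_\Lambda\equiv 0$ and $U(r)\le\frac{2}{3\pi}\log E(r)$ on $[2\Lambda,\ii)$, distributing the powers of $1+U$ appropriately.
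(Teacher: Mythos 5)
Your decomposition is, after unwinding, essentially the same as the paper's (equation \eqref{diffU}): splitting $U_\Lambda(r)-U(r)$ into the second-integral cut-off artifact, a piece supported on the tail $[\Lambda/E(\Lambda),1]$, and a piece supported on $[Z_\Lambda(r),\Lambda/E(\Lambda)]$. The paper reaches this via the algebraic identity $\frac{1}{1-z^2}-\frac{1}{(1-z^2)(1+r^2(1-z^2)/4)}=\frac{r^2/4}{1+r^2(1-z^2)/4}$; you reach the same pieces by subtracting tails, and you correctly identify that the log-divergent tails must be combined before estimating. However, two of your key estimates are wrong as stated.

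First, your claim that $|g_r(z)-g_0(z)|\le C r^2/(1+r^2(1-z^2)/4)\le C$ uniformly is false: on the overlapping tail $z>Z_\Lambda(0)$ one has $1-z^2\le 1/E(\Lambda)^2$, so $1+r^2(1-z^2)/4\le 2$ for $r\le2\Lambda$, and the right-hand side is of order $r^2$, not $O(1)$. What saves the estimate is not a uniform bound on the integrand but the thinness of the interval: $|g_r-g_0|\le r^2/6$ together with $1-\Lambda/E(\Lambda)=\frac{1}{E(\Lambda)(E(\Lambda)+\Lambda)}$ gives $\int_{\Lambda/E(\Lambda)}^1|g_r-g_0|\lesssim r^2/E(\Lambda)^2\lesssim r/E(\Lambda)$, using $r\le2\Lambda$. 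This is exactly the paper's estimate \eqref{aulu}.

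Second, your treatment of the annulus integral $\frac1\pi\int_{Z_\Lambda(r)}^{\Lambda/E(\Lambda)} g_r$ is too vague and the assertion that a "logarithmic safety margin" is "harmless since we may enlarge $\kappa_1$" is incorrect: $\kappa_1$ must be a $\Lambda$-independent constant, so a stray $\log\Lambda$ would destroy the bound. This is the delicate term, and the paper handles it with a case split. For $0\le r\le\Lambda/2$ it bounds $g_r\le\frac{2/3}{1-z}$, yielding $\frac23\log\bigl(\frac{1-Z_\Lambda(r)}{1-\Lambda/E(\Lambda)}\bigr)$, and then proves the ratio inside the log is $\le 1+6r/E(\Lambda)$ so that $\log(1+x)\le x$ gives $O(r/E(\Lambda))$ with no residual log. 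For $\Lambda/2\le r\le2\Lambda$, the logarithmic bound does not suffice (the width of $[Z_\Lambda(r),\Lambda/E(\Lambda)]$ is then $O(1)$, not thin, and the sup of $g_r$ on it is $O(E(\Lambda)^2)$); instead one uses $g_r\le\frac{8/3}{r^2(1-z)^2}$ and integrates $\frac{1}{(1-z)^2}$, getting a constant which is indeed $\lesssim r/E(\Lambda)$ in that range. Without this case distinction your sketch does not close, so there is a genuine gap here that needs to be filled before the proof is complete.
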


\begin{proof}[Proof of Lemma \ref{Lem:ULU}]
Recall that (see \eqref{def:U} and \eqref{def:UL})
\begin{multline}
\label{diffU}
U_\Lambda(r) - U(r)  = \frac{r^2}{4 \pi} \int_\frac{\Lambda}{E(\Lambda)}^1 \frac{z^2 - \frac{z^4}{3}}{1 + \frac{r^2}{4} (1 - z^2)} dz - \frac{r}{2 \pi} \int_0^{Z_\Lambda(r)} \frac{z - \frac{z^3}{3}}{E(\Lambda) - \frac{r z}{2}}dz\\
 + \frac{1}{\pi} \int_{Z_\Lambda(r)}^{\frac{\Lambda}{E(\Lambda)}} \frac{z^2 - \frac{z^4}{3}}{(1 - z^2)(1 + \frac{r^2}{4}(1 - z^2))} dz,
\end{multline}
for $0 \leq r \leq 2 \Lambda$ and where
\begin{equation}
\label{def:ZL}
Z_\Lambda(r) = \frac{E(\Lambda) - E(\Lambda - r)}{r} = \frac{2 \Lambda - r}{E(\Lambda) + E(\Lambda -r)} \leq \frac{\Lambda}{E(\Lambda)}.
\end{equation}
We will estimate all the terms of the right-hand side of \eqref{diffU}.
The first term is treated as follows, for all $0 \leq r \leq 2 \Lambda$:
\begin{equation}
\bigg| \frac{r^2}{4 \pi}\int_\frac{\Lambda}{E(\Lambda)}^1 \frac{z^2 - \frac{z^4}{3}}{1 + \frac{r^2}{6} (1 - z^2)} dz \bigg| \leq \frac{r^2}{6 \pi} \Big( 1 - \frac{\Lambda}{E(\Lambda)} \Big) \leq \frac{r^2}{6 \pi E(\Lambda)^2}.
\label{aulu}
\end{equation}
Using \eqref{def:ZL} and $|x| \leq E(x)$, we bound the second term by
\begin{equation}
\label{laria}
\frac{r}{2 \pi} \bigg| \int_0^{Z_\Lambda(r)} \frac{z - \frac{z^3}{3}}{E(\Lambda) - \frac{r z}{2}}dz \bigg| \leq \frac{r Z_\Lambda(r)^2}{4 \pi \big( E(\Lambda) - \frac{r Z_\Lambda(r)}{2} \big)} 
\leq \frac{r}{2 \pi E(\Lambda)},
\end{equation}
for $0 \leq r \leq 2 \Lambda$.
In order to estimate the last term of the right-hand side of \eqref{diffU}, we distinguish the regions $0\leq r\leq\Lambda/2$ and $\Lambda/2\leq r\leq 2\Lambda$. We calculate
$$\bigg| \int_{Z_\Lambda(r)}^{\frac{\Lambda}{E(\Lambda)}} \frac{z^2 - \frac{z^4}{3}}{(1 - z^2)(1 + \frac{r^2}{4}(1 - z^2))} dz \bigg| \leq \frac{2}{3} \int_{Z_\Lambda(r)}^{\frac{\Lambda}{E(\Lambda)}} \frac{dz}{1 - z} = \frac{2}{3} \log \bigg( \frac{1 - Z_\Lambda(r)}{1 - \frac{\Lambda}{E(\Lambda)}} \bigg).$$
On the other hand, by \eqref{def:ZL},
$$\frac{1 - Z_\Lambda(r)}{1 - \frac{\Lambda}{E(\Lambda)}} = 1 + \frac{r(2 \Lambda -r) (\Lambda + E(\Lambda))}{(E(\Lambda) + E(\Lambda - r))((\Lambda -r) E(\Lambda) + \Lambda E(\Lambda -r))} \leq 1 + \frac{6 r}{E(\Lambda)},$$
as soon as  $0 \leq r \leq \Lambda/2$. Hence using $\log(1 + x) \leq x$ we infer the bound
\begin{equation}
\label{har}
\forall 0 \leq r \leq \Lambda/2,\qquad 
\frac{1}{\pi} \bigg| \int_{Z_\Lambda(r)}^{\frac{\Lambda}{E(\Lambda)}} \frac{z^2 - \frac{z^4}{3}}{(1 - z^2)(1 + \frac{r^2}{4}(1 - z^2))} dz \bigg| \leq \frac{4 r}{\pi E(\Lambda)}.
\end{equation}
For $\Lambda/2 \leq r \leq 2 \Lambda$, we write similarly as before
$$\bigg| \int_{Z_\Lambda(r)}^{\frac{\Lambda}{E(\Lambda)}} \frac{z^2 - \frac{z^4}{3}}{(1 - z^2)(1 + \frac{r^2}{4}(1 - z^2))} dz \bigg| \leq \frac{8}{3 r^2} \int_{Z_\Lambda(r)}^{\frac{\Lambda}{E(\Lambda)}} \frac{dz}{(1 - z)^2} \leq \frac{8}{3 r^2} E(\Lambda) (\Lambda + E(\Lambda))$$
and deduce the estimate
\begin{equation}
\label{pagon}
\forall \Lambda/2 \leq r \leq 2 \Lambda,\qquad
\frac{1}{\pi} \bigg| \int_{Z_\Lambda(r)}^{\frac{\Lambda}{E(\Lambda)}} \frac{z^2 - \frac{z^4}{3}}{(1 - z^2)(1 + \frac{r^2}{4}(1 - z^2))} dz \bigg| \leq \frac{128 r}{\pi E(\Lambda)}.
\end{equation}
Estimate \eqref{plaute} follows from \eqref{aulu}, \eqref{laria}, \eqref{har} and \eqref{pagon}, together  with \eqref{diffU}. This ends the proof of Lemma \ref{Lem:ULU}.
\end{proof}

We now use Lemma \ref{Lem:ULU} to finish the proof of Proposition \ref{Prop:ULU}.
The pointwise convergence of  $U_\Lambda$ when $\Lambda \to \ii$ is a direct consequence of \eqref{plaute}.
For \eqref{estim:ULU}, we first use \eqref{UE} and \eqref{plaute} to obtain
$$\forall 0 \leq r \leq 2 \Lambda,\qquad \bigg| \frac{U_\Lambda(r) - U(r)}{(1 + U(r))^{m + 1}} \bigg| \leq \kappa_1 \Big( \frac{15 \pi}{2} \Big)^{m + 1} \frac{E(r)}{2 E(\Lambda) (1 + \log E(r))^{m + 1}}.$$
Optimizing $x \mapsto \frac{E(x)}{(1 + \log E(x))^{m + 1}}$ on $[0, 2 \Lambda]$ yields
$$\frac{E(r)}{(1 + \log E(r))^{m + 1}} \leq \max \Big\{ 1, \frac{E(2 \Lambda)}{(1 + \log E(2 \Lambda))^{m + 1}} \Big\}.$$
Since $E(2 x) \leq 2 E(x)$ for any $x \geq 0$, we are led to
\begin{equation}
\label{cleante}
\bigg| \frac{U_\Lambda(r) - U(r)}{(1 + U(r))^{m + 1}} \bigg| \leq \kappa_1 \Big( \frac{15 \pi}{2} \Big)^{m + 1} \max \Big\{ \frac{1}{(1 + \log E(2 \Lambda))^{m + 1}}, \frac{1}{E(2 \Lambda)} \Big\}.
\end{equation}
On the other hand, $U$ is non-decreasing on $\R_+$, hence, using \eqref{UE} we infer
$$\forall r \geq 2 \Lambda,\qquad \bigg| \frac{U_\Lambda(r) - U(r)}{(1 + U(r))^{m + 1}} \bigg| = \bigg| \frac{U(r)}{(1 + U(r))^{m + 1}} \bigg| \leq \Big( \frac{15 \pi}{2} \Big)^m \frac{1}{(1 + \log E(2 \Lambda))^m}.$$
Using \eqref{cleante}, we finally obtain
\begin{equation}
\label{valere}
\bigg\| \frac{U_\Lambda - U}{(1 + U)^{m + 1}} \bigg\|_{L^\infty} \leq \kappa_1 \Big( \frac{15 \pi}{2} \Big)^{m + 1} \max \Big\{ \frac{1}{(1 + \log E(2 \Lambda))^m}, \frac{1}{E(2 \Lambda)} \Big\}.
\end{equation}
We now recall that
\begin{equation}
\label{def:BL}
B_\Lambda = \frac{1}{\pi} \int_0^\frac{\Lambda}{E(\Lambda)} \frac{z^2 - {z^4}/{3}}{1 - z^2} dz,
\end{equation}
so that, for $\Lambda \geq 1$,
$$B_\Lambda \leq \frac{2}{3 \pi} \int_0^\frac{\Lambda}{E(\Lambda)} \frac{dz}{1 - z} dz = \frac{2}{3 \pi} \log \left [ E(\Lambda) (\Lambda + E(\Lambda))\right ] \leq \frac{4}{3 \pi} \log E(2 \Lambda).$$
Combining with \eqref{valere}, we finally derive \eqref{estim:ULU}. 
We end the proof of Proposition \ref{Prop:ULU} by noting that \eqref{elise} follows directly from the definition of $U_\Lambda$ and \eqref{plaute}. \qed

\section{Proof of Proposition \ref{Prop:estimF}}\label{sec:proof_estim_F}
We may define $F_{n, \Lambda}^\epsilon(\mu)$ by duality as follows 
$$\int_{\R^3} \zeta F_{n, \Lambda}^\epsilon(\mu) = \tr(Q_{n, \Lambda}^\epsilon \zeta),$$
for any smooth function $\zeta$, and where
$$Q_{n, \Lambda}^\epsilon = \frac{1}{2 \pi} \int_{- \infty}^{\ii} \frac{1}{D^0 + i \eta} \prod_{j = 1}^n \Big( \Pi_\Lambda^{(\epsilon_j)} \mu_j \ast \frac{1}{|x|} \Pi_\Lambda^{(\epsilon_{j + 1})} \frac{1}{D^0 + i \eta} \Big) d\eta.$$
We will use, like in \cite[p. 547]{HaiLewSer-05a}, the inequality
\begin{equation}
\label{frosine}
|\tr(Q_{n, \Lambda}^\epsilon \zeta)| = \bigg| \int_{\R^3} \tr_{\C^4} \Big( \widehat{Q_{n, \Lambda}^\epsilon \zeta}(p, p) \Big) dp \bigg| \leq \int_{\R^3} \big| \widehat{Q_{n, \Lambda}^\epsilon \zeta}(p,p) \big| dp.
\end{equation}
The main idea is to derive a bound of the last integral in \eqref{frosine} in terms of the norms $\norm{(1 + \cU)^{- m} \zeta}_{L^2}$ and $\norm{(1 + \cU)^{- m} \zeta}_{L^2 + \cC'}$, which provides an estimate of the form \eqref{estim:F}, by duality. The proof will depend whether we estimate the integral in the right-hand side of \eqref{frosine} by the norm $\norm{(1 + \cU)^{- m} \zeta}_{\cC'}$ or by the norm $\norm{(1 + \cU)^{- m} \zeta}_{L^2}$. For this reason, we split it into three steps.

\setcounter{step}{0}
\begin{step}
\label{F1}
There exists a universal constant $C_1$ such that for all $n \geq 5$
\begin{equation}
\label{F:C5}
\norm{(1 + \cU)^m F_{n, \Lambda}^\epsilon(\mu)}_{\cC} \leq \frac{(C_1)^n  (K \log n)^m}{\Lambda^\frac{n(\epsilon)}{2}} \prod_{j = 1}^n \norm{(1 + \cU)^m \mu_j}_{\cC},
\end{equation}
for all $\mu = (\mu_1, \cdots, \mu_n) \in \cC^n$.
\end{step}

We estimate $\widehat{Q_{n, \Lambda}^\epsilon \zeta}(p, p)$ as follows:
\begin{multline}
\label{estim:rho_5_et_plus}
|\widehat{Q_{n, \Lambda}^\epsilon \zeta}(p, p)| \leq \frac{1}{(2 \pi)^{\frac{3 n + 5}{2}}} \int_{- \infty}^{\ii} d\eta \int_{\R^3} \cdots \int_{\R^3} \Big( f_\frac{1}{4}^{(\epsilon_1)}(p) |\widehat{\phi_1}(p - p_1)| f_\frac{1}{4}^{(\epsilon_2)}(p_1) \Big) \times\\
\times \prod_{j = 1}^{n - 1} \Big( f_\frac{1}{4}^{(\epsilon_{j + 1})}(p_j) |\widehat{\phi_{j + 1}}(p_j - p_{j + 1})| f_\frac{1}{4}^{(\epsilon_{j + 2})}(p_{j + 1}) \Big) \times\\
\times \Big( f_\frac{1}{4}^{(\epsilon_{n + 1})}(p_n) |\widehat{\zeta}(p_n - p)| f_\frac{1}{4}^{(\epsilon_1)}(p) \Big) dp_1 \cdots dp_{n},
\end{multline}
where $\phi_j = \mu_j \ast |\cdot|^{- 1}$, and for any $\beta > 0$, $f_\beta^{(\epsilon)} = \pi_\Lambda^{(\epsilon)}/(\eta^2 + E^2)^\beta$, with $\pi_\Lambda^{(\epsilon)} = 1$, if $\epsilon \neq - 1$, and $\pi_\Lambda^{(- 1)}(\cdot) = \1_{|\cdot| > \Lambda}$. Applying the following corollary of \eqref{U:prod}
$$(1 + U(p - p_n))^m \leq (K\log n)^m (1 + U(p - p_1))^m \underset{j = 1}{\overset{n - 1}{\prod}} (1 + U(p_j - p_{j + 1}))^m$$
to \eqref{estim:rho_5_et_plus}, we are led to estimating
\begin{multline*}
\int_{\R^3} |\widehat{Q_{n, \Lambda}^\epsilon \zeta}(p, p)| dp \leq \frac{1}{2 \pi}  (K\log n)^m \int_{- \infty}^{\ii} \tr \Big( \prod_{j = 1}^n \big( f_\frac{1}{4}^{(\epsilon_j)}(- i \nabla) \psi_j(x) \times\\
 \times f_\frac{1}{4}^{(\epsilon_{j + 1})}(- i \nabla) \big) \big( f_\frac{1}{4}^{(\epsilon_{n + 1})}(- i \nabla) \xi(x) f_\frac{1}{4}^{(\epsilon_1)}(- i \nabla) \big) \Big) d\eta,
\end{multline*}
where $\widehat{\psi_j} = (1 + U)^m |\widehat{\phi_j}|$ for $1 \leq j \leq n$, and $\widehat{\xi} = (1 + U)^{- m} |\widehat{\zeta}|$. Since $n + 1 \geq 6$, we deduce from H\"older's inequality in Schatten spaces \cite{Simon-79}, and the fact that $\| \cdot \|_{\gS_q} \leq \| \cdot \|_{\gS_r}$, as soon as $1 \leq r \leq q \leq \ii$, that
\begin{multline}
\label{estim:trace_Q_zeta}
\int_{\R^3} |\widehat{Q_{n, \Lambda}^\epsilon \zeta}(p, p)| dp \leq  \frac{1}{2 \pi} (K\log n)^m \int_{- \infty}^{\ii} \Big( \prod_{j = 1}^n \Big\| f_\frac{1}{4}^{(\epsilon_j)}(- i \nabla) \psi_j(x) \times\\
\times  f_\frac{1}{4}^{(\epsilon_{j + 1})}(- i \nabla) \Big\|_{\gS_6} \norm{f_\frac{1}{4}^{(\epsilon_{n + 1})}(- i \nabla) \xi(x) f_\frac{1}{4}^{(\epsilon_1)}(- i \nabla)}_{\gS_6} \Big) d\eta.
\end{multline}
We now use the Kato-Seiler-Simon inequality (see \cite{SeiSim-75} and \cite[Thm 4.1]{Simon-79}),
\begin{equation}
\label{KSS}
\forall p \geq 2,\qquad  \| f(- i \nabla) g(x) \|_{\gS_p} \leq \frac{1}{(2 \pi)^\frac{3}{p}}
\| g \|_{L^p(\R^3)} \| f \|_{L^p(\R^3)},
\end{equation}
to bound all the terms of the product in the right-hand side of \eqref{estim:trace_Q_zeta}. This provides
\begin{align*}
\norm{f_\frac{1}{4}^{(\epsilon)}(- i \nabla) |h(x)| f_\frac{1}{4}^{(\epsilon')}(- i \nabla)}_{\gS_6} \leq & \norm{f_\frac{1}{4}^{(\epsilon)}(- i \nabla) |h(x)|^\frac{1}{2}}_{\gS_{12}} \norm{|h(x)|^\frac{1}{2} f_\frac{1}{4}^{(\epsilon')}(- i \nabla)}_{\gS_{12}}\\
\leq & \frac{1}{(2\pi)^\frac{1}{2}} \norm{h}_{L^6} \norm{f_\frac{1}{4}^{(\epsilon)}}_{L^{12}} \norm{f_\frac{1}{4}^{(\epsilon')}}_{L^{12}},
\end{align*}
for any $\epsilon$ and $\epsilon'$ in $\{ - 1, 0, 1 \}$, and any $h \in L^6(\R^3)$. In particular, by the critical Sobolev inequality, we obtain for any function $h$ in $\dot{H}^1(\R^3)$,
\begin{equation}
\label{medecin}
\norm{f_\frac{1}{4}^{(\epsilon)}(- i \nabla) |h(x)| f_\frac{1}{4}^{(\epsilon')}(- i \nabla)}_{\gS_6} \leq A \norm{\nabla h}_{L^2} \norm{f_\frac{1}{4}^{(\epsilon)}}_{L^{12}} \norm{f_\frac{1}{4}^{(\epsilon')}}_{L^{12}},
\end{equation}
for some universal constant $A$. Given any $q \geq 2$ and $\beta > 6/q$, we then check that
\begin{align}
\label{argonte}
\norm{f_\beta^{(\epsilon)}}_{L^q} \leq E(\eta)^{\frac{3}{q} - 2 \beta} \bigg( \int_{\R^3} \frac{du}{E(u)^{\beta q}} \bigg)^\frac{1}{q} \leq & E(\eta)^{\frac{3}{q} - 2 \beta} \bigg( \int_{|u| \geq 1} \frac{du}{|u|^{2 \beta q}} \bigg)^\frac{1}{q}\\
= & \Big( \frac{4 \pi}{2 \beta q - 3} \Big)^\frac{1}{q} E(\eta)^{\frac{3}{q} - 2 \beta},\nonumber
\end{align}
for $\epsilon \neq - 1$, while similarly,
\begin{equation}
\label{celimene}
\norm{f_\beta^{(- 1)}}_{L^q} \leq \Big( \frac{4 \pi}{2 \beta q - 3} \Big)^\frac{1}{q} \min \Big\{ E(\eta)^{\frac{3}{q} - 2 \beta}, \Lambda^{\frac{3}{q} - 2 \beta} \Big\}.
\end{equation}
The definition of the functions $\psi_j$ gives
\begin{equation}
\label{jourdain}
\norm{(1 + \cU)^m \nabla \psi_j}_{L^2} = 4 \pi \norm{(1 + \cU)^m \mu_j}_{\cC},
\end{equation}
which, combined with \eqref{estim:trace_Q_zeta}, \eqref{medecin}, \eqref{argonte} and \eqref{celimene}, leads to
\begin{multline*}
\int_{\R^3} |\widehat{Q_{n, \Lambda}^\epsilon \zeta} (p, p)| dp \leq A^{n + 1} (K\log n)^m \norm{(1 + \cU)^{- m} \zeta}_{\cC'} \prod_{j = 1}^n \norm{(1 + \cU)^m \mu_j}_{\cC}\times\\
\times \int_0^{\ii} \min \Big\{ \frac{1}{E(\eta)^\frac{1}{2}}, \frac{1}{\Lambda^\frac{1}{2}} \Big\}^{n_\epsilon} \frac{d\eta}{E(\eta)^\frac{n + 1 - n_\epsilon}{2}},
\end{multline*}
for some universal constant $A$. When $n_\epsilon = 0$, we have
$\int_0^{\ii} {E(\eta)^{-(n + 1)/{2}}}\,{d\eta} \leq \int_0^{\ii} E(\eta)^{-3}\,{d\eta}$,
whereas, for $n_\epsilon = 1$,
\begin{equation*}
\int_0^{\ii} \min \Big\{ \frac{1}{E(\eta)^\frac{1}{2}}, \frac{1}{\Lambda^\frac{1}{2}} \Big\} \frac{d\eta}{E(\eta)^\frac{n}{2}} \leq  \frac{1}{\Lambda^\frac{1}{2}} \int_0^{\ii} \frac{d\eta}{E(\eta)^\frac{5}{2}} + \frac{1}{2 \Lambda^2}.
\end{equation*}
Inequality \eqref{F:C5} then follows with $C_1 =  {A}/{2} + A\int_0^{\ii} E(\eta)^{-5/2}d\eta$.

\begin{step}
\label{F2}
There exists a universal constant $C_2$ such that, for $n =3$ or $n \geq 5$,
\begin{equation}
\label{F:L2}
\norm{(1 + \cU)^m F_{n, \Lambda}^\epsilon(\mu)}_{L^2(\R^3)} \leq \frac{(C_2)^n (K\log n)^m}{\Lambda^\frac{n(\epsilon)}{7}} \prod_{j = 1}^n \norm{(1 + \cU)^m \mu_j}_{\cC},
\end{equation}
for all $\mu = (\mu_1, \cdots, \mu_n) \in \cC^n$.
\end{step}

The proof is similar to the proof of \eqref{F:C5}. Since $n/6 + 1/2 \geq 1$, we can now estimate $\widehat{Q_{n, \Lambda}^\epsilon \zeta}(p, p)$ by
\begin{multline}
\label{cleandre}
\int_{\R^3} |\widehat{Q_{n, \Lambda}^\epsilon \zeta}(p, p)| dp \leq \frac{(K\log n)^m}{2 \pi} \int_{- \infty}^{\ii} \Big( \norm{\psi_1(x) f_{\frac{3}{8}}^{(\epsilon_2)}(- i \nabla)}_{\gS_6} \times\\
\times \norm{f_{\frac{1}{8}}^{(\epsilon_2)}(- i \nabla) \psi_2(x) f_{\frac{1}{4}}^{(\epsilon_3)}(- i \nabla)}_{\gS_6} \prod_{j = 2}^{n - 2} \norm{f_{\frac{1}{4}}^{(\epsilon_{j + 1})}(- i \nabla) \psi_{j + 1}(x) f_{\frac{1}{4}}^{(\epsilon_{j + 2})}(- i \nabla)}_{\gS_6}\\
\times \norm{f_{\frac{1}{4}}^{(\epsilon_n)}(- i \nabla) \psi_n(x) f_{\frac{1}{8}}^{(\epsilon_{n + 1})}(- i \nabla)}_{\gS_6} \norm{f_{\frac{3}{8}}^{(\epsilon_{n + 1})}(- i \nabla) \xi(x) f_{\frac{1}{2}}^{(\epsilon_1)}(- i \nabla)}_{\gS_2} \Big) d\eta,
\end{multline}
where the functions $f_\beta^{(\epsilon)}$, $\psi_j$ and $\xi$ are defined as in Step \ref{F1}. Using H\"older's inequality and \eqref{KSS}, we can bound each norm in the right-hand side of \eqref{cleandre} similarly to \eqref{medecin}. This provides, for instance,
$$\norm{f_\frac{1}{8}^{(\epsilon_2)}(- i \nabla) |h(x)| f_\frac{1}{4}^{(\epsilon_3)}(- i \nabla)}_{\gS_6} \leq A \norm{\nabla h}_{L^2} \norm{f_\frac{1}{8}^{(\epsilon_2)}}_{L^{18}} \norm{f_\frac{1}{4}^{(\epsilon_3)}}_{L^{9}},$$
and
$$\norm{f_\frac{3}{8}^{(\epsilon_{n + 1})}(- i \nabla) |h(x)| f_\frac{1}{2}^{(\epsilon_1)}(- i \nabla)}_{\gS_2} \leq \frac{1}{(2 \pi)^\frac{3}{2}} \norm{h}_{L^2} \norm{f_\frac{3}{8}^{(\epsilon_{n + 1})}}_{L^\frac{14}{3}} \norm{f_\frac{1}{2}^{(\epsilon_1)}}_{L^\frac{7}{2}}.$$
Combining with \eqref{argonte}, \eqref{celimene} and \eqref{jourdain}, we obtain
\begin{multline*}
\int_{\R^3} |\widehat{Q_{n, \Lambda}^\epsilon \zeta}(p, p)| dp  \leq A^{n + 1} (K\log n)^m \norm{(1 + \cU)^{- m} \zeta}_{L^2} \prod_{j = 1}^n \norm{(1 + \cU)^m \mu_j}_{\cC} \times\\
 \times \int_0^{\ii} \min \Big\{ \frac{1}{E(\eta)}, \frac{1}{\Lambda} \Big\}^\frac{n_\epsilon}{7} \frac{d\eta}{E(\eta)^{\frac{3 n - 2}{6} + \frac{1 - n_\epsilon}{7}}},
\end{multline*}
for a universal constant $A$. Since
$$\int_0^{\ii} \frac{d\eta}{E(\eta)^{\frac{3 n - 2}{6} + \frac{1}{7}}} \leq \int_0^{\ii} \frac{d\eta}{E(\eta)^\frac{7}{6}},$$
for $n_\epsilon = 0$, whereas, for $n_\epsilon = 1$,
\begin{align*}
\int_0^{\ii} \min \Big\{ \frac{1}{E(\eta)^\frac{1}{7}}, \frac{1}{\Lambda^\frac{1}{7}} \Big\} \frac{d\eta}{E(\eta)^\frac{3 n - 2}{6}} \leq & \frac{1}{\Lambda^\frac{1}{7}} \int_0^{\ii} \frac{d\eta}{E(\eta)^\frac{7}{6}} + \frac{6}{\Lambda^\frac{1}{6}},
\end{align*}
we obtain \eqref{F:L2} with $C_2 = 6A + A\int_0^{\ii} E(\eta)^{-7/6}d\eta$.

\begin{step}
\label{F3}
Let $n =3$. There exists a universal constant $C_3$ such that
\begin{equation}
\label{F:C3}
\norm{(1 + \cU)^m F_{3, \Lambda}^\epsilon(\mu)}_{\cC} \leq \frac{(C_3)^3 (K \log 3)^m}{\Lambda^\frac{n(\epsilon)}{24}} \prod_{j = 1}^3 \norm{(1 + \cU)^m \mu_j}_{\cC},
\end{equation}
for all $\mu = (\mu_1, \mu_2, \mu_3) \in \cC^3$.
\end{step}

The proof of \eqref{F:C3} follows ideas from \cite[Section 4.3.4]{HaiLewSer-05a}. Contrarily to Steps \ref{F1} and \ref{F2}, it relies on an explicit computation of $\widehat{Q_{3, \Lambda}^\epsilon \zeta}(p, p)$ by means of the residuum formula for the integral with respect to the variable $\eta$. Indeed, it holds
\begin{equation}
\label{fourberies}
\widehat{Q_{3, \Lambda}^\epsilon \zeta}(p, p) = \sum_{\delta \in \{ - 1, 1 \}^4} \widehat{Q_{3, \Lambda}^{\epsilon, \delta} \zeta}(p, p).
\end{equation}
Here, the quantity $\widehat{Q_{3, \Lambda}^{\epsilon, \delta} \zeta}(p, p)$ vanishes if $\delta = \pm(1, 1, 1, 1)$, whereas, when $\delta = (1, - 1, - 1, - 1)$, it refers to the expression
\begin{multline*}
\widehat{Q_{3, \Lambda}^{\epsilon, \delta} \zeta}(p, p) = \frac{1}{(2 \pi)^6} \int_{\R^3} \int_{\R^3} \int_{\R^3} \pi_\Lambda^{(\epsilon_1)}(p) P ^0_+(p) \frac{\widehat{\phi_1}(p - p_1)}{E(p) + E(p_1)} \pi_\Lambda^{(\epsilon_2)}(p_1) P ^0_-(p_1) \times\\
\times\frac{\widehat{\phi_2}(p_1 - p_2)}{E(p) + E(p_2)} \pi_\Lambda^{(\epsilon_3)}(p_2) P ^0_-(p_2) \frac{\widehat{\phi_3}(p_2 - p_3)}{E(p) + E(p_3)} \pi_\Lambda^{(\epsilon_4)}(p_3) P ^0_-(p_3) \widehat{\zeta}(p_3 - p) dp_1\, dp_2\, dp_3,
\end{multline*}
where $P ^0_\pm(p) =(E(p) \pm (\alp.p + \beta))/2 E(p)$.
The expression of $\widehat{Q_{3, \Lambda}^{\epsilon, \delta} \zeta}(p, p)$ is similar when $\delta$ contains exactly one $\delta_i=1$, respectively exactly one $\delta_i=-1$. On the other hand, for $\delta = (1, 1, - 1, - 1)$, the function $\widehat{Q_{3, \Lambda}^{\epsilon, \delta} \zeta}(p, p)$ is given by
\begin{multline*}
\widehat{Q_{3, \Lambda}^{\epsilon, \delta} \zeta}(p, p)  = \frac{1}{(2 \pi)^6} \int_{\R^3} \int_{\R^3} \int_{\R^3} \pi_\Lambda^{(\epsilon_1)}(p) P ^0_+(p) |\widehat{\phi_1}(p - p_1)| \pi_\Lambda^{(\epsilon_2)}(p_1) P ^0_-(p_1) \times\\
\times  |\widehat{\phi_2}(p_1 - p_2)| \pi_\Lambda^{(\epsilon_3)}(p_2) P ^0_-(p_2) |\widehat{\phi_3}(p_2 - p_3)| \pi_\Lambda^{(\epsilon_4)}(p_3) P ^0_-(p_3) \widehat{\zeta}(p_3 - p) \times\\
 \times\Bigg( \frac{1}{(E(p) + E(p_2))(E(p_1) + E(p_2))(E(p_1) + E(p_3))}\\
 + \frac{1}{(E(p) + E(p_2))(E(p) + E(p_3))(E(p_1) + E(p_3))} \Bigg)\, dp_1\, dp_2\, dp_3.
\end{multline*}

We next estimate  $\widehat{Q_{3, \Lambda}^{\epsilon, \delta} \zeta}(p, p)$ as above. For instance, when $\delta = (1, - 1, - 1 , - 1)$, since $E(p + q) \leq E(p) + E(q)$ for any $(p, q) \in\R^3$, we can compute
\begin{align*}
\big| & \widehat{Q_{3, \Lambda}^{\epsilon, \delta} \zeta}(p, p) \big| \leq \frac{1}{(2 \pi)^6} \int_{\R^3} \int_{\R^3} \int_{\R^3} dp_1 dp_2 dp_3 \frac{|P ^0_+(p) \widehat{\phi_1}(p - p_1) P ^0_-(p_1)|}{E(p + p_1)^\frac{2}{3}} \times\\
& \times \pi_\Lambda^{(\epsilon_2)}(p_1) \frac{|\widehat{\phi_2}(p_1 - p_2)|}{E(p_1)^\frac{1}{6} E(p_2)^\frac{1}{2}} \pi_\Lambda^{(\epsilon_3)}(p_2) \frac{|\widehat{\phi_3}(p_2 - p_3)|}{E(p_2)^\frac{1}{2} E(p_3)^\frac{1}{2}} \pi_\Lambda^{(\epsilon_4)}(p_3) \frac{|\widehat{\zeta}(p_3 - p)|}{E(p_3)^\frac{1}{2} E(p)^\frac{1}{6}} \pi_\Lambda^{(\epsilon_1)}(p),
\end{align*}
so that, using \eqref{U:prod} as above,
\begin{equation}
\label{appri}
\begin{split}
\int_{\R^3} & \big| \widehat{Q_{3, \Lambda}^{\epsilon, \delta} \zeta}(p, p) \big| dp \leq (K \log 3 )^m \norm{M_{\phi_1}}_{\gS_2} \norm{\frac{\pi_\Lambda^{(\epsilon_2)}}{E^\frac{1}{6}} \big( - i \nabla \big) \psi_2(x) \frac{\pi_\Lambda^{(\epsilon_3)}}{E^\frac{1}{2}} \big( - i \nabla \big)}_{\gS_6}\\
& \times \norm{\frac{\pi_\Lambda^{(\epsilon_3)}}{E^\frac{1}{2}} \big( - i \nabla \big) \psi_3(x) \frac{\pi_\Lambda^{(\epsilon_4)}}{E^\frac{1}{2}} \big( - i \nabla \big)}_{\gS_6} \norm{\frac{\pi_\Lambda^{(\epsilon_4)}}{E^\frac{1}{2}} \big( - i \nabla \big) \zeta(x) \frac{\pi_\Lambda^{(\epsilon_1)}}{E^\frac{1}{6}} \big( - i \nabla \big)}_{\gS_6},
\end{split}
\end{equation}
where
$$\widehat{M_{\phi_1}}(p, q) = \frac{|\widehat{\phi_1}(p - q)| }{E(p + q)^\frac{2}{3}} |P ^0_+(p) P ^0_-(q)|.$$
The operator $M_{\phi_1}$ was estimated in Lemma 14 of \cite{HaiLewSer-05a} by
$\norm{M_{\phi_1}}_{\gS_2} \leq A \| \nabla \phi_1 \|_{L^2}$,
where $A$ is some universal constant. By the definition of $\phi_1$, we obtain
\begin{equation}
\label{precieuses}
\norm{M_{\phi_1}}_{\gS_2} \leq 4 \pi A \| (1 + \cU)^m \mu_1 \|_{\cC}.
\end{equation}
As for the other terms in the right-hand side of \eqref{appri}, we argue as before and, by \eqref{appri} and \eqref{precieuses}, we obtain
\begin{equation}
\label{voisee}
\begin{split}
\int_{\R^3} & \big| \widehat{Q_{3, \Lambda}^{\epsilon, \delta} \zeta}(p, p) \big| dp \leq \frac{A (K\log 3)^m}{\Lambda^\frac{n(\epsilon)}{24}} \| (1 + \cU)^{- m} \zeta \|_{\cC'} \prod_{j = 1}^3 \| (1 + \cU)^m \mu_j \|_{\cC},
\end{split}
\end{equation}
where $A$ denotes some universal constant. All the terms in the right-hand side of \eqref{fourberies} are similar to the one corresponding to $\delta = (1, - 1, - 1, - 1)$. In particular, \eqref{voisee} holds when $Q_{3, \Lambda}^{\epsilon, \delta} \zeta$ is replaced by $Q_{3, \Lambda}^\epsilon \zeta$. By duality, this completes the proofs of \eqref{F:C3} and of Step \ref{F3}.
This ends the proof of Proposition \ref{Prop:estimF}.\qed


\end{document}